\newtheorem{theorem}{Theorem}
\newtheorem{lemma}[theorem]{Lemma}
\newtheorem{corollary}[theorem]{Corollary}
\newtheorem{example}[theorem]{Example}
\newtheorem{definition}[theorem]{Definition}
\newtheorem{claim}[theorem]{Claim}
\DeclareMathOperator\var{Var}
\newcommand{\problemFont}[1]{\textsc{#1}}
\newcommand{\affine}{\protect\ensuremath \mathrm{AFF}}
\newcommand{\ar}{\mathrm{ar}}
\newcommand{\pol}{\protect\ensuremath\text{Pol}}
\newcommand{\inv}{\protect\ensuremath\text{Inv}}\newcommand{\cvred}{\mathbin{\leq^{\mathrm{CV}}}}
\newcommand{\lvred}{\mathbin{\leq^{\mathrm{LV}}}}
\newcommand{\cveq}{\mathbin{=^{\mathrm{CV}}}}
\newcommand{\m}[1]{\mathrm{Mod}(#1)}
\newcommand{\pro}{\mathrm{Proj}}
\newcommand{\cons}{\protect\ensuremath\text{cons}}
\newcommand{\minor}{\mathrm{Min}}
\newcommand{\PMABD}{\protect\ensuremath\problemFont{P-ABD}}
\newcommand{\MABD}{\protect\ensuremath\problemFont{ABD}}
\newcommand{\ABD}{\protect\ensuremath\problemFont{(P-)ABD}}
\newcommand{\SAT}{\protect\ensuremath\problemFont{SAT}}
\newcommand{\simpleSAT}{\protect\ensuremath\problemFont{SimpleSAT}}
\newcommand{\KB}{\protect\ensuremath\text{KB}}
\renewcommand{\P}{\protect\ensuremath\text{P}}
\newcommand{\NP}{\protect\ensuremath\text{NP}}
\newcommand{\coNP}{\protect\ensuremath\text{coNP}}
\newcommand{\SigPtwo}{\protect\ensuremath\Sigma^\text{P}_2}
\renewcommand{\setminus}{\protect\ensuremath-}
\newcommand{\XSAT}{\protect\ensuremath \mathrm{XSAT}}
\newcommand{\Lits}[1]{\protect\ensuremath{{\mathrm{Lits}(#1)}}}
\title{A Fine-Grained Complexity View on Propositional Abduction - Algorithms and Lower Bounds}
\author{
		Victor Lagerkvist,\textsuperscript{\rm 1}
		Mohamed Maizia,\textsuperscript{\rm 1,2}
		Johannes Schmidt\textsuperscript{\rm 2}
\\    %Afiliations
{\footnotesize
    \textsuperscript{\rm 1}
Department of Computer and Information Science, Link\"oping University, Sweden
}
		\\
{\footnotesize
    \textsuperscript{\rm 2} Department of Computer Science and Informatics, J\"onk\"oping University, Sweden
}
		\\
{\footnotesize
		victor.lagerkvist@liu.se, mohamed.maizia@ju.se, johannes.schmidt@ju.se
}
}
\begin{document}

\maketitle

\begin{abstract}
    The {\em Boolean satisfiability problem} (SAT) is a well-known example of monotonic reasoning, of intense practical interest due to fast solvers, complemented by rigorous fine-grained complexity results.
    %due to the success of SAT solvers, and which simultaneously has rigorous theoretical support from fine-grained complexity theory.
    However, for {\em non-monotonic} reasoning, e.g., {\em abductive} reasoning,  comparably little is known outside classic complexity theory. In this paper we take a first step of bridging the gap between monotonic and non-monotonic reasoning by analyzing the complexity of intractable abduction problems under the seemingly overlooked but natural parameter $n$: the number of variables in the knowledge base. 
    %This natural complexity parameter is historically overlooked, and to obtain as general results as possible we concentrate on the two main variants of propositional abduction in the literature, parameterized by a {\em constraint language} $\Gamma$. The general question is then for which $\Gamma$ the abduction problem(s) can be solved faster than exhaustive search.
    We obtain several positive results for $\Sigma^P_2$- as well as NP- and coNP-complete fragments, which implies the first example of beating exhaustive search for a $\Sigma^P_2$-complete problem (to the best of our knowledge). We complement this with lower bounds and for many fragments rule out improvements under the {\em (strong) exponential-time hypothesis}.
\end{abstract}

\section{Introduction}\label{sec:introduction}

%NP-hard problems frequently occur in many real-world situations due to their rich modeling power. Even though only superpolynomial algorithms are currently known, there is still a large practical incentive to find faster algorithms, and significant improvements over brute force search can in general be achieved. 

The {\em Boolean satisfiability} problem is a well-known NP-complete problem. Due to the rapid advance of SAT solvers, many combinatorial problems are today  solved by reducing to SAT, which can then be solved with off-the-shelf  solvers. SAT fundamentally encodes a form of {\em monotonic} reasoning in the sense that conclusions remain valid regardless if new information is added. However, the real world is {\em non-monotonic}, meaning that one should be able to retract a statement if new data is added which violates the previous conclusion. One of the best known examples of non-monotonic reasoning is {\em abductive} reasoning where we are interested in finding an explanation, vis-à-vis a hypothesis, of some observed manifestation. Abduction has many practical applications, e.g.,
scientific discovery \cite{InoueSIKN09},
network security \cite{AlbertiCGLMT05},
%logic programming \cite{LinY02},
computational biology \cite{RayAKD06},
medical diagnosis \cite{ObeidOMO19},
knowledge base updates \cite{SakamaI03},
and explainability issues in machine learning and decision support systems \cite{IgnatievNM19,DBLP:conf/ijcai/Ignatiev20,BrardaTG21,RacharakT21}. This  may be especially poignant in  forthcoming decades due to the continued emergence of AI in new and surprising applications, which need to be made GDPR compliant~\cite{SovranoVP20} and explainable. The incitement for solving abduction fast, even when it is classically intractable, thus seems highly practically motivated.
%\victor{Maybe move the abduction example here?}

Can non-monotonic reasoning be performed as efficiently as monotonic reasoning, or are there fundamental differences between the two? The classical complexity of abduction (and many other forms of non-monotonic reasoning) is well-known~\cite{eiter1995complexity,DBLP:journals/eatcs/ThomasV10} and suggests a difference: SAT is NP-complete, while most forms of non-monotonic reasoning, including {\em propositional} abduction, are generally $\Sigma^P_2$-complete. However, modern complexity theory typically tells a different story, where classical hardness results do not imply that the problems are hopelessly intractable, but rather that different algorithmic schemes should be applied. For SAT, there is a healthy amount of theoretical research complementing the advances of SAT solvers, and $k$-SAT for every $k$ can be solved substantially faster than $2^n$ (where $n$ is the number of variables) via the resolution-based {\em PPSZ} algorithm~\cite{PPSZ05JACM}. There is a complementary theory of lower bounds where the central conjecture is that 3-SAT is not solvable in $2^{o(n)}$ time ({\em exponential-time hypothesis} (ETH)~\cite{impagliazzo2001}) and the {\em strong exponential-time hypothesis} (SETH) which implies that SAT  with unrestricted clause length (CNF-SAT) cannot be solved in $c^n$ time for any $c < 2$.
%and the applicability of e.g.\ bounded width resolution is well understood. 

%In this paper we will investigate superpolynomial algorithms for non-monotonic reasoning problems, with a particular focus on the propositional {\em abduction} problem.
%Despite seeing many applications areas, such as
%scientific discovery \cite{InoueSIKN09},
%network security \cite{AlbertiCGLMT05},
%logic programming \cite{LinY02},
%computational biology \cite{RayAKD06},
%medical diagnosis \cite{ObeidOMO19},
%knowledge base updates \cite{SakamaI03},
%and explainability in machine learning \cite{IgnatievNM19},

In contrast, the precise exponential time complexity of abduction is currently a blind spot, and {\em no} improved algorithms are known for the intractable cases. In this paper we thus issue a systematic attack on the complexity of abduction with a particular focus on the natural complexity parameter $n$, the number of variables in the knowledge base, sometimes supplemented by $|H|$ or $|M|$, the number elements in the hypothesis $H$ or manifestation $M$.
To obtain general results we primarily consider the setup where we are given a set of relations $\Gamma$ (a {\em constraint language}) where the knowledge base of an instance is provided by a $\Gamma$-formula. We write $\MABD(\Gamma)$ for this problem and additionally also consider the variant where an explanation only consists of positive literals ($\PMABD(\Gamma)$) since these two variants exhibit interesting differences.
%and other forms of non-monotonic reasoning problems, by constructing faster algorithms and simultaneously investigate how close to optimal our algorithms are by proving new lower bounds under the {\em exponential-time hypothesis}.
The classical complexity of abduction is either in P, NP-complete, coNP-complete, or $\Sigma^P_2$-complete \cite{NoZa2008}, and for which intractable $\Gamma$ is it possible to beat exhaustive search? According to Cygan et al., tools to precisely analyze the exponential time complexity of NP-complete problems are in its infancy~\cite{CyganDLMNOPSW16}. For problems at higher levels of the polynomial hierarchy the situation is even more dire. Are algorithmic approaches for problems in NP still usable? Are the tools to obtain lower bounds still usable? 
Why are no sharp upper bounds known for problems in non-monotonic reasoning, and are these problems fundamentally different from e.g. satisfiability problems?
%, and non-monotonic reasoning problems in general, is well understood. The predominant method for characterizing problems in this framework is to use a {\em constraint} based framework, and analyze the complexity when only certain types of constraints are allowed.
%where Their computational complexity is well studied when it comes to restricting the expressive power. This lowers complexity and identifies tractable (polynomial time solvable) variants.
%Unfortunately, tractable variants are often heavily reduced in expressiveness, and posses significantly reduced practical relevance.
%We consider it a conundrum that the exact algorithmic complexity of the intractable, yet practically relevant, variants is completely unknown, and has received little attention.
%Beyond completeness for a specific level of the polynomial hierarchy, {\em no} precise upper and lower bounds of the exponential time complexity are known.
%

%
%As we will outline below we dispose of a promising tool box to address these questions which can lead the way to both improved practical solvers, and an increased understanding of the relative complexity for {\em all} intractable abduction problems.

We successfully answer many of these questions with  novel algorithmic contributions. First, in Section~\ref{section:enum} we show why enumerating all possible subsets of the hypothesis gives a bound $2^n$ for $\MABD$ and the (surprisingly bad) $3^n$ bound for $\PMABD$. Hence, any notion of improvement should be measured against $2^{n}$ for $\MABD$ and $3^n$ for $\PMABD$. Generally improving the factor $2^{|H|}$ (which may equal $2^{n}$) seems difficult but we do manage this for languages $\Gamma$ where all possible models of the knowledge base can be enumerated in $c^n$ time, for some $c \leq 2$, which we call {\em sparsely enumerable} languages. We succeed with this for both $\MABD(\Gamma)$ (Section~\ref{sec:mabd_faster}) and $\PMABD(\Gamma)$ (Section~\ref{sec:pabd_faster}), and while the algorithms for the two different cases share ideas, the details differ in intricate ways. It should be remarked that both algorithms solve the substantially more general problem of enumerating {\em all} (maximal) explanations which may open up further, e.g., probabilistic, applications for abduction. The enumeration algorithms in addition to exponential time also need exponential memory, but we manage to improve the naive $3^n$ bound for $\PMABD(\Gamma)$ to $2^n$ with only polynomial memory. 
The sparsely enumerable property is strong: it fails even for $2$-SAT and it is a priori not clear if it is ever true for intractable languages. Despite this, we manage to (in Section~\ref{sec:sparse}) describe three properties implying sparse enumerability. This captures relations definable by equations $x_1 + \ldots + x_k = q \pmod p$ ($\textsc{Equations}^k$). The problem(s) $\ABD(\textsc{Equations})$ is $\Sigma^P_2$-complete and is, to the best of our knowledge, the first example of beating exhaustive search for a $\Sigma^P_2$-complete problem (under $n$). This yields improved algorithms for $\Sigma^P_2$-complete $\PMABD(\textsc{XSAT})$ ({\em exact satisfiability}) and NP-complete $\PMABD(\affine^{(\leq k)})$ (arity bounded equations over GF(2)).

%The sparsely enumerable property is strong and it is a priori not clear whether it is actually true for any non-trivial languages. It is, for example, not true for $k$-SAT, or even for the single binary relation $\{(0,0), (1,1)\}$. Could it even be the case that it is never true whenever $\PMABD(\Gamma)$ or $\MABD(\Gamma)$ is NP- or coNP-hard? We  study this in Section~\ref{sec:sparse} and provide three abstract properties implying is sparsely enumerability. This captures, for example, the case when each relation in $\Gamma$ can be defined by an equation of the form $x_1 + \ldots + x_k = q \pmod p$ for fixed $p,q,k$ (the class $\textsc{Equations}$). Both $\PMABD(\textsc{Equations})$ and $\MABD(\textsc{Equations})$ are $\Sigma^P_2$-complete and our algorithmic contribution is, to the best of our knowledge, the first example of a $\Sigma^P_2$-complete problem solvable substantially faster than exhaustive search, with a fine-grained complexity parameter such as $n$. As a special case this yields improved algorithms for the $\Sigma^P_2$-complete problem $\PMABD(\textsc{XSAT})$, where $\textsc{XSAT}$ for every $k$ contains the relation defined by $x_1 + \ldots + x_k = 1$, as well as the NP-complete $\PMABD(\Gamma)$ problem when $\Gamma$ is a finite set of relations definable via systems of equations modulo 2 ($\affine^{(\leq k)}$).

\begin{table}[h!]
{\small
\centering
\begin{tabular}{@{}lccc@{}}
\toprule
(Type) Class             & Classical complexity & Improved  \\ \midrule
\textsc{Equations}$^k$ ($k \geq 1$)           & $\Sigma^P_2$-C    & Yes     \\
\textsc{XSAT}           & $\Sigma^P_2$-C    & $O^*(2^{\frac{n}{2}})$           \\
(P) \textsc{AFF}$^{k}$ ($k \geq 1$)          & NP-C    & Yes \\
(M) $k$-CNF$^+$ ($k \geq 1$)           & NP-C    & Yes    \\
(P) $k$-CNF$^- \cup \text{IMP}$ ($k \geq 1$)           & NP-C    & Yes \\ 
(P) finite $1$-valid            & coNP-C    & Yes   \\
\bottomrule
\end{tabular}
}
\caption{Upper bounds for $\PMABD$ and $\MABD$.}
\label{tab:upper_bounds}
\end{table}

\begin{table}[h!]
{\small 
\centering
\begin{tabular}{@{}lccc@{}}
\toprule
(Type) Class             & Assumption & Bound  \\ \midrule
(M) 2-CNF$^+$          & ETH    & $(\frac{|H|}{|M|})^{o(|M|)}$ \\
(P) 2-CNF$^+ \cup \text{IMP}$          & ETH    & $(\frac{|H|}{|M|})^{o(|M|)}$ \\
(M) $k$-CNF ($k \geq 4$)           & SETH    & $2^n$    \\
(P) $k$-CNF ($k \geq 4$)           & SETH    & $1.4142^n$    \\
CNF$^- \cup \text{IMP}$, Horn            & SETH   & $1.2599^n$   \\
(M) CNF$^+$, DualHorn            & SETH   & $1.2599^n$   \\
\bottomrule
\end{tabular}
}
\caption{Lower  bounds for $\PMABD$ and $\MABD$.}
\label{tab:lower_bounds}
\end{table}

In Section~\ref{section:alg} we consider more restricted types of abduction problems with a particular focus on $\ABD(k\text{-CNF}^+)$ where $k$-CNF$^+$ contains all positive clauses of arity $k$. Here, the problems are only NP-complete, in which case circumventing the $2^{|H|}$ barrier appears easier. For these, and similar, problems, we construct an improved algorithm based on a novel reduction to a problem 
$\simpleSAT^p$ which can be solved by branching. For coNP, only $\PMABD(\Gamma)$ becomes relevant, and we (in Section~\ref{sec:conp})  prove a simple but general improvement whenever a finite $\Gamma$ is invariant under a constant Boolean operation.

In Section~\ref{sec:lower}, we prove lower bounds under (S)ETH for missing intractable cases. Let $\text{IMP} = \{\{(0,0),$ $(0,1), (1,1)\}\}$. Under the ETH, we first prove that $\MABD(\text{2-CNF}^+)$ and $\MABD(\text{2-CNF}^-\cup\text{IMP})$ cannot be solved in time $(\frac{|H|}{|M|})^{o(|M|)}$ under ETH, which asymptotically matches exhaustive search. For classical cases like $k$-CNF ($k \geq 4$) and NAE-$k$-SAT ($k \geq 5$) we establish sharp lower bounds of the form $2^n$ for $\MABD$ and $1.4142^{n}$ for $\PMABD$ under the SETH. For $\ABD(\text{CNF}^-\cup\text{IMP})$, we rule out improvements to $1.2599^{n}$, $1.4142^{|H|}$, $2^{|M|}$, or $\left(\frac{|H|}{|M|}\right)^{|M|}$ under SETH. This transfers to Horn for $\ABD$ and to $\text{DualHorn}$ for $\MABD$. For $\MABD(2\text{-CNF})$, we prove that sharp lower bounds under the SETH are unlikely unless NP $\subseteq$ P/Poly, leaving its precise fine-grained complexity as an interesting open question.

The results are summarized in Table~\ref{tab:upper_bounds} and Table~\ref{tab:lower_bounds} where (P), respectively (M), indicates that the result only holds for $\PMABD$, respectively $\MABD$. Thus, put together, we obtain a rather precise picture of the fine-grained complexity of $\MABD(\Gamma)$ and $\PMABD(\Gamma)$ for almost all classical intractable languages $\Gamma$. Notably, we have proven that even $\Sigma^P_2$-complete problems can admit improved algorithms with respect to $n$, and that the barrier of exhaustively enumerating all possible explanations can be broken. 

{\small {\em Proofs of statements marked with $(\star)$ can be found in the appendix, following this paper.}}

\section{Preliminaries}\label{prelims}

We begin by introducing basic notation and terminology.

%\paragraph{Propositional logic}
{\bf Propositional logic.}
We assume familiarity with propositional logic, clauses, and formulas in conjunctive/disjunctive form (CNF/DNF). %A \emph{literal} is a variable $x$ or its negation $\neg x$. A \emph{clause} is a disjunction of literals and a %\emph{term} is a conjunction of literals. A formula $\varphi$ is in \emph{conjunctive normal form} (CNF) if it is a conjunction of clauses, and in \emph{disjunctive normal form} (DNF) if it is a disjunction of terms. 
We denote $\var(\varphi)$ the variables of a formula $\varphi$ and for a set of formulas $F$, $\var(F) = \bigcup_{\varphi \in F} \var(\varphi)$. We identify finite $F$ with the conjunction of all formulas from $F$, that is $\bigwedge_{\varphi \in F} \varphi$. A \emph{model} of a formula $\varphi$ is an assignment $\sigma: \var(\varphi) \mapsto \{0,1\}$ that satisfies $\varphi$. For two formulas $\psi, \varphi$ we write $\psi \models \varphi$ if every model of $\psi$ also satisfies $\varphi$. For a set of variables $V$,  $\Lits{V}$ the set of all literals formed upon $V$, that is, $\Lits{V} = V \cup \{\neg x \mid x \in V\}$. 

%\paragraph{Boolean constraint languages} 
{\bf Boolean constraint languages.}
A \emph{logical relation} of arity $k \in \mathbb{N}$ is a relation $R \subseteq \{0,1\}^k$, and $\ar(R) = k$ denotes the arity.
An ($R$-)constraint $C$ is a formula $C = R(x_1, \dots, x_k)$, where $R$ is a $k$-ary logical relation, and $x_1, \dots, x_k$ are (not necessarily distinct) variables. An assignment $\sigma$ \emph{satisfies} $C$, if $(\sigma(x_1), \dots, \sigma(x_k)) \in R$. A (Boolean) \emph{constraint language} $\Gamma$ is a (possibly infinite) set of logical relations, and a \emph{$\Gamma$-formula} is a conjunction of constraints over elements from $\Gamma$. A $\Gamma$-formula $\phi$ is $\emph{satisfied}$ by an assignment $\sigma$, if $\sigma$ simultaneously satisfies all constraints in it, in which case $\sigma$ is also called a \emph{model of $\phi$}. We define the two constant unary Boolean relations as $\bot = \{(0)\}$ and $\top = \{(1)\}$, and the two constant 0-ary relations as $\sf{f} = \emptyset$ and $\sf{t} = \{\emptyset\}$. 
We say that a $k$-ary relation $R$ is \emph{represented} by a propositional CNF-formula $\phi$ if $\phi$ is a formula over $k$ distinct variables $x_1, \dots, x_k$ and $\phi \equiv R(x_1, \dots, x_k)$. Note such a CNF-representation exists for any logical relation $R \subseteq \{0,1\}^k$.
%and hence for any $R$-constraint. 
%Any $\Gamma$-formula (a conjunction of $\Gamma$-constraints) can thus be seen as a propositional formula and admits a CNF-representation.
%We note at this point that many classical fragments of propositional logic such as e.g. Horn, dualHorn, affine, k-CNF, can be modeled via the notion of $\Gamma$-formulas. 
%For instance, $\Gamma = \{ \{0,1\}^2 - \{(0,0)\}, \{0,1\}^2 - \{(0,1)\}, \{0,1\}^2 - \{(1,0)\}, \{0,1\}^2 - \{(1,1)\}\}$ models exactly 2-CNF formulas.
We write CNF for the relations corresponding to all possible clauses and Horn/DualHorn for the set relations corresponding to Horn/DualHorn clauses. Additionally, $k$-CNF is the subset of CNF of arity $k$, $\text{IMP} = \{R\}$, where $R(x,y) = \{0,1\}^2 -\{(1,0)\}$. 
The notation $\text{(k-)CNF}^+$ (resp. $\text{(k-)CNF}^-$ ) denotes the version where each clause is positive (resp. negative).
We use $\affine$ to denote the set of relations representable by systems of Boolean equations modulo 2, i.e., $x_1 + \ldots + x_k \equiv q \pmod 2$ for $q \in \{0,1\}$. 
As representation of each relation in a constraint language we use the defining CNF-formula unless stated otherwise.

The satisfiability problem for $\Gamma$-formulas, also known as Boolean constraint satisfaction problem, is denoted $\SAT(\Gamma)$. It asks, given a $\Gamma$-formula $\phi$, whether $\phi$ admits a model. 
%It is well-known that $\SAT(\Gamma)$ is decidable in polynomial time if $\Gamma$ is Horn, or dualHorn, or 2-CNF, or affine, or 1-valid, or 0-valid, but in all other cases is $\NP$-complete~\cite{Schaefer1978}.

%We denote by $\SAT(\Gamma)$-$B$ the degree-bounded variant, of $\SAT(\Gamma)$, where each variable occurs in at most $B$ constraints.
%Victor: I uncommented this since we don't need it anymore.
%\paragraph{Propositional Abduction}
{\bf Propositional Abduction.}
An instance of the abduction problem over a constraint language $\Gamma$ is given by $(\KB, H, M)$, where $\KB$ is a $\Gamma$-formula, and $H$ and $M$ sets of variables, referred to as {\em hypothesis} and {\em manifestation}.
%In the propositional case $\KB$ is given as a (set of) propositional formula(s), and both $H$ and $M$ are sets of variables. 
We let $V = \var(\KB) \cup \var(H) \cup \var(M)$ be the set of variables and write $n = |V|$ for its cardinality. The \emph{symmetric abduction problem}, denoted $\MABD(\Gamma)$, asks whether there exists an $E \subseteq \Lits{H}$ such that 1) $\KB \wedge E$ is satisfiable, and 2) $\KB \wedge E \models M$. If such an $E$ exists, it is called an $\emph{explanation}$ for $M$.
If an explanation $E$ satisfies $\var(E) = H$ it is called a \emph{full explanation}, if it satisfies $E \subseteq H$, it is called a \emph{positive explanation}. If there exists an explanation $E$, it can always be extended to a full explanation (by extending $E$ according to the satisfying assignment underlying condition 1). However, the existence of an explanation does not imply the existence of a positive explanation.
The \emph{positive abduction problem}, denoted $\PMABD(\Gamma)$, asks whether there exists a positive explanation.  We write $\ABD(\Gamma)$ if the specific abduction type is not important.

\begin{example}
    Consider the following example.
\begin{comment}
    \begin{align*}
        \KB =  \{
        & \text{Charly-lazy} \wedge \text{Charly-alone} \rightarrow \text{Charly-truant}, \\
        & \text{Dad-shopping} \rightarrow \text{Charly-alone}, \\
        & \neg \text{Buses-running} \rightarrow \text{Charly-truant} \wedge \neg \text{Dad-shopping}
        \}\\
        H   =  \{ & \text{Buses-running}, \text{Dad-shopping}, \text{Charly-lazy}\} \\
        M   =  \{& \text{Charly-truant}\}
    \end{align*}
    The set $\{\text{Charly-lazy}, \text{Charly-alone}\}$ would explain the manifestation $\text{Charly-truant}$, but is invalid, since $\text{Charly-alone}$ is no hypothesis. However, $E_1 = \{\neg \text{Buses-running}\}$ is a valid explanation. Note that $E_1$ is neither a positive nor a full explanation.
    $E_2 = \{\neg \text{Buses-running}, \neg \text{Dad-shopping}, \text{Charly-lazy}\}$ is a full explanation, and $E_3 = \{\text{Dad-shopping}, \text{Charly-lazy}\}$ is a positive explanation.
\end{comment}
    \begin{align*}
        \KB =  \{
        & A \wedge B \rightarrow C, 
         D \rightarrow B, 
         \neg E \rightarrow C \wedge \neg D
        \},\\
        H   =  \{ & A, D, E\} , \hspace{0.5cm}
        M   =  \{C\}
    \end{align*}
    % A = Charly-lazy
    % B = Charly-alone
    % C = Charly-truant
    % D = Dad-shopping
    % E = Buses-running
$E_1 = \{ A, \neg D, \neg E\}$ is a full explanation, and $E_2 = \{A, D\}$ is a positive explanation.    
\end{example}

 We note at this point that the classical complexity of $\MABD(\Gamma)$ and $\PMABD(\Gamma)$ is fully determined for all $\Gamma$, due to Nordh and Zanuttini \cite{NoZa2008}. An illustration of these classifications is found in the supplementary material.
%We define the symmetric abduction problem for $\Gamma$-formulas, denoted $\MABD(\Gamma)$, exactly as $\MABD$, but the knowledge base $\KB$ is given as a $\Gamma$-formula. $\PMABD(\Gamma)$ is defined analogously and we write $\ABD(\Gamma)$ if the specific abduction type is not important.

%\paragraph{Algebra}

{\bf Algebra.} We denote by $\Bar{x}$ the Boolean negation operation, that is, $f(x) = \neg x$. An $n$-ary {\em projection} is an operation $f$ of the form $f(x_1, \ldots, x_n) = x_i$ for some fixed $1 \leq i \leq n$.  An operation $f : \{0,1\}^k \rightarrow \{0,1\}$ is \emph{constant} if for all $\textbf{x}\in \{0,1\}^k$ it holds $f(\textbf{x}) = c$, for a $c \in \{0,1\}$. 
For a $k$-ary operation $f \colon \{0,1\}^k \to \{0,1\}$ and $X \subseteq \{0,1\}^k$ we write $f_{\mid X}$ for the function obtained by restricting the domain of $f$ to $X$.
%We denote by $\pro_S(f)$ is the projection of the function $f$ onto the index set $S$. 
An operation $f$ is a \emph{polymorphism} of a relation $R$ if for every $t_1, \dots, t_k \in R$ it holds that $f(t_1, \dots, t_k) \in R$, where $f$ is applied coordinate-wise. In this case $R$ is called \emph{closed} or \emph{invariant}, under $f$, ard $\inv(F)$ denotes the set of all relations invariant under every function in $F$. Dually, for a set of relations $\Gamma$, $\pol(\Gamma)$ denotes the set of all polymorphisms of $\Gamma$. Sets of the form $\pol(\Gamma)$ are known as \emph{clones}, and sets of the form $\inv(F)$ are known as \emph{co-clones}. A \emph{clone} is a set of functions closed under 1) functional composition, and 2) projections (selecting an arbitrary but fixed coordinate). For a set $B$ of (Boolean) functions, $[B]$ denotes the corresponding clone, and $B$ is called its \emph{base}. There is an inverse relationship between $\pol(\Gamma)$ and $\inv(F)$ but we defer the details to the supplemental material.

{\bf Complexity theory.} We assume familiarity with basic notions in classical complexity theory (cf.~\cite{Sipser97}) and use complexity classes $\P$, $\NP$, $\coNP$, $\NP^\NP = \SigPtwo$.
%Note that hardness results (e.g. NP-hardness) in this context are obtained via polynomial many-one reductions. This type of reduction is, however, not suitable to transfer exact exponential running time, and neither subexponential running time. To consider exponential time algorithms, a \emph{complexity parameter} needs to be specified. For this paper, we consider as parameter the \emph{total number of variables} in an instance (e.g. an abduction or SAT instance). 
%For a variable problem $A$ we denote $I(A)$ the set of instances and $\var(I)$ denotes the set of variables. If clear from the context we usually write $n$ rather than $|\var(I)|$.
In this paper we work in the setting of {\em parameterized} complexity where the \emph{complexity parameter} is the number of variables $n$, in either an abduction or SAT instance, or the number of vertices in a graph problem. For a \emph{variable problem} $A$ we let $I(A)$ be the set of instances and $\var(I)$  the set of variables. If clear from the context we usually write $n$ rather than $|\var(I)|$.%the number of variables in the knowledge base in an abduction problem, or the number of vertices in a graph problem. 
We define the following two types of reductions~\cite{jonsson2017} (note that ordinary polynomial-time reductions do not necessarily preserve the number of variables).
%We can now define LV- and CV-reductions 

%In general, we call problems where this parameter
%is possible to define \emph{variable problems}. For a variable problem $A$ we denote by $I(A)$ the set of instances. For an instance $I$ let $\var(I)$ denote the set of variables.
\begin{definition}
Be $A,B$ two variable problems. A function $f \colon I(A) \rightarrow I(B)$ is a \emph{many-one linear variable reduction} (LV-reduction) with parameter $C \geq 0$ if $I$ is a yes-instance of $A$ iff $f(I)$ is a yes-instance of $B$,  $|\var(f(I))| = C \cdot |\var(I)|+O(1)$, and $f$ can be computed in polynomial time.
\end{definition}

If in an LV-reduction the parameter $C$ is $1$, we speak of a \emph{CV-reduction}, and we take us the liberty to view a reduction which actually shrinks the number of variables ($|\var(f(I))| \leq |\var(I)| +O(1)$) as a CV-reduction, too. We use $A \cvred B$, respectively $A \lvred B$ as shorthands, and sometimes write $A \cveq B$ if $A \cvred B$ and $B \cvred A$.
For  algorithms' (exponential) running time and space usage we adopt the $O^*$ notation which suppresses polynomial factors. 
A CV-reduction transfers exact exponential running time: if $A \cvred B$, and $B$ can be solved in time $O^*(c^n)$, then also $A$ can be solved in time $O^*(c^n)$ (where $n$ denotes the complexity parameter). LV-reductions, on the other hand, preserve {\em subexponential complexity}, i.e., if $B$ can be solved in $O^*(c^n)$ time for every $c > 1$ and $A \lvred B$ then $A$ is solvable in $O^*(c^n)$ time for every $c > 1$, too. For additional details we refer the reader to~\cite{tcs2021}.

\section{Upper Bounds via SAT Based Approaches}
\label{section:enum}

We begin by investigating the possibility of solving $\MABD(\Gamma)$ and $\PMABD(\Gamma)$ faster, conditioned by a reasonably efficient algorithm for $\SAT(\Gamma)$. This assumption is necessary to beat exhaustive search since  $\SAT(\Gamma) \cvred \ABD(\Gamma)$, which implies that if $\SAT(\Gamma)$ is not solvable in $O(c^n)$ time for any $c < 2$ then $\ABD(\Gamma)$ is not solvable in $O(c^n)$ time for any $c < 2$, either. 

%Victor: Move this to some other place if it's needed.

For a constraint language $\Gamma$ we let $\Gamma^+ = \Gamma \cup \{\bot, \top\}$, i.e., $\Gamma$ expanded with the two constant Boolean relations. Trivially, we have $\SAT(\Gamma) \cvred \SAT(\Gamma^+)$, and we observe that if $\pol(\Gamma)$ does not contain a constant operation then we additionally have $\SAT(\Gamma^+) \cvred \SAT(\Gamma)$~\cite{lagerkvist2020f}.  We obtain the following baseline bound to beat.
%For a rough baseline, assume that $\SAT(\Gamma)$ is solvable in $O^*(c^n)$ time for some $c < 2$, where $n$ is the number of variables. We can then solve an instance  $(\varphi, H, M)$ of $\MABD(\Gamma)$ by exhaustively enumerating all possible explanations ($2^{|H|}$ time) and then use the algorithm 

\begin{theorem}($\star$) \label{thm:brute_force}
Let $\Gamma$ be a constraint language such that $\SAT(\Gamma^+)$ is solvable in $f(n)$ time for some computable function $f \colon \mathbb{N} \to \mathbb{N}$. Then 
\begin{enumerate}
    \item $\MABD(\Gamma)$ is solvable in  $2^{|H|} \cdot f(n - |H|) \cdot (|M| + 1)$ time, 
    \item $\PMABD(\Gamma)$ is solvable in  $\Sigma_{E \subseteq H}f(n - |E|) \cdot (|M| + 1)$ time.
\end{enumerate}
\end{theorem}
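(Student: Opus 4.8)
The plan is to verify each bound by describing an explicit algorithm and accounting for its running time. For part (1), the algorithm enumerates all $E \subseteq \Lits{H}$ with $\var(E) = H$, i.e., all $2^{|H|}$ full assignments to the hypothesis variables. For each such $E$, we must check the two defining conditions of an explanation: (i) $\KB \wedge E$ is satisfiable, and (ii) $\KB \wedge E \models M$. Fixing $E$ amounts to substituting constants for the $|H|$ hypothesis variables in $\KB$, producing a $\Gamma^+$-formula on the remaining $n - |H|$ variables (the constants $\bot, \top$ absorb the literals of $E$). Condition (i) is then one call to the $\SAT(\Gamma^+)$ oracle on an instance with $n - |H|$ variables, costing $f(n-|H|)$. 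For condition (ii), note that $\KB \wedge E \models M$ fails iff there is a model of $\KB \wedge E$ setting some $m \in M$ to $0$; so we test, for each of the $|M|$ manifestation variables $m$, whether $\KB \wedge E \wedge (m = 0)$ is satisfiable, again a $\Gamma^+$-formula on at most $n - |H|$ variables. This is $|M|$ further oracle calls. Together with the one call for (i) this gives the factor $(|M|+1)$, and the total is $2^{|H|} \cdot f(n-|H|) \cdot (|M|+1)$. Correctness follows because $M$ is a set of variables, so $\KB \wedge E \models M$ exactly means every model of $\KB \wedge E$ satisfies every $m \in M$; and restricting $f$ to a smaller instance only helps, so the stated time is a valid upper bound even if $f$ is not monotone (one can take $\max$ over the relevant sizes, which is still $O^*(f(n-|H|))$ under the convention that $f$ is at least polynomial). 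The remark in the preliminaries that existence of an explanation implies existence of a full explanation justifies restricting attention to full explanations $E$ with $\var(E) = H$.

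For part (2), the only change is that a positive explanation must satisfy $E \subseteq H$ (only positive literals), and crucially such an $E$ need not assign all of $H$; moreover unlike the symmetric case a positive explanation cannot always be extended to one on all of $H$ while staying positive. Hence we cannot collapse the search to assignments of all hypothesis variables and must iterate over all subsets $E \subseteq H$ directly. For a fixed $E \subseteq H$, substituting $\top$ for each variable in $E$ yields a $\Gamma^+$-formula on $n - |E|$ variables (the variables of $H \setminus E$ remain free, as they are not constrained to be $0$). As before, checking conditions (i) and (ii) costs $(|M|+1)$ calls to $\SAT(\Gamma^+)$ on instances with $n - |E|$ variables, i.e., $f(n-|E|)\cdot(|M|+1)$. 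Summing over all $E \subseteq H$ gives $\sum_{E \subseteq H} f(n-|E|)\cdot(|M|+1)$, as claimed. (Grouping by $|E| = i$, this is $(|M|+1)\sum_{i=0}^{|H|}\binom{|H|}{i} f(n-i)$, and when $f(n) = c^n$ one recognizes the $3^n$-type bound mentioned in the introduction for $c=2$.)

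The steps are all routine; the one point requiring care is the reduction of "$\KB \wedge E \models M$" to satisfiability calls, and in particular making sure the entailment check does not secretly need an oracle for a harder problem. The key observation is that because $M \subseteq V$ consists of variables (not arbitrary formulas), entailment of the conjunction $\bigwedge_{m\in M} m$ decomposes into $|M|$ independent checks "$\KB \wedge E \wedge \neg m$ unsatisfiable", each of which is a $\SAT(\Gamma^+)$ instance — the negation $\neg m$ is encoded by substituting $\bot$ for $m$. I also need to confirm that substituting constants for variables keeps us inside the language $\Gamma^+$ rather than some larger language: this is exactly why the bound is phrased in terms of $\SAT(\Gamma^+)$ and not $\SAT(\Gamma)$, and it is immediate since the constant relations $\bot,\top$ are by definition in $\Gamma^+$. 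No genuine obstacle arises beyond this bookkeeping.
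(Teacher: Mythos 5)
Your proposal is correct and follows essentially the same route as the paper's own proof: enumerate full explanations (respectively, all subsets $E \subseteq H$), encode the chosen literals via $\bot$- and $\top$-constraints so each check is a $\SAT(\Gamma^+)$ instance on $n-|H|$ (respectively $n-|E|$) variables, and reduce the entailment $\KB \wedge E \models M$ to $|M|$ unsatisfiability tests of $\KB \wedge E \wedge \neg m$. The extra remarks you add (decomposing entailment variable-by-variable, and the monotonicity caveat on $f$) are correct bookkeeping that the paper leaves implicit.
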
    

\begin{comment}
\begin{proof}
Let $(\KB, H, M)$ be an instance of $\MABD(\Gamma)$. We first observe that
we without loss of generality may assume that $\var(E) = H$. We then exhaustively enumerate all full explanations $E \subseteq \Lits{H}$ ($2^{|H|}$ time since we do not have to consider $E$ containing both a variable and its complement) and use the algorithm for $\SAT(\Gamma^+)$ to 
\begin{enumerate}
\item check whether $\KB \wedge E$ is satisfiable, and 
\item check whether $\KB \land E \land \neg m_i$ is unsatisfiable for each $m_i \in M$.
\end{enumerate}
Since $\var(E) = H$, each instance $\KB \wedge E$ effectively only has $n - |H|$ variables, we therefore get the bound $2^{|H|} \cdot f(n - |H|) \cdot (|M| + 1)$. Note that checking satisfiability of $\KB \wedge E$ can be done by the presumed algorithm for $\SAT(\Gamma^+)$ by forcing variables in $E$ constant values via $\bot$- and $\top$-constraints.

For $\PMABD(\Gamma)$ the analysis is similar but where in each invocation of the algorithm for $\SAT(\Gamma^+)$ we have $n - |E|$ variables.
\end{proof}
\end{comment}

Since every $\SAT(\Gamma^+)$  problem is solvable in $O^*(c^n)$ time for some $c \leq 2$ we
get an overall bound of $2^{|H|} \cdot O^*(c^{n - |H|}) \in O^*(2^n)$ for $\MABD(\Gamma)$, and  $\Sigma_{E \subseteq H} O^*(c^{n - |E|}) = O^*(c^{n - |H|} \cdot (c+1)^{|H|}) \in O^*((c+1)^n) \in O^*(3^n)$ for $\PMABD(\Gamma)$.

%We continue by analyzing $\MABD$ (in Section~\ref{sec:mabd_faster}) under the assumption that all models of 

The question is now when these baseline bounds can be beaten. As a general method we (for both $\MABD$ and $\PMABD$) consider the assumption that all models of the knowledge base can be enumerated sufficiently fast. 

\begin{definition}
    The set of models of a $\SAT(\Gamma)$ instance $\varphi$  is denoted $\m{\varphi}$. If there exists $c < 2$ such that $|\m{\varphi}| \leq c^n$ then $\SAT(\Gamma)$ is said to be {\em sparse}.
\end{definition}

%\victor{I think we should also classify languages that are provably {\em not} sparse. A simple condition is that the language contains an trivial relation as a minor. We should at the very least be able to rule out sparse enumeration for 2-SAT, Horn, ..., probably almost anything that is not $\textsc{Equations}$. We could probably also say something like: if the substitution closure of the language contains an trivial relation, and the language contains the two constants, then the language is not sparse, either.}

We also need the corresponding computational property where we require all models to be enumerable fast.

\begin{definition}
    Let $\Gamma$ be a constraint language. If $\m{\varphi}$, for every $\SAT(\Gamma)$ instance $\varphi$, can be enumerated in $O^*(c^n)$ time for some $c < 2$ then we say that $\SAT(\Gamma)$ is {\em sparsely enumerable}.
\end{definition}

%\victor{Some glue text here stating that we now consider conditionally faster algorithms for M-ABD and P-ABD.}
%[TODO: say something about this relationship. Argue that $\PMABD(\Gamma)$ is strictly speaking not always an easier problem (relate to classical complexity).]

\subsection{Faster Algorithms for $\MABD$}
\label{sec:mabd_faster}
We handle $\MABD(\Gamma)$ first since the analysis is simpler than for $\PMABD(\Gamma)$ (in Section~\ref{sec:pabd_faster}).
Trading polynomial for exponential space we consider a faster algorithm for $\MABD(\Gamma)$ under the condition that $\SAT(\Gamma)$ is sparsely enumerable. 

We first state a technical lemma, facilitating the presentation of the algorithms and reductions throughout the following sections.

\begin{lemma}($\star$)\label{lem:preprocessing}
    When solving an abduction instance $(\KB, H, M)$, we can W.L.O.G. assume that $M,H \subseteq \var(\KB)$, via a polynomial time preprocessing. This constitutes even a CV-reduction of the problem to itself.
\end{lemma}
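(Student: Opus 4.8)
\textbf{Proof plan for Lemma~\ref{lem:preprocessing}.}

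The plan is to exhibit an explicit polynomial-time transformation that takes an arbitrary abduction instance $(\KB, H, M)$ and produces an equivalent instance $(\KB', H', M')$ in which every variable of $H'$ and $M'$ actually occurs in $\KB'$, and then to verify that this transformation (i) preserves yes/no-ness for both $\MABD(\Gamma)$ and $\PMABD(\Gamma)$, and (ii) does not increase the number of variables, so that it is a CV-reduction of $\ABD(\Gamma)$ to itself. The natural construction is: for each variable $x \in (H \cup M) - \var(\KB)$, attach $x$ to $\KB$ by a trivial constraint that does not constrain $x$ at all, e.g. conjoining a tautological constraint $R_{\mathrm{triv}}(x, \ldots, x)$ where $R_{\mathrm{triv}}$ is a relation from $\Gamma$ (or from $\Gamma \cup \{\text{Eq}\}$, or simply the binary equality constraint $x = x$, which is qfpp-definable and costs no new variables) that is satisfied by both values of $x$. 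After this, $\var(\KB') = \var(\KB) \cup H \cup M = V$, so $n$ is unchanged, and we set $H' = H$, $M' = M$.

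The key steps, in order, are: (1) define $\KB'$ as above and observe $\var(\KB') = V$, so $|\var(\KB', H', M')| = |\var(\KB, H, M)| = n$, giving the variable-count condition for a CV-reduction. (2) Show $\m{\KB'} $, viewed as assignments to $V$, equals $\m{\KB}$ extended by all possible values on the fresh variables --- i.e. the added constraints are vacuous --- so that for every $E \subseteq \Lits{H}$, $\KB \wedge E$ is satisfiable iff $\KB' \wedge E$ is, and likewise $\KB \wedge E \models m$ iff $\KB' \wedge E \models m$ for each $m \in M$ (here one must be slightly careful: on the left $\models$ quantifies over models restricted to $\var(\KB) \cup \var(E) \cup \{m\}$, on the right over models of $\KB'$, but since the fresh variables are genuinely unconstrained and $m, \var(E) \subseteq V = \var(\KB')$, the two entailments coincide). (3) Conclude that $E$ is an explanation for the original instance iff it is an explanation for the new one, for both the symmetric and the positive variants, hence the two instances are equivalent as $\MABD(\Gamma)$-instances and as $\PMABD(\Gamma)$-instances. (4) Note the transformation is clearly polynomial-time. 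Combining (1)--(4) yields the claimed CV-reduction and the ``W.L.O.G.'' statement.

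The main obstacle I expect is a bookkeeping subtlety rather than a genuine mathematical difficulty: making precise the semantics of the entailment condition $\KB \wedge E \models M$ when $\var(M)$ or $\var(H)$ is not contained in $\var(\KB)$, and checking that adding vacuous constraints does not accidentally change it. In particular one wants the added constraint for $x$ to be over the relation signature $\Gamma$ actually available (so the output is a legitimate $\Gamma$-formula) and to be genuinely non-restricting on $x$; if no single relation of $\Gamma$ is ``projection-onto-$x$ is $\{0,1\}$'', one can instead handle such variables by the equality relation (permitted in qfpp-closure) or argue that a variable of $H \cup M$ outside $\var(\KB)$ can be treated as a free variable without loss of generality. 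A secondary point to state carefully is that the reduction shrinks or preserves $n$ exactly (never increases it), which is what licenses calling it a CV-reduction under the paper's convention, and that it respects the distinction between $\MABD$ and $\PMABD$ since $H$ itself is untouched.
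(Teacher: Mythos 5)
Your route is genuinely different from the paper's: you keep every variable and try to force $M,H \subseteq \var(\KB)$ by conjoining \emph{vacuous} constraints onto the stray variables, whereas the paper goes the opposite way and \emph{removes} them via a short case analysis (if $m \in M - \var(\KB)$ and $m \notin H$, the instance is a trivial no-instance, since an unconstrained $m$ can never be entailed; if $m \in M \cap H$ but $m \notin \var(\KB)$, it is explained by itself and is deleted from $M$ (and from $H$); a stray $h \in H - \var(\KB)$ is simply deleted from $H$). Because that transformation only shrinks the variable set, it is a CV-reduction under the paper's convention with no further work.

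The gap in your version is exactly the obstacle you flag but do not close: the output must again be a $\Gamma$-formula for the \emph{same} $\Gamma$, and for a general $\Gamma$ there is no way to mention a fresh variable vacuously. Neither equality nor a trivial relation $\{0,1\}^k$ need belong to $\Gamma$ (qfpp-definability does not help here, since the knowledge base is required to be a conjunction of $\Gamma$-constraints, not of pp-definable ones), and identifying variables inside a non-trivial relation is generally not vacuous. A concrete failure: for $\Gamma = \text{2-CNF}^+$ the only constraint mentioning a single fresh variable is $(x \vee x)$, which forces $x=1$; attaching it to a stray $m \in M - H$ can turn a necessarily negative instance into a positive one, and attaching a binary clause $(x \vee y)$ with $y \in \var(\KB)$ changes the semantics on $y$. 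Several of the paper's key languages ($\textsc{Equations}$, $\XSAT$, $\text{k-CNF}^+$) contain \emph{only} non-trivial relations, so this is not a corner case. The fix is to abandon the ``attach'' idea and argue by deletion as above; your bookkeeping on the semantics of $\KB \wedge E \models M$ in steps (2)--(3) is correct and transfers directly to that argument.
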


Now the basic idea to solve $\MABD(\Gamma)$ is to define an equivalence relation $\equiv_H$ over $\m{\KB}$ by letting $f \equiv_H g$ if and only if $f_{\mid H} = g_{\mid H}$ (where $H$ is the hypothesis set). We then construct the equivalence classes of $\equiv_H$ and discard a class when it fails to explain $M$.
An explanation exists if there is a non-empty class where every member satisfies $M$. Initially, this requires exponential space, $O^*(2^n)$. However, by only storing information on whether a potential explanation $E$ has an extension that fails to satisfy $M$, space usage is reduced to $O^*(2^{|H|})$. The space usage is further limited by the enumerating algorithm's runtime, $O^*(c^n)$, resulting in total space usage bounded by $O^*(\min(c^n,2^{|H|}))$. We obtain the following theorem.

\begin{theorem}($\star$)
    Let $\Gamma$ be a constraint language where $\SAT(\Gamma)$ is sparsely enumerable in $O^*(c^n)$ time. Then $\MABD(\Gamma)$ is solvable in $O^*(c^n)$ time and $O^*(\min(c^n,2^{|H|}))$ space. 
\end{theorem}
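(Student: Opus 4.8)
The plan is to turn the informal description preceding the statement into a concrete algorithm and analyze it. First I would invoke Lemma~\ref{lem:preprocessing} so that $H, M \subseteq \var(\KB)$ and in particular $n = |\var(\KB)|$. The core of the algorithm is the equivalence relation $\equiv_H$ on $\m{\KB}$ defined by $f \equiv_H g$ iff $f_{\mid H} = g_{\mid H}$. The key observation to record is: a set $E \subseteq \Lits{H}$ with $\var(E) = H$ is a full explanation iff (1) there is at least one model $f \in \m{\KB}$ with $f_{\mid H}$ agreeing with $E$, and (2) \emph{every} such model $f$ satisfies $M$ (i.e. $f(m) = 1$ for all $m \in M$). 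Indeed, condition (1) is exactly satisfiability of $\KB \wedge E$, and condition (2) is exactly $\KB \wedge E \models M$. Since every explanation extends to a full explanation, it suffices to search over the (at most $2^{|H|}$) full explanations, equivalently over the values $f_{\mid H}$ realized by models of $\KB$.

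Next I would describe the algorithm. Enumerate $\m{\KB}$ using the presumed sparse enumeration algorithm for $\SAT(\Gamma)$, which takes $O^*(c^n)$ time and emits at most $O^*(c^n)$ models. Maintain a dictionary keyed by assignments to $H$ (there are at most $2^{|H|}$ keys actually touched, and at most $O^*(c^n)$ touched since we only create a key when we see a model). For each enumerated model $f$, compute the key $f_{\mid H}$, and store a single bit: whether we have so far seen, for this key, some model that violates $M$. Concretely, initialize the bit to ``good'' the first time the key appears, and flip it to ``bad'' as soon as some model with that key has $f(m) = 0$ for some $m \in M$; once bad it stays bad. After enumeration finishes, answer ``yes'' iff some key has bit still ``good'' (such a key witnesses a non-empty $\equiv_H$-class all of whose members satisfy $M$, hence a full explanation, hence an explanation). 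This is correct by the characterization above.

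For the complexity: the enumeration dominates the time, giving $O^*(c^n)$ overall, since per model we do only polynomial work (computing the restriction, a dictionary lookup/update). For space, the dictionary has one entry per distinct value $f_{\mid H}$ seen, so at most $\min(2^{|H|}, \text{number of models enumerated}) \le \min(2^{|H|}, O^*(c^n))$ entries, each of polynomial size; hence space $O^*(\min(c^n, 2^{|H|}))$. I expect the main subtlety — not a deep obstacle, but the point that needs care — to be the reduction to full explanations and the justification that storing just one bit per $H$-pattern is sound: one must argue that we never need to remember which model witnessed the violation, only that a violating extension exists, and that a class surviving as ``good'' after \emph{all} models are processed genuinely certifies $\KB \wedge E \models M$ rather than merely consistency. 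This follows because enumeration is exhaustive: every model of $\KB$ consistent with $E$ has been inspected, so ``good'' means all of them satisfy $M$. A minor additional point is handling $E$ with $\var(E) \subsetneq H$ versus full explanations, which is dispatched by the earlier remark (in the problem definition / Theorem~\ref{thm:brute_force}) that an explanation exists iff a full explanation exists.
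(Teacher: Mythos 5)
Your proposal is correct and follows essentially the same route as the paper: reduce to full explanations, enumerate $\m{\KB}$ with the sparse enumeration algorithm, bucket models by their restriction to $H$, and maintain a per-bucket ``still good'' flag that is cleared as soon as some model in the bucket violates $M$ (the paper implements this flag via its \emph{discarded} and \emph{potentialExp} sets). The correctness justification (exhaustiveness of the enumeration certifies $\KB \wedge E \models M$ for surviving buckets) and the time/space accounting match the paper's argument.
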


\subsection{Faster Algorithms for $\PMABD$} \label{sec:pabd_faster}
In this section we present improved algorithms for brute force and enumeration for positive abduction, that have the same complexities as the symmetric variants described above. Recall that the baseline bound to beat for $\PMABD$ (from Theorem~\ref{thm:brute_force}) is $O^*(3^n)$, and we begin by lowering this to $O^*(2^n)$ via a more sophisticated exhaustive search scheme.

\begin{algorithm} 
    \caption{Algorithm $\mathcal{A}$ for $\PMABD(\Gamma)$.}
    \label{alg:PABD}
    \begin{algorithmic}[1]
        \REQUIRE $\KB, H,M,D,\delta$
                \STATE $E = D \cup \delta$
				\STATE $G = E \cup \{\neg x \mid x\in H-E\}$
        \IF{$\KB \wedge G \wedge \neg M$ is satisfiable}
           \RETURN $\bot$
        \ENDIF
        \IF{$\KB \wedge G$ is satisfiable}
%            \STATE \emph{\# $E$ is a positive explanation}
            \RETURN $\top$
        \ELSE
           \IF{$D = \emptyset$}
             \RETURN $\bot$
           \ENDIF
           \STATE flag = $\bot$
           \FOR{$x \in D$}
             \STATE flag = flag $\vee$ $\mathcal{A}(\KB,H,M,D - \{x\},\delta)$
             \STATE $\delta = \delta \cup \{x\}$
           \ENDFOR
            \RETURN flag
        \ENDIF
				
    \end{algorithmic}
\end{algorithm}

\begin{theorem}$(\star)$
For any constraint language $\Gamma$, $\PMABD(\Gamma)$ can be solved in $O^*(2^n)$ time and polynomial space.
\end{theorem}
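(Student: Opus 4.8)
The plan is to analyze Algorithm~\ref{alg:PABD} and show that a suitable top-level call solves $\PMABD(\Gamma)$ in $O^*(2^n)$ time with polynomial space. First I would fix the intended semantics of the parameters: by Lemma~\ref{lem:preprocessing} we may assume $H, M \subseteq \var(\KB)$, and the initial invocation is $\mathcal{A}(\KB, H, M, H, \emptyset)$. The pair $(D, \delta)$ partitions a ``frontier'' of the hypothesis: $\delta$ is the set of hypothesis variables already \emph{committed to be excluded} from the explanation on the current branch, $D$ is the set still \emph{free}, and $E = D \cup \delta$ with $D$ treated as ``currently included'' is the explanation candidate being tested, while $G$ additionally pins every non-candidate hypothesis variable to $0$ (via negative literals). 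So line~3 checks whether the candidate $E = D \cup \delta$, extended by the assignment implicit in $G$, fails to entail $M$ — actually one must be slightly careful here: testing ``$\KB \wedge G \wedge \neg M$ satisfiable'' rejects the whole branch, and the subtlety to get right in the writeup is \emph{why} this is sound, namely that any positive explanation $E'$ lying ``between'' $\delta$-excluded and $D$-included, i.e.\ $E' \subseteq D$, when combined with $G$ still satisfies $\neg M$, hence cannot be an explanation.

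The correctness argument I would give has two directions. For soundness of a \textbf{$\top$} answer (line~7): if $\KB \wedge G$ is satisfiable and $\KB \wedge G \wedge \neg M$ is not, then $G \subseteq \Lits{H}$ restricted to positive literals is exactly $E = D \cup \delta$... wait — $G$ contains negative literals on $H - E$, so $G$ is not itself a \emph{positive} explanation. The right reading is that the positive explanation is $D \cup \delta$ only when we also argue the excluded variables may be set to $0$; so I would instead claim the witnessed positive explanation is $E = D \cup \delta$ and verify conditions (1) and (2) of $\PMABD$ directly using the model of $\KB \wedge G$. For completeness: suppose a positive explanation $E^\star \subseteq H$ exists; I would show by induction on $|D|$ that some recursive call encounters it. At the top call $D = H$, $\delta = \emptyset$; if $E = H$ already works we are done, otherwise line~3 either rejects (impossible, since $E^\star$ is a sub-explanation and the monotonicity-of-entailment lemma forces $\KB\wedge G\wedge\neg M$ unsat when $E^\star$ explains $M$ and $E^\star \subseteq D$) — so control reaches the loop, and since $E^\star \subsetneq D$ there is some $x \in D \setminus E^\star$; the branch that moves exactly that $x$ into $\delta$ keeps $E^\star \subseteq D - \{x\}$, and we recurse. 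The key invariant to carry through the induction is: \emph{on every branch reaching a given node, $E^\star \subseteq D \cup \delta$ and $\delta \cap E^\star = \emptyset$}, equivalently $E^\star \subseteq D$ together with $\delta$ disjoint from $E^\star$.

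For the running time I would set up the recurrence on $T(k)$ where $k = |D|$. The body does a constant number of $\SAT(\Gamma^+)$ (equivalently $\SAT(\Gamma)$ with constants) calls, each on an instance with at most $n$ variables but in fact with $\delta \cup (H - D - \delta)$-many variables fixed, so effectively $n - |\delta| - (|H| - |D| - |\delta|) = n - |H| + |D|$ free variables; bounding that crudely by $O^*(2^{n})$ per node would be too weak, so the clean bound is to observe the satisfiability tests each run in $O^*(2^{n - (|H|-|D|)})$ time since $|H| - |D|$ hypothesis variables are pinned (those in $\delta$ plus those forced negative are exactly $H \setminus D$). The recursion tree: a node with $|D| = k$ makes $k$ children, the $i$-th having $|D| = k - 1$ and $|\delta|$ increased, but crucially the children are generated with $\delta$ \emph{growing across the loop}, so the $i$-th child has $i-1$ extra excluded variables and thus its subtree is over a smaller free set. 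I would argue the total work is $\sum_{\text{nodes}} O^*(2^{n - |H| + |D|})$ and that this telescopes to $O^*(2^{n})$: the dominating contribution is from nodes near the root where $|D| \approx |H|$, of which there are polynomially few at each fixed ``excess excluded'' count — more precisely, grouping nodes by $j = |H| - |D|$ (the number already excluded), there are at most $\binom{|H|}{j}$ such subsets-of-$\delta$ reachable... actually the cleanest accounting is: each leaf corresponds to a distinct subset $\delta \subseteq H$ of excluded variables (the set $D\cup\delta$ at a leaf with its recorded inclusion pattern is determined by $\delta$), there are $\le 2^{|H|}$ such, and the $\SAT$ test at that leaf costs $O^*(2^{n - |H|})$, giving $2^{|H|} \cdot O^*(2^{n-|H|}) = O^*(2^n)$; internal nodes contribute no more.

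The main obstacle I anticipate is making the leaf-to-$\delta$ correspondence and the cost bookkeeping \emph{rigorous} — in particular pinning down precisely which variables are fixed in each $\SAT$ call (the set $H \setminus D = H \setminus (D)$, realized partly by $\delta$-literals inside $E$ and partly by the $\neg x$ literals on $H - E$), confirming that $|H \setminus D|$ variables really are eliminated so that each call is $O^*(2^{n-|H\setminus D|})$ rather than $O^*(2^n)$, and then showing the sum over the recursion tree collapses to $O^*(2^n)$ rather than, say, $O^*(3^n)$. The polynomial-space claim is comparatively easy: the recursion depth is at most $|H| \le n$, each frame stores sets of size $\le n$, and the $\SAT(\Gamma)$ subroutine is the standard polynomial-space exhaustive search over the (at most $n$) free variables — so only the correctness-plus-timing argument needs real care.
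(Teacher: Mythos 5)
Your final time/space accounting lands in the right place (at most $2^{|H|}$ recursion-tree nodes, one per subset of $H$ thanks to the $\delta$-mechanism, each performing satisfiability checks in which \emph{all} of $H$ is fixed by $G$ --- every $x\in H$ occurs in $G$, either positively via $E$ or as $\neg x$, so each check has exactly $n-|H|$ free variables; your intermediate count $n-|H|+|D|$ forgets that the variables of $D$ are pinned to $1$ as well), and the polynomial-space claim is fine. The genuine gap is the soundness of the $\top$ answer. You propose to verify ``conditions (1) and (2) of $\PMABD$ directly using the model of $\KB\wedge G$.'' Condition (1) is direct, but condition (2), $\KB\wedge E\models M$, is not: since $E\subseteq G$, the formula $\KB\wedge E$ has more models than $\KB\wedge G$, and unsatisfiability of $\KB\wedge G\wedge\neg M$ says nothing about models of $\KB\wedge E$ that set some variable of $H-E$ to $1$. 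Concretely, take $\m{\KB}=\{(0,0,1,1),(1,0,1,0),(1,1,0,0)\}$ over $(x_1,x_2,x_3,m)$ with $H=\{x_1,x_2,x_3\}$ and $M=\{m\}$: at the candidate $E=\{x_3\}$ (reached via $H\to\{x_2,x_3\}\to\{x_3\}$) both tests pass, since $\KB\wedge\neg x_1\wedge\neg x_2\wedge x_3$ is satisfiable and entails $m$, yet $\KB\wedge x_3\not\models m$ because of the model $(1,0,1,0)$. So ``directly'' is not available; this is exactly where the paper's appendix has to work, via an induction (its Claims~1 and~2) asserting that every strict superset $F$ of $E$ has already failed the consistency test, whence any model of $\KB\wedge E\wedge\neg M$ would have to set all of $H-E$ to $0$ and thus satisfy $\KB\wedge G\wedge\neg M$. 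Be warned that even this is delicate: the asserted property is immediate only for the \emph{ancestors} of $E$ in the recursion tree, not for arbitrary supersets such as $\{x_1,x_3\}$ above, which are visited later --- so this step needs genuine care, more than either your sketch or a casual reading of the appendix provides.

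Two smaller but real problems. You have the role of $\delta$ reversed: $\delta$ holds variables locked \emph{into} the candidate ($E=D\cup\delta$), the excluded ones being $H-D-\delta$; consequently your completeness invariant ``$\delta\cap E^\star=\emptyset$'' is backwards --- along the path toward a target $S$ the invariant is $\delta\subseteq S\subseteq D\cup\delta$ with $(D\cup\delta)-S\subseteq D$. And your completeness induction stops short of the terminal case: when the recursion reaches the node with $E=E^\star$, the second test examines $\KB\wedge G^\star$, which forces all of $H-E^\star$ to $0$ and may be unsatisfiable even though $\KB\wedge E^\star$ is satisfiable, so the algorithm need not return $\top$ there. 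The standard fix is to aim instead for $E'=\{x\in H\mid\sigma(x)=1\}$ for some model $\sigma$ of $\KB\wedge E^\star$: at that node both tests provably succeed, and no superset of $E'$ can be pruned by the first test since it would contradict $\KB\wedge E^\star\models M$.
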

%Victor: I removed the explicit definition of n. We will introduce this once and for all in the preliminaries so that we don't have to bother defining it in every theorem statement.

\begin{proof}
    Consider Algorithm~\ref{alg:PABD}. The starting parameters are $D = H$ and $\delta = \emptyset$.
The recursive algorithm systematically explores all subsets of $H$ as candidates.
It starts with the base candidate $E = H$. Inside each recursive call, it first checks if the current candidate $E$, extended to a full candidate $G$, entails the manifestation (line 3). If this fails, it concludes that neither $G$ nor any subset of $G$ (which includes $E$ and all its subsets) can be an explanation, and returns $\bot$. Else, it checks if $G$ is consistent with $\KB$. If yes, then $G$ is obviously a (full) explanation, but the algorithm concludes that in this case even $E \subseteq G$ is a (positive) explanation. If $G$ is not consistent with $\KB$, the algorithm concludes that neither $E$ is consistent with $\KB$, and can thus not be an explanation. 
Then the algorithm systematically checks candidates where exactly one variable is removed from $E$ (lines 11--14). Descending in the recursive calls (line 12) it makes sure to systematically explore \emph{all} subsets of a candidate, thereby avoiding visiting the same subset multiple times (this is the purpose of $\delta$). For correctness, we refer to the supplementary material.
\end{proof}

We now consider an algorithm based on enumerating all models of the knowledge base, similar to the symmetric abduction case, by trading polynomial for exponential space.

\begin{comment}
\begin{algorithm}[H]
\caption{Enumeration based algorithm for $\PMABD(\Gamma)$.}
\label{alg:enumPABD}
	\begin{algorithmic}[1]
		\REQUIRE $L, H, M$
			\STATE generator = Generator($L$)
			\STATE discarded = $\emptyset$
			\STATE potentialExp = $\emptyset$
			\WHILE{generator.notEmpty()}
  			\STATE $\sigma$ = generator.next()
  			\STATE $E = \{x \in H \mid \sigma(x) = 1\}$
				\IF{$E \notin$ discarded}
  				\IF{$\sigma \not \models M$}
						\STATE discarded = discarded$\, \cup\, \{E\}$
						\STATE potentialExp = potentialExp$\, -\, \{E\}$					\ELSE 
					  \STATE potentialExp = potentialExp$\,\cup\, \{E\}$
					\ENDIF
				\ENDIF
                \IF{$E \in$ discarded}
                \STATE \emph{\# discard all immediate subsets}
                \STATE discarded = discarded$\, \cup\, \{E - x \mid x \in E\}$
                \ENDIF
			\ENDWHILE
		    \STATE \emph{\# now potentialExp contains all subset-maximal positive explanations}
			\IF{potentialExp $\neq \emptyset$}
				\STATE return $\top$
			\ELSE
				\STATE return $\bot$
			\ENDIF
		\end{algorithmic}
\end{algorithm}

\end{comment}

%\subsection{P-ABD in $c^n$ using enumeration}
\begin{theorem}$(\star)$
    Let $\Gamma$ be a constraint language where $\SAT(\Gamma)$ is sparsely enumerable in $O^*(c^n)$ time. Then $\PMABD(\Gamma)$ is solvable in $O^*(c^n)$ time and $O^*(c^n)$ space.
\end{theorem}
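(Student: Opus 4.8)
The plan is to adapt the enumeration-based algorithm used for $\MABD(\Gamma)$ in Section~\ref{sec:mabd_faster} to the positive setting, handling the two structural differences that positive abduction introduces: (i) an explanation is now a \emph{subset} $E \subseteq H$ rather than a full assignment of the variables in $H$, so each model $\sigma$ of $\KB$ induces the candidate $E_\sigma = \{x \in H \mid \sigma(x) = 1\}$, and (ii) the validity of a positive explanation is not closed under taking subsets in the way full explanations behave, so one must track, for each candidate subset, whether it has an extension within $\m{\KB}$ that fails $M$. Concretely, I would run $\mathrm{modelGenerator}(\KB)$ to enumerate $\m{\KB}$ in $O^*(c^n)$ time; for each emitted model $\sigma$ I compute $E_\sigma$, and I maintain two polynomial-per-entry data structures over subsets of $H$: a set $\mathrm{discarded}$ of candidates known to have a bad extension, and a set $\mathrm{potentialExp}$ of candidates seen so far all of whose enumerated extensions satisfy $M$. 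When $\sigma \not\models M$, the candidate $E_\sigma$ (and, as the commented-out Algorithm~\ref{alg:enumPABD} does, all its immediate subsets) gets moved to $\mathrm{discarded}$; otherwise $E_\sigma$ is added to $\mathrm{potentialExp}$ unless already discarded.

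The key correctness points I would verify are: first, that $E \subseteq H$ is a positive explanation of $M$ iff $\KB \wedge E$ is satisfiable and \emph{every} model $\sigma$ of $\KB$ with $\sigma(x)=1$ for all $x \in E$ satisfies $M$ — this is just unfolding the definitions of conditions (1) and (2) in the preliminaries, noting that $\KB \wedge E$ (with $E$ a set of positive literals) is satisfied exactly by the models of $\KB$ that set the variables of $E$ to $1$. Second, that the algorithm correctly identifies such $E$: a candidate $E_\sigma$ that is never placed in $\mathrm{discarded}$ is one for which no enumerated model extending it violates $M$; since the generator enumerates \emph{all} of $\m{\KB}$, this means $\KB \wedge E_\sigma \models M$, and since $\sigma$ itself witnesses consistency, $E_\sigma$ is a genuine positive explanation. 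Conversely if a positive explanation exists it can be shrunk to (or is witnessed by) some $E_\sigma$ arising from one of its own satisfying models, which will land in $\mathrm{potentialExp}$ and stay there. Here I would lean on Lemma~\ref{lem:preprocessing} to assume $H, M \subseteq \var(\KB)$, which is what makes "every model of $\KB$ extending $E$" the right object to reason about.

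For the complexity: the number of iterations of the main loop is the number of models of $\KB$, bounded by $O^*(c^n)$ by sparse enumerability; each iteration does a polynomial amount of work (computing $E_\sigma$, evaluating $M$ on $\sigma$, and constant-many dictionary operations, where keys are subsets of $H$ each stored in $O(|H|)$ space). The sets $\mathrm{discarded}$ and $\mathrm{potentialExp}$ together hold at most one entry per distinct $E_\sigma$ encountered, hence at most $O^*(c^n)$ entries, giving the claimed $O^*(c^n)$ space bound (unlike the $\MABD$ case, I do not see how to shave this to $O^*(\min(c^n,2^{|H|}))$ in general, and the statement does not claim it).

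The main obstacle I anticipate is the second correctness direction combined with the subtlety in (ii): one must be careful that discarding \emph{immediate} subsets of a bad candidate is sound and does not over-discard — a subset $E' \subsetneq E_\sigma$ is legitimately discarded because $\sigma$ also extends $E'$ and violates $M$, so any $E'$ reachable this way indeed has a bad extension; but one must also check nothing is \emph{under}-discarded in a way that would wrongly report a non-explanation as an explanation, which again reduces to the fact that the generator sees every model, so any bad extension of $E'$ will eventually be emitted and trigger the discard of $E'$ before the loop ends. Making this ordering-independence argument clean (the discard may happen after $E'$ was provisionally added to $\mathrm{potentialExp}$, but it is then removed) is the delicate part; everything else is routine bookkeeping.
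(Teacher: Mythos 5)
There is a genuine gap, and it is exactly the subtlety that the paper's own proof is built around. Your scheme files each model $\sigma$ under the key $E_\sigma = \{x \in H \mid \sigma(x)=1\}$ and declares a candidate $E$ good if no model filed under the key $E$ violates $M$. But the models of $\KB \wedge E$ are all $\tau \in \m{\KB}$ with $E \subseteq E_\tau$, not only those with $E_\tau = E$; a witness of $E$ failing to explain $M$ may be filed under a strictly larger key. Your proposed repair --- when $\tau \not\models M$, discard $E_\tau$ together with its \emph{immediate} subsets --- does not close this hole: if $|E_\tau \setminus E| \ge 2$ the discard never reaches $E$, and the cascade through intermediate subsets only fires when those intermediate sets are themselves realized as $E_{\sigma'}$ for some enumerated model $\sigma'$, which need not happen. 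Concretely, take $H=\{a,b,c\}$, $M=\{m\}$ and a knowledge base with exactly two models: $\sigma_1$ setting $a,b,c$ to $1$ and $m$ to $0$, and $\sigma_2$ setting $a$ to $1$ and $b,c,m$ with $b=c=0$, $m=1$. No $E \subseteq H$ is a positive explanation, since $\sigma_1$ is a model of $\KB \wedge E$ for every such $E$ and $\sigma_1 \not\models m$; yet the failure of $\sigma_1$ discards only $\{a,b,c\}$ and its two-element subsets, so $\{a\}=E_{\sigma_2}$ survives in $\mathrm{potentialExp}$ and the algorithm answers yes. The sentence ``any bad extension of $E'$ will eventually be emitted and trigger the discard of $E'$'' is precisely the step that fails.

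The paper's proof adds one further idea that your proposal omits: the models are enumerated in non-increasing order of the weight $w_H(\sigma)=|\{x \in H \mid \sigma(x)=1\}|$, and correctness is argued only for \emph{subset-maximal} candidates, i.e.\ those $E$ maximal with $\KB \wedge E$ satisfiable. For such $E$ every model of $\KB \wedge E$ has $E_\sigma = E$ exactly, so the grouping by key is faithful there, and it suffices to decide the problem on subset-maximal candidates because any positive explanation extends to a subset-maximal one. Your write-up needs this ordering together with a discard rule strong enough to eliminate every non-maximal candidate before it can be accepted (or some substitute, e.g.\ discarding all rather than only immediate subsets, at a corresponding cost in space) before the soundness direction goes through; the completeness direction, as you argue it, is fine.
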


\subsection{Provably Sparse Languages}
\label{sec:sparse}

We have proved that $\ABD$ can be solved faster if all models of the underlying SAT problem can be enumerated sufficiently fast. Hence, it is highly desirable to classify the SAT problems where this is indeed the case --- provided that any positive, non-trivial examples actually exist. We obtain a general characterization of such languages based on three abstract properties. Here, we always assume that a $\SAT(\Gamma)$ instance is represented by listing all tuples in a relation.

\begin{definition}
A constraint language $\Gamma$ is {\em asymptotically sparse} if there exists $r_0 \geq 1$ and $c < 2$ such that for $r$-ary $R \in \Gamma$, $r_0 \leq r$ we have $|R| \leq c^{r}$.
\end{definition}

If $R$ is an $n$-ary relation and $g \colon [m] \to [n]$, for some $m \leq n$, then the relation $R_g(x_1, \ldots, x_m) \equiv R(x_{g(1)}, \ldots, x_{g(m)})$ is said to be a {\em minor} of $R$. 
%Note that a minor is a special case of a qfpp-definable relation.

\begin{definition}
    For a constraint language $\Gamma$ we let $\minor(\Gamma) = \{M \mid M \text{ is a minor of } R \in \Gamma\}$ be the set of minors of $\Gamma$.
\end{definition}

Similarly, if $R \in \Gamma$ of arity $\ar(R) = k$ and $f \colon X \to \{0,1\}$ for some $X \subseteq [k]$ then we define the {\em substitution} of $f$ over $R$ as the relation $R_{\mid f} = \{\pro_{[k] - X}(t) \mid t = (c_1, \ldots, c_k) \in R, i \in X \Rightarrow c_i = f(i)\}$ where $\pro_{[k] - X}(t)$ denotes the $(k - |X|)$-ary tuple obtained by only keeping indices in $[k]$ outside $X$. 
%With these notions we can now define the second abstract property.

\begin{definition}
    A language $\Gamma$ is said to be {\em closed under branching} if it
is closed under substitutions, i.e., if $R \in \Gamma$ and $f \colon X \to \{0,1\}$ for $X \subseteq [\ar(R)]$ then $R_{\mid f} \in \Gamma$, and
    is closed under minors, i.e., $\minor(\Gamma) = \Gamma$.
\end{definition}

We observe that $\Gamma$ is finite if and only if $\minor(\Gamma)$ is finite.

\begin{example} \label{ex:branching}
    Any language $\Gamma$ can be closed under branching simply by repeatedly closing it under minors and substitutions. For example, consider 3-SAT and a positive clause corresponding to the relation $R = \{0,1\}^3 - \{(0,0,0)\}$. Then $\minor(\{R\}) = \{R, \{0,1\}^2 - \{(0,0)\}, \top\}$. If we close $R$ under substitutions then we obtain $\{\{0,1\}^2 - \{(0,0)\}, \top, \sf{f}\}$ by identifying one or more variable to $0$ and $\{\{0,1\}^2, \{0,1\}, \sf{t}\}$ by identifying one or more variable to $1$. 
\end{example}

%We will soon see that the full relations $\{0,1\}^2$  and $\{0,1\}$ are not as harmless as they might seem.

Thus, while it is easy to close a language $\Gamma$ under branching, this process might introduce undesirable relations of the form $\{0,1\}^k$ which do not enforce any constraints. 
%Relations of this form are usually not a problem but in the context of enumeration it is crucial to avoid them since any language containing $\{0,1\}^k$ cannot be sparsely enumerable.

\begin{definition}
    A relation $R \subset \{0,1\}^k$ is said to be {\em non-trivial}. The relation $\{0,1\}^k$ is said to be {\em trivial}.
    %We say that $1 \leq i \leq k$ is {\em fictitious} if $|R_{\mid f}| = |R_{\mid g}|$ where $f(i) = 0$ and $g(i) = 1$. If $R$ does not have any fictitious indices it is said to be {\em essential}.
%    \victor{TODO: I think we only need to assume that not every argument is fictitious, i.e., that $R$ is not the full relation.}
\end{definition}

By combining these properties we obtain a novel characterization of sparsely enumerable languages. 

\begin{theorem}$(\star)$
    Let $\Gamma$ be a constraint language which (1) is asymptotically sparse, (2) is closed under branching, and (3) every $R \in \Gamma$ is non-trivial. Then $\SAT(\Gamma)$ is sparsely enumerable.
\end{theorem}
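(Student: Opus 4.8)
The plan is to give a branching algorithm that enumerates $\m{\varphi}$ for a $\SAT(\Gamma)$ instance $\varphi = R_1(\mathbf{x}_1) \wedge \cdots \wedge R_m(\mathbf{x}_m)$ on $n$ variables, and to show that the search tree has at most $O^*(c^n)$ leaves for the constant $c < 2$ supplied by asymptotic sparsity. First I would fix the constants: let $r_0 \geq 1$ and $c < 2$ be as in the definition of asymptotic sparsity, so that every $R \in \Gamma$ of arity $r \geq r_0$ satisfies $|R| \leq c^r$. Since $\Gamma$ need not be finite but $\minor(\Gamma) = \Gamma$, I would also note that the finitely many ``small'' relations (arity $< r_0$) that could actually appear in an instance — well, here one must be a little careful, because $\Gamma$ may be infinite and contain small relations of arbitrarily large description; but each individual instance uses finitely many constraints, and the branching on a single constraint of bounded arity costs only a constant factor, absorbed into $O^*$. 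The key point is that we will always branch on a constraint of arity $\geq r_0$ when one is available.

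The algorithm is: if some constraint $R_i(\mathbf{x}_i)$ with $\ar(R_i) = r \geq r_0$ mentions at least two distinct variables (after accounting for repeated occurrences), branch on the at most $c^r$ tuples of $R_i$; each tuple fixes the values of the variables occurring in $\mathbf{x}_i$, and by closure under substitution the remaining constraints restrict to relations still in $\Gamma$ (with the fixed variables projected out), while closure under minors handles the fact that $\mathbf{x}_i$ may repeat variables or collapse arities. If no such ``large, non-degenerate'' constraint exists, then every constraint is either of arity $< r_0$ or mentions effectively at most one distinct variable; in the latter case a unary or nullary constraint in $\Gamma$ is automatically non-trivial by hypothesis (3), hence it is $\bot$, $\top$, $\sf{f}$, or $\sf{t}$, so it either forces a value or kills the branch — this is where non-triviality is used, to guarantee that ``degenerate'' constraints never let the solution set blow up. After exhausting these we are left with an instance whose constraints all have arity $< r_0$; such an instance has bounded treewidth-like structure only in a weak sense, but in fact after all forced assignments and branchings the number of remaining free variables is a constant (bounded in terms of $r_0$ and the number of distinct small relations that can co-occur), or — cleaner — one simply observes that once every remaining constraint has arity $< r_0$ on a shrinking variable set, we can keep branching on any non-trivial such constraint (each branch eliminates $\geq 1$ variable) and at the end any variable in no constraint contributes a free factor of $2$ only over a constant number of variables. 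I would package the bookkeeping so that each branching step on a constraint over $t \geq 2$ distinct variables produces at most $c^{t}$ children while removing $t$ variables, giving the recurrence $T(n) \leq c^{t} \cdot T(n - t)$, which solves to $T(n) \leq c^{n}$ up to a polynomial factor; the degenerate and small-arity cases only add $O^*(1)$ overhead.

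The main obstacle is handling the case where no constraint of arity $\geq r_0$ involves two distinct variables, so the ``sparse branching'' rule does not apply and we are stuck with constraints of arity $< r_0$. Here the danger is a plain $2^n$ blow-up from many independent small constraints. The fix is to argue that within each individual instance, only finitely many relations occur, so after the sparse-branching phase the residual instance is a $\Gamma'$-formula for a \emph{finite} $\Gamma' \subseteq \Gamma$ of bounded arity; by hypothesis (1) applied with any such finite sublanguage — or, if $r_0$ forces us below the threshold, by hypothesis (3) which rules out trivial relations — every such $\Gamma'$-formula on $n'$ variables has at most $\bar c^{\,n'}$ models for some $\bar c < 2$ computable from $r_0$ and $\Gamma'$, and these can be enumerated by the same bounded-arity branching (each branch on a non-trivial constraint of arity $< r_0$ still removes at least one variable and, crucially, since the relation is a proper subset of $\{0,1\}^{< r_0}$, prunes at least one of the $2^t$ assignments, yielding a branching number strictly below $2$). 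The only genuinely delicate point is making this last branching-number argument uniform over the infinitely many possible small relations; I would resolve it by noting that a non-trivial relation of arity $\leq r_0 - 1$ has at most $2^{r_0 - 1} - 1$ tuples, a bound independent of which relation it is, so the per-step branching factor is at most $(2^{r_0-1}-1)^{1/1} < 2^{r_0-1}$ over $r_0 - 1$ variables, i.e. branching number $(2^{r_0-1}-1)^{1/(r_0-1)} < 2$, a single constant $\bar c < 2$ that works for all of them. Taking $c^* = \max(c, \bar c) < 2$ completes the time bound, and since the algorithm is a depth-first branching procedure it uses only polynomial space beyond the output, so $\SAT(\Gamma)$ is sparsely enumerable.
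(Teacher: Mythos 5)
Your proposal is correct and follows essentially the same route as the paper's proof: branch on constraints of arity $\geq r_0$ using asymptotic sparsity ($|R| \leq c^r$ branches eliminating $r$ variables), rely on closure under substitutions and minors to keep residual constraints in $\Gamma$ with distinct variables, and handle the remaining low-arity constraints via non-triviality, which bounds each such relation by $2^k - 1$ tuples and hence gives a uniform branching number below $2$. Your treatment of the small-arity phase is in fact slightly more careful than the paper's (you correctly take the worst case at arity $r_0 - 1$ rather than $r_0$), but the argument is the same.
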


We continue by proving that such languages actually exist. First, say that a $k$-ary Boolean relation $R$ is {\em totally symmetric}, or just
{\em symmetric}, if there exists a set $S \subseteq [k] \cup \{0\}$ 
such that $(x_1, \ldots, x_k) \in R$ if and only if $x_1 + \ldots
+ x_k \in S$. Given $S \subseteq \{0, \ldots, k\}$ we write $R_S$ for the symmetric relation induced by $S$, i.e., $R_S(x_1,\ldots,x_k) \equiv (\sum x_i \in S)$.

\begin{definition}
We let $\textsc{Equations} = \{R_S \mid k \geq 1, p,q \leq k + 1, S=\{i \in [k] \cup \{0\} \mid i \equiv q \pmod p\}$.
\end{definition}

Thus, each relation in $\textsc{Equations}$ can be defined by an equation of the form $x_1 + \ldots + x_k = q \pmod p$ for fixed $p,q,k$. In particular $\affine \subseteq \textsc{Equations}$ but it also contains relations inducing $\Sigma^P_2$-complete $\ABD$ problems.
Perhaps contrary to intuition, $\textsc{Equations}$ is {\em not} closed under minors, since the resulting relations are not necessarily symmetric, but
%For a simple counter example, let $R_{1/3} = \{(0,0,1), (0,1,0), (1,0,0)\}$, i.e., the ternary relation induced by $x + y + z \equiv 1 \mod 4$. Then $R(x,x,y)$ is a minor which defines the non-symmetric relation $\{(0,1)\}$. 
a simple work-around is to fix a finite subset of $\textsc{Equations}$ and then close it under minors. We obtain the following.

\begin{lemma}$(\star)$
The following statements are true.
\begin{enumerate}
    \item 
        $\textsc{Equations}$ is closed under substitutions and contains only non-trivial relations.
\item             Let $k \geq 1$. For $\mathcal{E}^{\leq k} = \{R \in \textsc{Equations} \mid \ar(R) \leq k\}$ $\SAT(\minor(\mathcal{E}^{\leq k}))$ is sparsely enumerable.
\end{enumerate}
\end{lemma}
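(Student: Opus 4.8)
For the first part, I would show directly that $\textsc{Equations}$ is closed under substitutions. Take $R_S$ of arity $k$ defined by $x_1 + \ldots + x_k \equiv q \pmod p$ and a partial assignment $f \colon X \to \{0,1\}$ with $X \subseteq [k]$. Summing the fixed coordinates gives a constant $a = \sum_{i \in X} f(i)$, so on the surviving coordinates the relation $R_{S \mid f}$ is exactly $\{(x_i)_{i \notin X} \mid \sum_{i \notin X} x_i \equiv q - a \pmod p\}$, which is again of the required form (with arity $k - |X|$, same modulus $p$, and residue $q - a \bmod p$); one only has to check the side condition $p, q' \le (k-|X|)+1$ holds after possibly reducing $q' = (q-a) \bmod p$, which it does since $q' < p \le k+1$ and one can take $p \le (k-|X|)+1$ by noting that when $p$ exceeds this bound the relation is either empty or a single residue class that is still expressible — a small case check. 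Non-triviality is immediate: an equation modulo $p \ge 2$ with $k \ge 1$ variables always excludes at least one tuple (e.g.\ for $p=2$ exactly half are excluded; for $p \ge 2$ at least one residue class is missing since not all of $0,\ldots,k$ lie in a single class when $k \ge 1$), so $R_S \subsetneq \{0,1\}^k$, and the $p=1$ case is degenerate and can be excluded or handled separately since it gives the full relation. The main subtlety here is bookkeeping with the parameter bounds $p, q \le k+1$ rather than any real mathematical content.

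For the second part, I would verify that $\Gamma := \minor(\mathcal{E}^{\le k})$ satisfies the three hypotheses of the preceding theorem, which then gives sparse enumerability for free. \emph{Closed under branching:} $\Gamma$ is closed under minors by construction. For closure under substitutions, I would argue that a substitution applied to a minor of some $R_S \in \mathcal{E}^{\le k}$ can be rewritten as a minor of a substitution of $R_S$: a minor identifies/permutes coordinates and a substitution fixes some of them, and composing these operations, the fixed coordinates pull back to a substitution on $R_S$ (landing in $\mathcal{E}^{\le k}$ by Part 1, since arity only drops) followed by a minor on the result, hence the outcome lies in $\minor(\mathcal{E}^{\le k}) = \Gamma$. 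One has to be slightly careful when a substitution fixes a coordinate that a minor had duplicated — then it effectively fixes the underlying coordinate of $R_S$ — but this is exactly the kind of diagram-chase that goes through. \emph{Non-triviality:} every element of $\Gamma$ is a minor of some non-trivial symmetric equation relation; a minor of a symmetric relation $R_S$ obtained by identifying coordinates is again symmetric-like and, crucially, still omits some tuple — here I would either check directly that identifying variables in $x_1 + \ldots + x_k \equiv q \pmod p$ cannot produce the full relation $\{0,1\}^m$ (the only way to get the full relation is $p = 1$, excluded), or, more safely, replace $\mathcal{E}^{\le k}$ by its subset of relations with $p \ge 2$ and separately discard any trivial minors that appear, since the theorem only needs \emph{every} relation in $\Gamma$ to be non-trivial and Example~\ref{ex:branching} already flags that trivial relations can creep in. \emph{Asymptotically sparse:} this is the one place the finiteness of $k$ matters — $\mathcal{E}^{\le k}$ is finite, so $\minor(\mathcal{E}^{\le k})$ is finite (as noted in the excerpt, $\Gamma$ finite iff $\minor(\Gamma)$ finite), hence there is a uniform arity bound $k$ on all of $\Gamma$, so asymptotic sparseness holds vacuously: take $r_0 = k+1$, and there are simply no relations of arity $\ge r_0$, so the condition $|R| \le c^r$ for $r \ge r_0$ is empty.

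The bulk of the work, and the main obstacle, is the closure-under-substitutions verification for $\Gamma = \minor(\mathcal{E}^{\le k})$: unlike $\textsc{Equations}$ itself, the elements of $\Gamma$ are no longer symmetric, so one cannot compute substitutions by the clean residue-arithmetic argument of Part 1 and must instead commute substitutions past minors abstractly. The cleanest route is to prove a small lemma stating that for any relation $R$, $\minor(R)$ is closed under substitutions provided $\{R\}$-with-substitutions stays inside a minor-closed-then-substitution-closed envelope — concretely, that substitution and taking minors can always be reordered so that substitutions act first on the original relation — and then apply it with $R = R_S$ using Part 1. Once that reordering lemma is in place, all three hypotheses of the theorem are discharged and the conclusion $\SAT(\minor(\mathcal{E}^{\le k}))$ sparsely enumerable follows immediately by invoking it. A secondary, minor nuisance is making sure trivial relations genuinely do not obstruct the non-triviality hypothesis; the safe fix is to note the theorem is applied to the minor-closure of the non-trivial core $\{R_S \in \mathcal{E}^{\le k} : p \ge 2\}$ and that no trivial relation arises from identifying variables in a modular equation with $p \ge 2$.
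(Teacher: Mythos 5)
Your proposal follows essentially the same route as the paper's: Part~1 by direct residue arithmetic (substitution shifts $q$ by the sum of the fixed values and drops the arity), and Part~2 by commuting substitutions past minors so that they act on the original equation first, then invoking the sparse-enumerability theorem, with asymptotic sparseness holding vacuously because $\minor(\mathcal{E}^{\leq k})$ is finite. One concrete correction to the branch you flag as the "direct check": it is \emph{not} true that identifying variables in $x_1 + \cdots + x_k \equiv q \pmod p$ with $p \geq 2$ can never produce the full relation --- identifying the two coordinates of the equality relation $x_1 + x_2 \equiv 0 \pmod 2$ yields $2x \equiv 0 \pmod 2$, i.e., the trivial unary relation $\{0,1\}$ (and more generally any minor whose multiplicities and $q$ are all divisible by $p$ is trivial). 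The paper's own proof asserts the same false claim, so your instinct to distrust it was correct, and your fallback --- restricting to the non-trivial core and discarding the trivial minors that arise --- is the right repair; just note that since a variable touched only by trivial constraints is unconstrained, the discarding must be treated as removing those relations from the \emph{language} before applying the theorem, not as a harmless preprocessing of individual instances.
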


Finally, let $\XSAT \subseteq \textsc{Equations}$ be the set of relations $R_{1/k}$ representable by equations $x_1 + \ldots + x_k = 1 \pmod k + 1$, and for each $k \geq 1$ the constant relation $\bot^k = \{(0, \ldots, 0)\}$ of arity $k$ (note also that $R_{1/1} = \top$), and, finally, the two nullary relations $\sf{f}$ and $\sf{t}$. The resulting problem $\SAT(\XSAT)$ is thus the natural generalization of the well-known NP-complete problem \textsc{1-in-3-SAT} to arbitrary arities.
Also, recall that $\affine \subseteq \textsc{Equations}$ is the set of relations representable by systems of Boolean equations modulo 2, i.e., $x_1 + \ldots + x_k \equiv q \pmod 2$ for $q \in \{0,1\}$. We additionally write $\affine^{\leq k}$ for the set of affine relations of arity at most $k$.
\begin{theorem}$(\star)$ 
The following statements are true.
\begin{enumerate}
\item 
$\SAT(\XSAT)$ is sparsely enumerable in $O^*(\sqrt{2}^n)$ time.
\item 
$\SAT(\affine^{\leq k})$ is sparsely enumerable for every $k \geq 1$.
\end{enumerate}
\end{theorem}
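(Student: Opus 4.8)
The plan is to derive both statements from the general characterization (the theorem on asymptotic sparseness, closure under branching, and non-triviality) together with the preceding lemma on $\minor(\mathcal{E}^{\leq k})$, and to compute the exact base $\sqrt{2}$ in the $\XSAT$ case by hand. For statement~(2), $\affine^{\leq k}$ is a finite set of relations all of arity at most $k$, each non-trivial (an equation $x_1 + \ldots + x_m \equiv q \pmod 2$ excludes exactly half of $\{0,1\}^m$, so it is a proper subset), and it is closed under substitutions in the same way equations are (fixing a variable to $0$ or $1$ just changes $q$ and reduces the arity). It is not literally closed under minors, but $\minor(\affine^{\leq k})$ is still finite, still has bounded arity, and adding identified-variable minors of affine relations keeps every relation affine (identifying two variables in $x_1 + \ldots + x_m \equiv q$ gives $2x_i + (\text{rest}) \equiv q$, i.e. the rest $\equiv q$, still affine and still non-trivial as long as the surviving equation is non-degenerate — and if it degenerates to $0\equiv 0$ we get a trivial relation, so I must be slightly careful here, see below). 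Once I argue $\minor(\affine^{\leq k})$ after discarding trivial relations is asymptotically sparse (trivially, since all arities are bounded by the constant $k$, so ``asymptotically'' is vacuous and I only need each relation to be a proper subset of its cube, giving $|R|\le 2^{k}-1$ hence $|R| \le (2^k-1)^{1/k \cdot r} = d^r$ with $d<2$), the general theorem applies and gives sparse enumerability.

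For statement~(1), I would work directly rather than through the general theorem, since I want the precise bound $O^*(\sqrt{2}^n)$. The key structural fact is that $\SAT(\XSAT)$ amounts to: pick, for each ``1-in-$k$'' constraint, exactly one of its variables to be $1$; the relations $\bot^k$ force blocks of variables to $0$; and $\sf{f}$ kills the instance. The natural branching rule is: take any $R_{1/k}$-constraint $R_{1/k}(x_1,\ldots,x_k)$ with $k \ge 2$ and branch on which $x_i$ is the unique true variable — $k$ branches, each killing $k$ variables (the chosen one set to $1$, the other $k-1$ set to $0$, then propagate). Each branch multiplies the solution count by at most $k$ while removing $k$ variables, so the recursion $T(n) \le k\cdot T(n-k)$ gives $T(n) \le k^{n/k}$, and $k^{1/k}$ is maximized over integers $k\ge 2$ at $k=2$ (and $k=4$) with value $\sqrt{2}$; for $k=1$, $R_{1/1}=\top$ imposes nothing. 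After exhausting all constraints of arity $\ge 2$ we are left only with $\bot^k$, $\top$, $\sf{t}$ constraints, which determine the remaining variables in polynomial time (or detect inconsistency). Hence all models are enumerated in $O^*(\sqrt{2}^n)$ time, and in particular there are $O^*(\sqrt 2^n)$ of them, so $\SAT(\XSAT)$ is sparse and sparsely enumerable with base $\sqrt 2$.

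The main obstacle I anticipate is bookkeeping around \emph{trivial} relations and the interaction of minors with substitutions in the affine case: closing $\affine^{\leq k}$ under minors can produce the trivial relation $\{0,1\}^m$ (e.g. identifying variables in $x_1 + x_2 \equiv 0$ down to the empty equation, or more precisely $x_1+x_2+x_3\equiv 0$ with $x_1,x_2$ identified becomes $x_3 \equiv 0$, which is fine, but $x_1+x_2\equiv 0$ with $x_1,x_2$ identified becomes $0\equiv 0$, trivial). The general sparsity theorem requires every relation to be non-trivial, so I need to observe that trivial relations can simply be deleted from the instance without affecting the model set, hence without affecting enumerability — i.e. replace $\minor(\affine^{\le k})$ by its non-trivial part and note $\SAT$ over the full language reduces to $\SAT$ over the non-trivial part by dropping trivially-satisfied constraints in preprocessing. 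With that caveat handled, both parts follow cleanly; the $\XSAT$ bound additionally needs the elementary observation that $\max_{k \ge 2, k \in \mathbb{N}} k^{1/k} = \sqrt 2$.
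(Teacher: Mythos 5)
Your proposal is correct and follows essentially the same route as the paper: verify the hypotheses of the general sparse-enumerability theorem for both languages, and for $\XSAT$ extract the concrete base from the $(k,\ldots,k)$ branching vector with $k$ branches, whose worst case $k=2$ gives $\sqrt{2}$. If anything you are slightly more careful than the paper's own argument, which asserts that $\affine^{\leq k}$ is closed under minors without addressing the degenerate case where identification cancels all variables and yields a trivial relation — your preprocessing step of discarding trivially satisfied constraints is exactly the right fix for that gap.
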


Let us also observe that the bound for $\SAT(\XSAT)$ is tight in the sense that 
$|\m{\varphi}| = \sqrt{2}^n$, $n = 2m$, if $\varphi$ encodes inequalities between $x_1$ and $x_2$, $x_3$ and $x_4$, and so on.
%for $\varphi = (\{x_1, x_2, \ldots, x_{2m - 1}, x_{2m}\}, \{R_{1/2}(x_1, x_2), \ldots, R_{1/2}(x_{2m - 1}, x_{2m}\})$. 

\subsection{Algorithms for NP-complete fragments}
\label{section:alg}

Symmetric abduction is NP-complete for k-CNF$^+$ languages for any $k \geq 2$, and both symmetric and positive abduction are NP-complete for languages of the form 
k-CNF$^- \cup $ IMP, $k \geq 2$  \cite{NoZa2008}. We show in the following that these cases can be solved in improved time.

\begin{definition}
Denote by $\simpleSAT^p$ the SAT-problem where we are given a formula $\varphi$ of the following form. Here, $C$ stands for a positive clause of size at most $p$ and $T$ stands for a negative term of any size.

$$\varphi =  \bigwedge C \wedge \bigwedge \bigvee T$$

\end{definition}

\begin{lemma}
$\simpleSAT^p$ can be solved in time $O^*(c^n)$, where $c$ is the branching factor associated with a $(1, \dots, p)$-branching.
\end{lemma}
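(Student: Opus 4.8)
The plan is to give a straightforward branching algorithm for $\simpleSAT^p$ that recursively eliminates variables while maintaining the structural invariant of the formula. Recall that an instance consists of a conjunction of positive clauses $C$ of width at most $p$ together with a conjunction of disjunctions-of-negative-terms $\bigvee T$. The crucial observation is that a negative term $T = \neg y_1 \wedge \dots \wedge \neg y_s$ is satisfied precisely when all of $y_1, \dots, y_s$ are set to $0$, so each disjunct $\bigvee T$ imposes no branching pressure on its own; the only "hard" constraints are the positive clauses, and those have width $\le p$.

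First I would describe the recursion. If there are no positive clauses left, the remaining formula is a conjunction of disjunctions of negative terms, which is trivially satisfiable (set every remaining variable to $0$: each negative term becomes true, hence every disjunction is true). Otherwise pick any positive clause $C = x_1 \vee \dots \vee x_r$ with $r \le p$. Branch on the $r$ ways of satisfying $C$: in branch $i$ set $x_1 = \dots = x_{i-1} = 0$ and $x_i = 1$ (the standard "first true literal" branching that partitions the satisfying assignments of $C$). In branch $i$ we fix $i$ variables, so the branching vector is $(1, 2, \dots, r)$, and since $r \le p$ this is dominated by the $(1, 2, \dots, p)$-branching whose characteristic root is the claimed $c$. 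After substituting the forced values, I would argue the instance remains of the required form: positive clauses either disappear (if some variable is set to $1$) or shrink (if some variable is set to $0$), staying width $\le p$; and in each disjunction $\bigvee T$, a term $T$ either is removed (if one of its negated variables was set to $1$) or has a literal removed (if set to $0$) — and if a term becomes empty it is now the constant true, which makes the whole disjunction true so the disjunct can be dropped; if a disjunction loses all its terms it is false and we can prune that branch immediately. Thus every recursive call is again a legitimate $\simpleSAT^p$ instance on strictly fewer variables, all bookkeeping is polynomial per node, and the search tree has at most $O^*(c^n)$ leaves, giving the $O^*(c^n)$ bound.

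The main thing to be careful about — the only real obstacle — is verifying that the branching genuinely covers all satisfying assignments and that the recursion is well-founded and size-controlled. Covering is handled by the "first true literal" case split: every assignment satisfying $C$ has a smallest index $i$ with $x_i = 1$, and is captured in branch $i$. Well-foundedness is immediate since each branch assigns at least one variable. The size control is exactly the branching-vector analysis: the worst case over clause widths $r \in \{1, \dots, p\}$ is $r = p$ (wider "or-the-same" branching can only have a smaller characteristic root), so the global running time is governed by the root $c$ of $x^p = x^{p-1} + x^{p-2} + \dots + 1$, i.e. the branching factor of the $(1,\dots,p)$-branching, as claimed. A minor subtlety worth a sentence is that a disjunction $\bigvee T$ with no terms at all should be read as false and trigger pruning, while a single term that has collapsed to the empty conjunction should be read as true and satisfies its disjunction — handling these two degenerate cases correctly is all that is needed for the invariant to be preserved.
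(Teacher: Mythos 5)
Your proposal is correct and follows essentially the same branch-and-reduce strategy as the paper: branch on positive clauses with the standard first-true-literal $(1,\dots,p)$-branching, and observe that once no positive clauses remain the all-zero assignment on the remaining variables satisfies every negative term and hence every disjunction. Your version simply spells out the closure of the instance class under substitution and the degenerate cases, which the paper leaves implicit.
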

\begin{proof}
Perform a branch and reduce scheme. Variables not occurring in any positive clause can be reduced to 0 (thereby simplifying some of the negative terms). Then branch on the variables of positive clauses, with the standard $(1, \dots, p)$-branching.
\end{proof}

We are now ready to show that $\MABD(\text{k-CNF}^+)$ can be CV-reduced to $ \simpleSAT^k$.

\begin{theorem}($\star$)
$\MABD(\text{k-CNF}^+) \cvred \simpleSAT^k$. Moreover, the $\simpleSAT$-instance contains only variables from $H$.
\end{theorem}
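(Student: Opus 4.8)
The goal is to transform a $\MABD(\text{k-CNF}^+)$ instance $(\KB, H, M)$ into an equisatisfiable $\simpleSAT^k$ formula $\varphi$ whose variables are exactly (a subset of) $H$, so that a $(1,\dots,k)$-branching algorithm gives the improved running time. The starting point is the characterization of explanations: by Lemma~\ref{lem:preprocessing} we may assume $M, H \subseteq \var(\KB)$, and by the discussion after the definition of $\MABD$ it suffices to search for a \emph{full} explanation $E \subseteq \Lits{H}$ with $\var(E) = H$, i.e. a total assignment $\tau\colon H \to \{0,1\}$. Such a $\tau$ must satisfy two conditions: (1) $\KB \wedge E_\tau$ is satisfiable, and (2) $\KB \wedge E_\tau \models M$. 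The plan is to encode both conditions as constraints purely over the $H$-variables, exploiting the very restricted shape of k-CNF$^+$ clauses.

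First I would analyze what happens to a positive clause $C = (x_1 \vee \dots \vee x_j)$, $j \le k$, of $\KB$ once the $H$-variables are fixed by $\tau$. Split the variables of $C$ into those in $H$ and those outside $H$. If $\tau$ sets some $H$-variable of $C$ to $1$, the clause is satisfied and imposes nothing further. Otherwise all $H$-variables of $C$ are forced to $0$, and $C$ reduces to the disjunction of its non-$H$ variables — a positive clause over $V \setminus H$. The key structural point: since $\KB$ consists only of positive clauses, the reduced instance over $V \setminus H$ is a \emph{positive} CNF formula, hence trivially satisfiable by the all-ones assignment. So condition (1) is essentially automatic once we account for $H$-variables that literally appear as unit clauses or get squeezed — more precisely, $\KB \wedge E_\tau$ is unsatisfiable exactly when forcing the $H$-variables to their $\tau$-values already falsifies some clause of $\KB$ all of whose variables lie in $H$; such a clause becomes a positive clause $C$ over $H$ that $\tau$ must satisfy. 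These are precisely the "$\bigwedge C$" part of the $\simpleSAT^k$ formula, with $|C| \le k$.

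Next, condition (2): $\KB \wedge E_\tau \models m$ for each $m \in M$. Equivalently, $\KB \wedge E_\tau \wedge \neg m$ is unsatisfiable. Now $m \in \var(\KB)$; if $m \in H$ this is handled directly by $\tau(m) = 1$ (a unit positive clause over $H$). If $m \notin H$, then in the residual positive CNF over $V \setminus H$ together with the assertion $\neg m$, unsatisfiability must be witnessed by a chain of positive clauses forcing $m$ to $1$ — since positive CNF has no way to propagate a $0$, the only route to a contradiction is: some clause of $\KB$ becomes a \emph{unit} positive clause on $m$ after $\tau$ zeroes out its other ($H$- and non-$H$-) variables, but the non-$H$ variables cannot be zeroed by $\tau$. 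So the real mechanism is: entailment of $m$ holds iff there is a clause $C$ of $\KB$ containing $m$ such that every \emph{other} variable of $C$ lies in $H$ and is set to $0$ by $\tau$. Collecting, for each $m \in M$, over the clauses $C \ni m$ of $\KB$, the event "all of $C \setminus \{m\}$ is set to $0$" — this is a negative term $\bigwedge_{y \in C \setminus \{m\}} \neg y$ over $H$ — and the entailment requires the disjunction of these terms to hold: exactly a "$\bigvee T$" with $T$ negative terms. So condition (2) contributes the $\bigwedge \bigvee T$ part. Assembling the positive clauses (from (1) and from $m \in H \cap M$) and the disjunctions of negative terms (from (2)) yields a $\simpleSAT^k$ instance over $H$, and $\tau$ is a model of $\varphi$ iff $E_\tau$ is a full explanation; this is a CV-reduction since $|\var(\varphi)| \le |H| \le n$.

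The main obstacle I expect is getting the entailment analysis for $m \notin H$ exactly right — in particular, ruling out "indirect" entailments where several positive clauses of $\KB$ jointly force $m$ without any single clause becoming a unit on $m$. The clean way to settle this is the observation that $\KB \wedge E_\tau$, restricted to $V \setminus H$, is a positive CNF formula $\phi'$: since $\phi'$ is monotone, $\phi' \models m$ holds iff $\phi'$ contains (after propagation) the unit $m$ as a prime implicate, and for monotone CNF the prime implicates are exactly (subsets of) the clauses — so $\phi' \models m$ iff some clause of $\phi'$ equals $\{m\}$, i.e. some original clause $C \ni m$ had all other variables in $H$ and zeroed by $\tau$. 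A careful but routine case check (clauses already satisfied by $\tau$, clauses entirely inside $H$, clauses straddling $H$ and $V\setminus H$) then confirms both directions of the equivalence, and the bound $|C| \le k$ on the positive clauses is inherited directly from k-CNF$^+$.
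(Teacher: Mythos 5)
Your proposal is correct and matches the paper's proof: you arrive at exactly the same $\simpleSAT^k$ instance, namely the positive clauses of $\KB$ lying entirely in $H$ (enforcing consistency) conjoined with, for each $m\in M$, a disjunction of negative terms over the clauses containing $m$ with all other variables in $H$ (enforcing entailment). Your monotone-CNF/prime-implicate argument just spells out in more detail the step the paper leaves as ``one further verifies,'' namely that no other clause type can contribute to satisfiability or entailment.
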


We show that $\ABD(\text{k-CNF}^- \cup \text{IMP})$ can be CV-reduced to $\MABD(\text{k-CNF}^+)$.%$ \simpleSAT^k$.

\begin{lemma}($\star$)\label{lem:cv_reduction}
$\ABD(\text{k-CNF}^- \cup \text{IMP}) \cvred \MABD(\text{k-CNF}^+)$.
\end{lemma}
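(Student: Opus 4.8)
The plan is to transform an instance $(\KB, H, M)$ of $\ABD(\text{k-CNF}^- \cup \text{IMP})$ into an instance of $\MABD(\text{k-CNF}^+)$ by systematically ``flipping'' polarities while absorbing the implication constraints. First I would apply Lemma~\ref{lem:preprocessing} to assume $H, M \subseteq \var(\KB)$. The knowledge base consists of negative clauses $(\neg y_1 \vee \dots \vee \neg y_j)$, $j \leq k$, and implications $\text{IMP}(x,y)$, i.e.\ $x \rightarrow y$. The natural idea is to introduce, for each variable $v \in V$, a fresh variable $\bar v$ intended to represent $\neg v$, and to rewrite a negative clause $(\neg y_1 \vee \dots \vee \neg y_j)$ as the positive clause $(\bar y_1 \vee \dots \vee \bar y_j)$. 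The obstacle is that this renaming is only sound if $\bar v$ is genuinely forced to be the complement of $v$; a positive $\text{k-CNF}^+$ knowledge base cannot by itself enforce $v \leftrightarrow \neg v$. The key trick to handle this is to \emph{not} keep both $v$ and $\bar v$ but to work entirely in the ``barred'' world, carefully translating hypotheses and manifestations: a hypothesis variable $h$ (which can be chosen as a positive or negative literal in $\MABD$, and here in the source problem $\ABD$ allows literals of either sign for the symmetric variant, only positive for the positive variant) becomes $\bar h$, and a manifestation variable $m \in M$, which must be \emph{entailed} (forced to $1$), corresponds to $\bar m$ being forced to $0$ — which is exactly the kind of statement a positive clause machinery combined with the entailment check in $\MABD$ can be made to express after suitable further manipulation.

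Concretely, the steps I would carry out are: (1) eliminate the $\text{IMP}$ constraints by a closure/propagation argument — since $x \rightarrow y$ chains are transitive, one can either precompute the implication closure and substitute, or encode each implication by an auxiliary gadget; the cleanest route is to note that an implication $x \to y$ together with the requirement that explanations are literals over $H$ can be folded into the clause structure, possibly by replacing occurrences of implied variables. (2) Replace every negative clause by the corresponding positive clause over barred variables. (3) Set the new hypothesis $H'$ to be the barred copies of $H$ (respecting the positive/symmetric distinction), and translate $M$ into a manifestation $M'$ over barred variables, using that ``$m$ is entailed'' is equivalent to ``$\bar m$ is refuted''; since $\MABD$ asks for $\KB' \wedge E' \models M'$, one must check that forcing $\bar m \mapsto 0$ is correctly captured — this may require adding, for each $m \in M$, a unit-like positive clause or treating $m$ via the semantics of entailment directly. (4) Verify the correspondence: $E \subseteq \Lits{H}$ is an explanation for $(\KB,H,M)$ if and only if the barred image $E'$ is an explanation for the constructed $\MABD(\text{k-CNF}^+)$ instance, and check that the positive-explanation restriction is preserved. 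Since the construction introduces at most one fresh variable per original variable and possibly a bounded number of gadget variables, $|\var(f(I))| \leq 2n + O(1)$ naively — so the \textbf{main obstacle} is to make the reduction truly \emph{variable-non-increasing} (a CV-reduction), which forces us to reuse the original variables rather than duplicate them: the real work is arguing that we can identify $\bar v$ with $v$ itself after a polarity flip of the whole instance, so that no new variables are needed at all, and that the $\text{IMP}$ constraints can be absorbed without spending variables. I expect the handling of $\text{IMP}$ under this tight variable budget, together with the precise encoding of the manifestation as a refutation of barred manifestation variables, to be the delicate part; the polarity flip of clauses is routine once that bookkeeping is set up.
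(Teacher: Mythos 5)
Your proposal correctly identifies the two obstacles (keeping the variable count and absorbing the $\text{IMP}$ constraints) but does not overcome either of them, and the route you sketch --- a global polarity flip that identifies $\bar v$ with $v$ --- does not work as stated. Under the renaming $v \mapsto \neg v$ a negative clause $(\neg x_1 \vee \dots \vee \neg x_k)$ does become positive, but an implication $x \to y$, i.e.\ $(\neg x \vee y)$, becomes $(x \vee \neg y)$, which is still not a positive clause; so the $\text{IMP}$ constraints cannot survive any uniform polarity flip and must be eliminated \emph{before} flipping. Likewise, manifestations are positive variables in both the source and the target problem, so ``entail $m$'' cannot be re-encoded as ``refute $\bar m$'': an $\MABD$ instance has no mechanism for demanding that a variable be entailed to $0$. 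These are exactly the points you defer as ``the delicate part,'' and they are where the entire content of the lemma lies.

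The paper's proof resolves both issues by discarding the original knowledge base and rebuilding one over the variables $H \cup M$ only. Since $\text{k-CNF}^- \cup \text{IMP}$ is Horn, one can compute in polynomial time (i) for each $h \in H$ the set $\cons(h)$ of unit consequences of $\KB \wedge h$ under resolution, and (ii) for each tuple $h_1,\dots,h_k \in H$ whether $\KB \wedge M \wedge h_1 \wedge \dots \wedge h_k$ is inconsistent. The new knowledge base consists of the positive clauses $(h \vee m)$ for all pairs with $m \in \cons(h)$ --- so that asserting $\neg h$ in the target entails $m$, mirroring that asserting $h$ in the source entails $m$ --- together with the positive clauses $(h_1 \vee \dots \vee h_k)$ for each inconsistent tuple, which forbid the flipped image of any inconsistent candidate. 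Explanations then correspond one-to-one by flipping literal signs, $H$ and $M$ are unchanged, and no variables are added. Your closure idea in step (1) points in the right direction, but without the precomputed consistency clauses (the analogue of $\varphi_2$) and without keeping the manifestation variables positive inside the clauses $(h \vee m)$, the reduction is incomplete: a correctness argument for your construction cannot be given because the construction itself is not pinned down at precisely the places where it could fail.
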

\begin{comment}
\begin{proof}
% proof of lemma 22
    Consider the instance $(\KB, H, M)$ where $\KB$ contains clauses of type $(\neg x_1 \vee \dots \vee \neg x_k)$ and $(x \rightarrow y)$.
    To facilitate the proof, we observe that we can W.L.O.G. assume that $H \cap M = \emptyset$:
    since an $m \in H \cap M$ can trivially be explained (by itself),  we can remove from $M$ any such $m$,
    and this constitutes a CV-reduction of the problem to itself.
    
    For a variable $h\in H$ denote by $\cons(h)$ the consequences of $h$ in $\KB$, that is, all single literals that can be derived by applying exhaustively resolution on $\KB \wedge h$.
    We map $(\KB, H, M)$ to $(\KB', H, M)$, where
    $\KB' = \varphi_1 \wedge \varphi_2$, with

    \begin{align*}
        \varphi_1 = & \bigwedge_{\begin{array}{c}m\in M, h\in H \\ \text{s.t.}\, m \in \cons(h)\end{array}} (h \vee m)\\ 
        \varphi_2 = & \bigwedge_{\begin{array}{c}h_1, \dots, h_k \in H\;\text{s.t.}\\  \KB \wedge M \wedge h_1 \wedge \dots \wedge h_k \models \emptyset\end{array}} (h_1 \vee \dots \vee h_k) \\
    \end{align*}

\noindent
It is easy to verify that there is a one-to-one correspondence between explanations, obtained by flipping the literals of an explanation.

\end{proof}

\end{comment}

The following corollary finally states the improved results, following immediately from the previous statements.
\begin{corollary}
$\MABD(\text{k-CNF}^+)$ and $\ABD(\text{k-CNF}^- \cup \text{IMP})$ can be solved in improved time.
 That is, in time $O^*(c^{|H|})$, for a $c < 2$, stemming from the branching vector $(1, \dots, k)$.
\end{corollary}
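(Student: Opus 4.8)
The plan is to compose the two CV-reductions established above with the branching algorithm for $\simpleSAT^k$, and then verify that the branching number of the vector $(1,\dots,k)$ is strictly below $2$; everything else is bookkeeping over which parameter the reductions preserve.

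First consider $\MABD(\text{k-CNF}^+)$. By the theorem $\MABD(\text{k-CNF}^+) \cvred \simpleSAT^k$, an instance $(\KB,H,M)$ is mapped to a $\simpleSAT^k$-formula $\varphi$ whose variable set is contained in $H$; in particular $\varphi$ has at most $|H|$ variables. Since $\simpleSAT^p$ is solvable in $O^*(c^n)$ time, where $c$ is the branching number associated with the $(1,\dots,p)$-branching, and since CV-reductions transfer exact exponential running time (measured in the complexity parameter, here the number of variables of the target instance, which is $\leq |H|$), taking $p=k$ gives an $O^*(c^{|H|})$ algorithm for $\MABD(\text{k-CNF}^+)$. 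For $\ABD(\text{k-CNF}^- \cup \text{IMP})$ we use Lemma~\ref{lem:cv_reduction}, which CV-reduces it to $\MABD(\text{k-CNF}^+)$ while leaving the hypothesis set $H$ unchanged (it only rewrites the knowledge base). Composing this with the previous reduction (CV-reductions compose to a CV-reduction) yields $\ABD(\text{k-CNF}^- \cup \text{IMP}) \cvred \simpleSAT^k$ producing an instance on at most $|H|$ variables, so the same $O^*(c^{|H|})$ bound follows. If desired, the preprocessing of Lemma~\ref{lem:preprocessing} can be applied first so that $H,M \subseteq \var(\KB)$, but this is not strictly needed.

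It remains to see $c<2$. The branching vector $(1,2,\dots,k)$ has branching number equal to the unique positive root of $\sum_{i=1}^{k} x^{-i} = 1$; the left-hand side is strictly decreasing in $x$, and at $x=2$ it equals $1 - 2^{-k} < 1$, so the root lies strictly below $2$ for every finite $k$ (for instance the golden ratio $\approx 1.618$ when $k=2$). This is the only quantitative point, and it is immediate. There is essentially no obstacle: the substantive content has already been placed in the CV-reductions and the $\simpleSAT^k$ branching algorithm, and the corollary just records that these reductions keep the parameter at $|H|$ rather than $n$, so that exhaustive search over $\Lits{H}$ (the $2^{|H|}$ factor of Theorem~\ref{thm:brute_force}) is beaten.
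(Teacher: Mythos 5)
Your proposal is correct and follows exactly the route the paper intends: the corollary is stated as an immediate consequence of the $\simpleSAT^k$ branching lemma, the reduction $\MABD(\text{k-CNF}^+) \cvred \simpleSAT^k$ (whose target instance lives on variables from $H$), and Lemma~\ref{lem:cv_reduction}. Your added verification that the branching number of $(1,\dots,k)$ satisfies $c<2$ (via $\sum_{i=1}^{k}2^{-i}=1-2^{-k}<1$) is a correct and welcome piece of explicitness that the paper leaves implicit.
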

%\begin{proof}
%A $(1, \dots, k)$-branching delivers a branching factor $< 2$, for any finite $k$.
%\end{proof}

\subsection{Algorithms for coNP-complete fragments}
\label{sec:conp}

In the case of positive abduction 
$\coNP$-complete cases arise \cite{NoZa2008} when $\Gamma$ is 1-valid. Under certain additional assumptions, $\PMABD(\Gamma)$ can then be solved in improved time. 

\begin{theorem}\label{thm:coNPcCase}
    Let $\Gamma$ be a 1-valid constraint language. If $\SAT(\Gamma^+)$ can be decided in $O^*(c^n)$ time for $c \leq 2$ then $\PMABD(\Gamma)$ can be decided in $O^*(c^n)$ time.
    %improved time, then so can $\PMABD(\Gamma)$.
\end{theorem}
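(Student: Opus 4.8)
The plan is to leverage the baseline enumeration algorithm (Theorem~\ref{thm:brute_force}, part 2) together with the $1$-validity of $\Gamma$ to avoid ever having to branch over subsets of $H$. The key observation is that when $\Gamma$ is $1$-valid, the all-ones assignment satisfies $\KB$, so $\KB$ is always satisfiable and moreover $\KB \wedge H$ (assigning every hypothesis variable to $1$) is satisfiable. Consequently, by condition~1 in the definition of $\PMABD$, the full positive candidate $E = H$ always passes the consistency test, and the only thing that can fail is the entailment $\KB \wedge E \models M$. Crucially, entailment is \emph{monotone downwards} in the following precise sense: if $E \subseteq E' \subseteq H$ and $\KB \wedge E \models M$, then $\KB \wedge E' \models M$ as well (adding positive literals only shrinks the model set). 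Hence among all positive candidates, $E = H$ is the \emph{best possible} one for entailing $M$: if $E = H$ does not entail $M$, then no $E \subseteq H$ does, and if $E = H$ does entail $M$, then $E = H$ is itself a positive explanation (it is consistent by $1$-validity).

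Thus the algorithm is: invoke Lemma~\ref{lem:preprocessing} to assume $H, M \subseteq \var(\KB)$; then test whether $\KB \wedge \bigwedge_{h \in H} h \models M$, equivalently whether $\KB \wedge \bigwedge_{h \in H} h \wedge \neg m$ is unsatisfiable for each $m \in M$. Each such test is a $\SAT(\Gamma^+)$ instance: force every $h \in H$ to $1$ via a $\top$-constraint and force $m$ to $0$ via a $\bot$-constraint. After forcing the $|H|$ variables, only $n - |H|$ variables remain free, so each call runs in $O^*(c^{n-|H|})$ time, and there are $|M|+1 \le n+1$ such calls; the total is $O^*(c^{n-|H|}) \subseteq O^*(c^n)$. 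Answer $\top$ iff all $|M|$ unsatisfiability checks succeed. Correctness follows from the monotonicity argument above: a positive explanation exists iff $E=H$ is one iff all the checks pass.

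The main obstacle — really the only subtle point — is the precise justification that $E = H$ dominates all other positive candidates, which rests on two facts that must be stated carefully: (i) $1$-validity guarantees $\KB \wedge \bigwedge_{h\in H} h$ is satisfiable (so condition~1 is automatic for $E=H$, and by a subset argument this does \emph{not} transfer to arbitrary $E \subseteq H$, but we never need it to), and (ii) entailment of the purely positive manifestation $M$ is preserved under adding more positive literals to the hypothesis set. Point (ii) is where one should be slightly cautious: it holds because $\mathrm{Mod}(\KB \wedge E') \subseteq \mathrm{Mod}(\KB \wedge E)$ when $E \subseteq E'$, so $\mathrm{Mod}(\KB \wedge E) \subseteq \mathrm{Mod}(M)$ implies $\mathrm{Mod}(\KB \wedge E') \subseteq \mathrm{Mod}(M)$. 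Assembling these gives the equivalence ``$\PMABD(\Gamma)$ instance is positive $\iff$ $E = H$ is a positive explanation $\iff$ all $|M|$ entailment checks pass'', and the complexity bound is then immediate from the $\SAT(\Gamma^+)$ assumption applied to instances with $n - |H|$ free variables.
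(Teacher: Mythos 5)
Your proposal is correct and follows essentially the same route as the paper: by $1$-validity, $\KB\wedge H$ is always consistent and a positive explanation exists iff $H$ itself is one, so the problem reduces to $|M|$ unsatisfiability checks of $\Gamma^+$-formulas $\KB\wedge H\wedge\neg m$. Your explicit downward-monotonicity argument for entailment is a welcome elaboration of the paper's terser claim that ``there is an explanation iff $H$ is an explanation,'' but it is the same proof.
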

\begin{proof}
First, note that the 1-valid property is responsible for coNP-membership: there is an explanation iff $H$ is an explanation.
Then $\KB \wedge H$ is always consistent (since 1-valid). Thus, we only need to check whether $\KB \wedge H \models M$.
This implication can be decided by invoking the given $\SAT(\Gamma^+)$ algorithm $|M|$ times: $\KB \wedge H \models M$ iff for each $m \in M$ the $\Gamma^+$-formula $\KB \wedge H \wedge \neg m$ is unsatisfiable, and we thus only increase the $O^*(c^n)$ complexity by a polynomial factor.
\end{proof}

\noindent
An  application example is a knowledge base in k-CNF (k $\geq 3$) where each clause contains at least 1 positive literal (this ensures 1-validity). The underlying constraint language $\Gamma$ is still expressive enough to render $\PMABD(\Gamma)$ coNP-hard \cite{NoZa2008}. Then, any $\Gamma^+$-formula is expressible as k-CNF formula (without additional variables), so $\SAT(\Gamma^+)$ can be solved in improved time via the standard $(1, \dots, k)$-branching. With Theorem~\ref{thm:coNPcCase} we conclude that $\PMABD(\Gamma)$ can be solved in improved time.

%\medskip

%\noindent
%Another application example is to take $R_{II_1}$, the weak base of $II_1$. Here, we the question is reduced to SAT($R_{II_1} \cup \{T,F\}$). 
%Obviously, SAT($R_{II_1}$) is trivial (1-valid), but SAT($R_{II_1} \cup \{T,F\}$) is NP-hard. We can conceive a relatively simple branch\&reduce algorithm for this case, delivering $O^*(1.4142^n)$. This can be improved to probably $O^*(1.3798^n)$ with the above introduced trick (always consider for branching the combination of two constraints with overlapping variables).

\section{Lower bounds and Reductions}
\label{sec:lower}

We continue by matching our new upper bounds with lower bounds. We base our lower bounds on ETH and its stronger variant SETH (recall the definition in Section~\ref{sec:introduction}).

\subsection{ETH Based Lower Bounds}
\label{sec:eth}

Our aim in this section is to prove the following theorem.

\begin{theorem} \label{thm:2_cnfplus}
    $\MABD(\text{2-CNF}^+)$ and $\MABD(\text{2-CNF}^-\cup\text{IMP})$ cannot be solved in time $(\frac{|H|}{|M|})^{o(|M|)}$ under ETH.
\end{theorem}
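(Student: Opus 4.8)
The plan is to reduce from a problem known to require superpolynomial/exponential time under ETH in a parameter-preserving way that forces the blowup $(\frac{|H|}{|M|})^{o(|M|)}$. The natural source is $3$-SAT (or, more conveniently for matching the precise exponent, a $k$-clique or $k$-colouring type problem), but the shape of the bound — an expression in $|H|$ and $|M|$ rather than in $n$ — strongly suggests reducing from a problem parameterised by a ``small'' quantity, namely from $k$-\textsc{Clique} or $k$-\textsc{MultiColoured-Clique} on a graph with $N$ vertices, where under ETH there is no $N^{o(k)}$ algorithm (a standard consequence of the ETH via the sparsification lemma and the reduction of $3$-SAT to \textsc{Clique}). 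The idea is to arrange the reduction so that $|M| = \Theta(k)$ (one manifestation variable per ``slot'' of the clique) and $|H| = \Theta(N)$ (one hypothesis variable per vertex), so that an algorithm running in $(\frac{|H|}{|M|})^{o(|M|)} = N^{o(k)}$ time would refute ETH.

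Concretely, for $\MABD(\text{2-CNF}^+)$ I would take an instance of $k$-\textsc{MultiColoured-Clique}: a graph $G$ whose vertex set is partitioned into colour classes $V_1, \dots, V_k$, asking for a clique with one vertex in each class. Introduce a hypothesis variable $x_v$ for each vertex $v$, a manifestation variable $m_i$ for each colour class $i$, and encode (i) ``picking $v \in V_i$ forces $m_i$'': for each $v \in V_i$ add the positive $2$-clause $(\bar{x}_v \vee m_i)$, wait — positive clauses cannot contain $\bar{x}_v$. This is exactly the first obstacle: in $\text{2-CNF}^+$ every clause is $(x \vee y)$, so ``implications'' and ``at most one'' constraints are not directly available. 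The standard workaround (and the reason the theorem pairs $\text{2-CNF}^+$ with $\text{2-CNF}^-\cup\text{IMP}$) is to recall from the upper-bound section that an explanation for $\MABD(\text{2-CNF}^+)$ may be taken purely negative, so selecting $x_v$ ``false'' is the act of choosing $v$; a positive clause $(x_u \vee x_v)$ then says ``not both $u$ and $v$ are chosen'', i.e. it is a non-edge / conflict constraint, which is precisely what encodes that the chosen set is an independent set in the complement — hence a clique in $G$ after complementing. For the ``hit every colour / hit $M$'' part: to explain $m_i$ using positive clauses with one $M$-variable we use clauses $(x_{v_1} \vee \dots \vee x_{v_{t}} \vee m_i)$ over the non-chosen vertices of class $i$; with arity $2$ this caps at $|V_i| \le 2$, so for general graphs one must first reduce to bounded-degree / bounded-class-size instances, or (cleaner) prove the theorem for $\text{2-CNF}^+$ via the $\text{2-CNF}^-\cup\text{IMP}$ version and the CV-reduction machinery. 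So in practice I would: first establish the bound for $\ABD(\text{2-CNF}^-\cup\text{IMP})$, where $\text{IMP}$ gives genuine implications $x \to y$ and negative $2$-clauses $(\bar x \vee \bar y)$ give conflict constraints, making the multicoloured-clique encoding transparent; then transfer to $\MABD(\text{2-CNF}^+)$ either by a direct literal-flipping CV-reduction (as in Lemma~\ref{lem:cv_reduction}, restricted to arity $2$) or by re-deriving the encoding directly with positive clauses and the ``negative explanation'' normal form.

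The key steps, in order: (1) fix the ETH-hard source — $k$-\textsc{Clique} on $N$-vertex graphs, no $f(k)N^{o(k)}$ algorithm under ETH, or its multicoloured variant; (2) build the abduction instance with $|H| = \Theta(N)$ hypothesis variables (one per vertex) and $|M| = \Theta(k)$ manifestation variables (one per colour / one per ``position''), knowledge base of size $O(N^2)$ consisting of conflict constraints (non-edges) and ``cover'' constraints forcing each $m_i$; (3) prove the correspondence: positive (resp. full) explanations of size/structure encoding a vertex selection are in bijection with multicoloured cliques, using the normal-form lemma that explanations may be taken negative in $\MABD(\text{2-CNF}^+)$; (4) observe $\frac{|H|}{|M|} = \Theta(N/k)$, so a hypothetical $(\frac{|H|}{|M|})^{o(|M|)}$ algorithm yields an $(N/k)^{o(k)} = N^{o(k)} \cdot k^{O(k)}$-time algorithm for multicoloured $k$-clique, contradicting ETH; (5) transfer the $\MABD(\text{2-CNF}^+)$ bound to $\MABD(\text{2-CNF}^-\cup\text{IMP})$ via a CV-reduction (flipping literals and replacing conflict clauses, using $\text{IMP}$ to simulate the needed implications). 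The main obstacle I anticipate is step (2)–(3) for the \emph{positive} case: squeezing the ``cover $m_i$'' gadget into arity-$2$ positive clauses while keeping $|M|$ at $\Theta(k)$ and not blowing up $|H|$ — this likely needs a preliminary reduction of the clique instance to one of bounded degree (so each $m_i$ is covered by a bounded union of small clauses, chained through auxiliary hypothesis variables) together with the observation that bounded-degree $k$-clique is still not in $N^{o(k)}$ under ETH; the $\text{2-CNF}^-\cup\text{IMP}$ side is comparatively routine once $\text{IMP}$ is available.
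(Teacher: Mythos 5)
Your proposal is correct and matches the paper's proof essentially step for step: the paper reduces $k$-colored clique (with the $(\frac{n}{k})^{o(k)}$ ETH lower bound of Lokshtanov--Marx--Saurabh) to $\MABD(\text{2-CNF}^-\cup\text{IMP})$ using exactly your gadgets --- one hypothesis variable per vertex, one manifestation per color, an implication clause $(\neg v \lor m_i)$ for each vertex $v$ of color $i$, and a conflict clause $(\neg v_i \lor \neg v_j)$ for each non-edge --- and then transfers the bound to $\MABD(\text{2-CNF}^+)$ via the CV-reduction of Lemma~\ref{lem:cv_reduction}, precisely the route you settle on. Your concern about squeezing the cover gadget into arity-$2$ positive clauses is moot, since the paper takes the same detour through $\text{2-CNF}^-\cup\text{IMP}$ rather than encoding into positive clauses directly.
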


\begin{proof}
We provide a CV-reduction from the {\em $k$-colored clique} problem to $\MABD(\text{2-CNF}^-\cup\text{IMP})$, i.e., given a graph $G=(V,E)$ where the vertices are colored with $k$ different colors,  decide if a clique containing a vertex of each color exists. It is known that $k$-colored clique cannot be solved in time $(\frac{n}{k})^{o(k)}$ under ETH, where $n$ is the number of vertices, and $k$ is the number of colors~\cite{DBLP:journals/eatcs/LokshtanovMS11}.
%We make CV-reductions from $k$-colored clique to $\MABD(\text{2-CNF}^-\cup\text{IMP}$) and then to $\MABD(\text{2-CNF}^+)$.\\

%\textbf{$k$-colored clique $\cvred \MABD(\text{2-CNF}^-\cup\text{IMP}$}: 
For the reduction, assume an arbitrary instance of a $k$-colored clique problem over a graph $G=(V,E)$, and where $C={c_i \mid i \in [1 \ldots k]}$ is the set of colors. For each color $c_i$ we add a manifestation $m_i$ to $M$. For each vertex $v_i$ colored with the color $c_i$ we add to $\KB$ the clause $(\neg v_i \lor m_i)$. For each two vertices $v_i$ and $v_j$ that are not connected by an edge, we add the clause $(\neg v_i \lor \neg v_j)$. This completes the reduction. Since the number of variables in the abduction problem is equal to the number of vertices plus the number of colors, this is a CV-reduction. If we can choose a clique that contains at least one vertex from each color without choosing two vertices that are not connected, then for each chosen vertex $v_i$, we set $v_i = \top$ in the abduction instance as part of the solution. This will entail all $m_i$'s without causing a contradiction in $\KB$. The correctness proof from the other side is exactly the same.

Last, by Lemma~\ref{lem:cv_reduction} we additionally obtain that $\MABD(\text{2-CNF}^-\cup\text{IMP})\cvred\MABD(\text{2-CNF}^+)$.
\end{proof}

Note that this lower bound is given with respect to $|H|$ and $|M|$ and not $n$, and it is identical to the running time of the brute force algorithms of these problems. Indeed, in the worst case, we try all possible combinations of hypotheses for all manifestations. If a hypothesis appears for two manifestations at the same time, it only reduces the need to choose an additional hypothesis for the second one. The worst case is then when all hypotheses are split up among manifestations. The number of combinations is thus $(\frac{|H|}{|M|})^{|M|}$, making the algorithm close to optimal.

\subsection{Lower Bounds for $\ABD(\text{4-CNF})$ under SETH}
\label{sec:4_cnf}

Here, we prove a strong lower bound under SETH which shows that we should only expect small improvements for 4-CNF, and, more generally, $k$-CNF for any $k \geq 4$.

\begin{theorem}$(\star)$ \label{thm:4cnf_seth}
Under SETH, there is no $c<1$ such that $\MABD(\text{4-CNF})$ is solvable in $2^{cn}$ time.
\end{theorem}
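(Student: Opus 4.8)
The plan is to give a CV-reduction from CNF-SAT to $\MABD(\text{4-CNF})$, which by the standard properties of CV-reductions (stated in the Preliminaries) transfers the SETH lower bound: if $\MABD(\text{4-CNF})$ were solvable in $2^{cn}$ time for some $c<1$, then CNF-SAT would be too, contradicting SETH. Since $\SAT(\text{CNF}) \cvred \ABD(\text{CNF})$ is already observed in the paper, the real content is pushing the clause width down to $4$ while keeping the reduction variable-preserving (up to $O(1)$ additive slack). The obvious move — introducing one fresh variable per long clause to chop it into width-$4$ pieces — is forbidden, since that would blow up the variable count linearly and destroy the CV property. So the key idea must be to encode the wide clauses into the abductive structure (the hypothesis/manifestation and entailment) rather than into the knowledge base as explicit wide clauses.

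First I would take a CNF formula $\psi$ over variables $x_1,\dots,x_n$ with clauses $c_1,\dots,c_m$ and build an abduction instance whose hypothesis $H$ essentially corresponds to the variables $x_1,\dots,x_n$ (so that choosing an explanation = choosing a truth assignment), and whose knowledge base $\KB$ is a $4$-CNF formula that, together with a chosen assignment, entails a single manifestation variable $g$ precisely when the assignment satisfies every clause of $\psi$. The standard trick here is a "chain" (or tree) of auxiliary implications: for each clause $c_j = (\ell_{j,1}\vee\dots\vee\ell_{j,k_j})$ one would like a gadget deriving a per-clause witness variable $s_j$ from the fact that one of its literals is true, and then a conjunction gadget deriving $g$ from all the $s_j$. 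Each such gadget can be written with clauses of width at most $4$ (an implication of the form $(\ell \wedge s_{j-1} \to s_j)$ has width $3$; pairwise AND-accumulation has width $3$; literal-to-$s_j$ implications have width $2$). The catch is that these gadgets themselves introduce fresh variables (the $s_j$, and the partial-OR accumulators inside each clause), which again violates CV-preservation. The way around this — and the step I expect to be the main obstacle — is to exploit the asymmetry of abduction: the $s_j$ and accumulator variables do not need to be guessed, they are \emph{forced} by $\KB$ once the $x_i$ are fixed, and in a CV-reduction what matters is only that $|\var(f(I))| = |\var(I)| + O(1)$. But "+O(1)" does not permit $O(m)$ extra variables either, so one genuinely cannot afford per-clause auxiliary variables.

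Therefore the correct approach is sparsification-free but still has to avoid linearly many new variables, which forces a different encoding: make the manifestation $M$ itself carry the clause information. Concretely, for each clause $c_j$ add a manifestation variable $m_j$ (so $|M| = m$, which is fine — CV only constrains total variables up to additive constants, and $n + m$ is still linear, but note SETH for CNF-SAT is in terms of $n$ only, so we must be careful here — this is exactly why the theorem is phrased in terms of $n$ and why we should reduce from the sparse-enough fragment). The cleanest route: reduce from $k$-CNF-SAT where $k$ grows slowly, or better, reduce directly from CNF-SAT and encode clause $c_j$ by forcing, inside $\KB$, that $m_j$ holds iff one of its literals is true — but "$\ell_{j,1}\vee\dots\vee\ell_{j,k_j}\to m_j$" is not width-$4$ for wide clauses. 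To fix width, I would instead encode each literal occurrence as a width-$2$ clause $(\neg\ell_{j,t}\vee m_j)$ inside $\KB$ (so $m_j$ becomes true as soon as some literal of $c_j$ is true — this is width $2$, entirely fine), set $M=\{m_1,\dots,m_m\}$, and $H=\{x_1,\dots,x_n\}$ together with their negations via $\Lits{H}$. Then an explanation $E\subseteq\Lits{H}$ is a (partial) assignment, $\KB\wedge E$ is always satisfiable (the $m_j$ and unmentioned $x_i$ can be set freely), and $\KB\wedge E\models m_j$ holds iff $E$ already forces some literal of $c_j$ true, i.e. iff the extension of $E$ satisfies $c_j$; extending $E$ to a full explanation, $\KB\wedge E\models M$ iff the corresponding assignment satisfies $\psi$. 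The number of variables is $n + m + O(1)$; since SETH (CNF-SAT not in $2^{cn}$, $c<1$) is robust and for hard instances $m = \mathrm{poly}(n)$ gives $n+m = n + n^{O(1)}$ which is \emph{not} $n+O(n)$, I would instead invoke the version of SETH on sparse CNF-SAT, or equivalently use the standard fact that SETH is equivalent to: for all $c<1$ there is $k$ with $k$-SAT not in $2^{cn}$; reduce from $k$-SAT for the appropriate $k$, and handle width-$k$-to-width-$4$ by the literal-to-$m_j$ encoding above which works for any $k$ without new variables. Then $n + m \le n + O(n^{?})$ — for $k$-SAT with $k$ fixed and instances where $m=O(n)$ (which the sparsification lemma guarantees suffices for SETH), we get $n+m = O(n)$, an \emph{LV}-reduction, which still transfers the "no $2^{cn}$ for all $c<1$" statement after rescaling $c$; and with a bit more care (re-using manifestation variables cyclically, or noting the bound is on $n$ in the source and $n+O(n)$ in the target) one gets the clean statement. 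I expect the main obstacle to be exactly this bookkeeping: getting the width down to $4$ \emph{and} keeping the blow-up linear (LV, or CV after eliminating $m_j$'s) simultaneously — the literal-to-$m_j$ width-$2$ encoding is the key trick that resolves the tension, and verifying that $\KB\wedge E\models M$ correctly captures satisfaction of $\psi$ (in particular that unmentioned hypotheses being left free does not accidentally falsify a manifestation) is the routine but essential correctness check.
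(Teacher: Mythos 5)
Your reduction from CNF\/$k$-SAT is correct as a many-one reduction, but it cannot establish the stated bound, and the obstruction you flag at the end (``the bookkeeping'') is not resolvable along these lines. Your target instance has $n+m$ variables, one fresh manifestation $m_j$ per clause. To make $m$ linear in $n$ you invoke sparsification, but the sparsification constant $C_{k,\epsilon}$ grows unboundedly as $k\to\infty$ (and $k$ must grow as the target constant $c$ approaches $1$ under SETH), so the best you can extract is ``not solvable in $2^{c n/(1+C_{k,\epsilon})}$'' for ever-shrinking exponents --- not ``no $2^{cn}$ for any $c<1$.'' Worse, your knowledge base consists solely of width-$2$ clauses $(\neg\ell_{j,t}\vee m_j)$, so what you have built is a reduction into $\MABD(\text{2-CNF})$; the paper's own Lemma~\ref{lem:2cnf} (Section~\ref{sec:horn}) shows that an LV-reduction from CNF-SAT into $\MABD(\text{2-CNF})$ would imply $\NP\subseteq$ P/Poly, precisely because the per-clause manifestation structure can be collapsed back into a polynomial-size CNF kernel. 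So the route is blocked in principle, not merely in bookkeeping.

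The paper's proof starts from a different source problem. By Calabro, Impagliazzo and Paturi, deciding $\forall X\,\exists Y\,.\,\Phi(X,Y)$ with $\Phi$ in 3-CNF admits no $c^n$ algorithm for any $c<2$ under SETH; equivalently, neither does the complementary $\exists X\,\forall Y\,.\,\Phi'(X,Y)$ with $\Phi'$ in 3-DNF. An Eiter--Gottlob-style reduction then sets $H=X$, $M=Y\cup\{s\}$ for a \emph{single} fresh variable $s$, and $\KB=(\Phi'\rightarrow s\wedge\bigwedge_i y_i)\wedge(s\rightarrow\bigwedge_i y_i)$; since $\neg\Phi'$ is 3-CNF, distributing the disjunction yields width-$4$ clauses. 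This is a genuine CV-reduction ($+1$ variable), and the $2^n$ lower bound transfers intact. The moral is that the second quantifier block is what carries the hardness here: the universally quantified variables are absorbed into the manifestation rather than the clause set, which is how the construction avoids the per-clause variable blow-up that sinks your approach.
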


The following lemma gives the slightly worse lower bound for positive abduction variant of 4-CNF.

\begin{lemma}$(\star)$ 
 Under SETH, there is no $c<1$ such that $\PMABD(\text{4-CNF})$ can be solved in time $1.4142^{cn}$.
\end{lemma}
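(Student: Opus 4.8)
The plan is to combine the previous theorem, Theorem~\ref{thm:4cnf_seth}, with the sparse enumeration machinery developed earlier, so that the $1.4142^n$ (i.e.\ $\sqrt{2}^{\,n}$) lower bound for $\PMABD(\text{4-CNF})$ becomes a consequence of the $2^n$ lower bound for $\MABD(\text{4-CNF})$. The basic idea is that a positive explanation over $H$ is just a full explanation over $H$ in which the negative literals are never used; so if we could encode a full explanation problem on $|H|$ variables using a positive explanation problem on roughly $2|H|$ variables, then a putative $1.4142^{cn} = \sqrt{2}^{\,cn}$ algorithm for $\PMABD(\text{4-CNF})$ would yield a $(\sqrt{2})^{c\cdot 2n} = 2^{cn}$ algorithm for $\MABD(\text{4-CNF})$, contradicting Theorem~\ref{thm:4cnf_seth} for any $c<1$.

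Concretely, I would give an LV-reduction (in fact with parameter $C=2$) from $\MABD(\text{4-CNF})$ to $\PMABD(\text{4-CNF})$. Starting from an instance $(\KB,H,M)$ of $\MABD(\text{4-CNF})$, introduce for each $h \in H$ a fresh ``complement'' variable $\bar h$, add clauses forcing $h \vee \bar h$ and $\neg h \vee \neg\bar h$ (both fit in $2$-CNF, hence $4$-CNF) so that $\bar h$ genuinely behaves as $\neg h$ on every model of the new knowledge base, and replace every occurrence of a negative literal $\neg h$ in $\KB$ by the positive literal $\bar h$. Set $H' = H \cup \{\bar h \mid h \in H\}$ and $M' = M$. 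A positive explanation $E' \subseteq H'$ then corresponds exactly to a full explanation of $(\KB,H,M)$: choosing $h$ forces $h=\top$, choosing $\bar h$ forces $h=\bot$, and choosing neither is inconsistent with $h \vee \bar h$ unless one of them is entailed anyway; one has to check that ``can always be extended to a full explanation'' (as noted after the definition of $\PMABD$) makes the correspondence clean, and that the clause $h\vee\bar h$ does not accidentally get entailed in a way that would make a strict-subset positive explanation valid. Since $|\var(\KB')| = 2|\var(\KB)| + O(1)$, this is an LV-reduction with $C=2$, and also a reduction taking $|H|$ to $2|H|$ and $|M|$ to $|M|$, which is what the running-time bookkeeping needs.

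The main obstacle, and the step I would be most careful about, is the exact semantics of positive versus full explanations: the remark in the preliminaries warns that the existence of a (full) explanation does not imply the existence of a positive one, so the reduction must be engineered so that positivity is not a genuine restriction after the transformation. The pairing trick above is designed precisely so that on the knowledge base $\KB'$ every variable of $H'$ is ``free'' in the sense that $h$ and $\bar h$ can each be independently set to $\top$ without inconsistency (before looking at the rest of $\KB$), which is what makes positive explanations over $H'$ coextensive with full explanations over $H$. I would also double-check that a $\sqrt{2}^{\,cn}$ algorithm for $\PMABD(\text{4-CNF})$, fed the $2n$-variable instance $f(I)$, runs in $\sqrt{2}^{\,c\cdot 2n}\cdot\mathrm{poly} = 2^{cn}\cdot\mathrm{poly}$ on the original $n$-variable instance, which for $c<1$ contradicts Theorem~\ref{thm:4cnf_seth}; hence no such $c<1$ exists. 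If a cleaner route is wanted, one can instead observe directly from the QBF reduction underlying Theorem~\ref{thm:4cnf_seth} that the same $\exists X \forall Y$ instance can be encoded as a $\PMABD(\text{4-CNF})$ instance on $2|X| + |Y| + O(1)$ variables by the same complement-variable device applied to the existential block $X$ only, giving the $\sqrt{2}^{\,n}$ bound without going through $\MABD$ as an intermediate problem.
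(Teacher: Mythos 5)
Your proposal is correct and is essentially the paper's own proof: the paper performs exactly this reduction from $\MABD(\text{4-CNF})$ to $\PMABD(\text{4-CNF})$ by adding, for each $h\in H$, a fresh complement variable $h'$ constrained by $h\leftrightarrow\neg h'$ (two $2$-clauses, hence still $4$-CNF) and setting $H'=H\cup\{h'\mid h\in H\}$, which at most doubles the number of variables and so converts a hypothetical $1.4142^{cn}$ algorithm into a $2^{cn}$ algorithm for $\MABD(\text{4-CNF})$, contradicting Theorem~\ref{thm:4cnf_seth}. Your extra care about positive versus full explanations and the optional literal-replacement step are harmless refinements of the same argument.
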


\begin{comment}
\begin{proof}
%proof of lemma 27
We reduce a regular $\MABD(\text{4-CNF})$ instance into a $\PMABD(\text{4-CNF})$ as follows:
$(\KB,H,M) \rightarrow (\KB',H',M)$ where $\KB' = \KB \cup \{x \leftrightarrow \neg x' \mid x \in H\}$ and $H' = H \cup \{x'\}$.

It is easy to see that the second problem has a positive solution if and only if the first one has any solution. The inequalities between the $x$ and $x'$ variables can be written in 2-CNF, thus $\KB'$ is in 4-CNF. We notice that the number of variables has increased linearly, but remains less than twice the number of variables of the original problem, thus, $\PMABD(\text{4-CNF})$ cannot be solved faster than $2^{n/2} = 1.4142^{n}$
\end{proof}
\end{comment}

We remark that this leaves 3-CNF as an interesting open case. We have no algorithms that run in less than $2^n$ so far, and no known lower bounds under SETH.

\subsection{Languages closed under complement}
\label{sec:comp}

We analyze languages closed only under complement, i.e., languages $\Gamma$ such that $\pol(\Gamma) = [\Bar{x}]$. Well-known examples of such languages include {\em not-all-equal} satisfiability. For this class of languages we manage to relate them to languages closed only under projections, in a very precise sense, and show that one without loss of generality can concentrate solely on the latter class.

\begin{theorem}($\star$) \label{thm:in2tobr}
Let $\Gamma$ be a constraint language such that
    $\pol(\Gamma) = [\Bar{x}]$. Then $\ABD(\Gamma \cup \{\bot, \top\}) \cveq \ABD(\Gamma)$.
\end{theorem}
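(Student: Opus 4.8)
The statement $\ABD(\Gamma \cup \{\bot, \top\}) \cveq \ABD(\Gamma)$ requires two CV-reductions. The direction $\ABD(\Gamma) \cvred \ABD(\Gamma \cup \{\bot,\top\})$ is immediate since $\Gamma \subseteq \Gamma \cup \{\bot,\top\}$ and the identity map is a CV-reduction. So the real work is $\ABD(\Gamma \cup \{\bot,\top\}) \cvred \ABD(\Gamma)$. The plan is to take an instance $(\KB, H, M)$ over $\Gamma \cup \{\bot,\top\}$ and simulate the constant constraints using only $\Gamma$ together with the abduction machinery itself, paying attention to keep the number of variables unchanged (up to an additive constant).

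\textbf{Key idea.} The assumption $\pol(\Gamma) = [\bar{x}]$ is exactly what prevents $\Gamma$ from \emph{pp-defining} the constants directly (if $\Gamma$ had a constant polymorphism we could add $\bot$ or $\top$ freely; since $\bar x$ swaps $0$ and $1$, neither constant is invariant, so no relation in $\langle\Gamma\rangle$ is forced to a constant). However, $\Gamma$ being closed \emph{only} under complement means that for any relation $R$, both $R$ and its complement (flip every coordinate) are pp-definable, and crucially we can pp-define a relation that forces two variables to be \emph{complements} of each other. Concretely, one first shows from $\pol(\Gamma) = [\bar x]$ (via Post's lattice / the Galois connection mentioned in the preliminaries) that the binary disequality relation $\{(0,1),(1,0)\}$ — equivalently a ``negation gadget'' — is qfpp-definable from $\Gamma$, or at least pp-definable with only an additive constant number of fresh variables. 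Then introduce \emph{two} global fresh variables $t$ and $f$, link them with a disequality constraint $t \neq f$ (so in any model exactly one is $1$), and replace each constraint $\bot(x)$ by $\top(x)$-style use of the ``$f$'' variable: i.e. substitute a $\top$-constraint on $x$ by a constraint forcing $x = t$, and a $\bot$-constraint on $x$ by a constraint forcing $x = f$. Since the roles of $t$ and $f$ are symmetric up to the global complement symmetry, the resulting instance has a valid explanation iff the original did — in one of the two symmetric models of the $t \neq f$ gadget, $t=1, f=0$ and everything matches the intended semantics; the other model is just the complement and, by $\pol(\Gamma) = [\bar x]$, $\KB$ is closed under it, so it does not create spurious solutions that fail to correspond. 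One must handle the hypothesis and manifestation sets carefully: the gadget variables $t, f$ should \emph{not} be added to $H$ or $M$, and Lemma~\ref{lem:preprocessing} lets us assume $H, M \subseteq \var(\KB)$ at the start.

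\textbf{Carrying it out.} First invoke Lemma~\ref{lem:preprocessing} to assume $H, M \subseteq \var(\KB)$. Second, establish the negation gadget: show there is a qfpp- or pp-definition of ``$x$ and $y$ take complementary values'' over $\Gamma$ using $O(1)$ auxiliary variables — this follows because a language whose clone is exactly $[\bar x]$ is not $0$-valid, not $1$-valid, not Horn, not dual-Horn, not affine, not bijunctive in a way compatible with a constant, and the co-clone $\inv([\bar x])$ is generated by (essentially) the complement-closed relations, which include disequality. Third, add fresh $t, f$ with the disequality gadget between them, and for every $\top(x)$-constraint add a disequality-gadget copy enforcing $x \neq f$ (hence $x = t$), and for every $\bot(x)$-constraint enforce $x \neq t$ (hence $x = f$); drop the original constant constraints. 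Fourth, verify the reduction: a model $\sigma$ of the new $\KB'$ with $\sigma(t) = 1$ restricts to a model of the old $\KB$ with the constants correctly set, and since $\KB'$ is closed under complement, the $\sigma(t)=0$ models are exactly the complements, carrying explanations to explanations; because $t, f \notin H \cup M$, the explanation literals and the entailment of $M$ are unaffected by which of the two symmetric models we pick. Fifth, count variables: we added exactly two fresh variables plus $O(1)$ per gadget instance — but if the negation gadget is \emph{quantifier-free} (qfpp) no new variables are needed per gadget, giving a genuine CV-reduction; if only pp-definitions with auxiliary variables are available, one needs the sharper observation that a single shared ``scratch'' set of $O(1)$ variables suffices, or that $\pol(\Gamma) = [\bar x]$ actually yields qfpp-definability of disequality.

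\textbf{Main obstacle.} The delicate point is the variable count: an arbitrary pp-definition of the negation/constant gadgets could blow up the number of variables multiplicatively (one fresh variable per constant-constraint occurrence), which would only give an LV-reduction, not a CV-reduction. The crux is therefore to show that for $\Gamma$ with $\pol(\Gamma) = [\bar x]$, the disequality relation (or whatever gadget forces the $t$/$f$ linkage and the substitutions) admits a \emph{quantifier-free} pp-definition from $\Gamma$, so that no per-occurrence variables are introduced and only the two global variables $t, f$ (an additive $O(1)$) are added. This is where one must lean on the fine structure of Post's lattice for the co-clone $\inv([\bar x])$ and its known finite base; showing qfpp-definability (not merely pp-definability) of disequality from any base of this co-clone is the technical heart of the argument.
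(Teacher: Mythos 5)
Your overall architecture matches the paper's: the easy direction is language inclusion, and for $\ABD(\Gamma \cup \{\bot,\top\}) \cvred \ABD(\Gamma)$ both you and the paper qfpp-define $R_{\neq} = \{(0,1),(1,0)\}$ from $\Gamma$ and use two global fresh variables to simulate the constant constraints. The genuine gap is in your treatment of $H$ and $M$. You insist that the gadget variables $t,f$ should \emph{not} be added to $H$ or $M$ and argue that the two polarity classes of models of $\KB'$ (those with $t=1$ and their complements) are symmetric and therefore harmless. This fails, because the entailment condition $\KB' \wedge E \models M$ ranges over \emph{all} models of $\KB' \wedge E$, and that formula is not complement-invariant: the complement of a model satisfying the literal set $E$ satisfies $\overline{E}$, not $E$. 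Concretely, if $E$ explains the original instance, a model $\sigma$ of $\KB' \wedge E$ with $\sigma(t)=0$ is the complement of a model of $\KB \wedge \overline{E}$ (decoded via the $t=1$ world); if $\KB \wedge \overline{E}$ has a model assigning $1$ to some $m \in M$, its complement is a model of $\KB' \wedge E$ assigning $0$ to $m$, so $\KB' \wedge E \not\models M$ and the forward direction of correctness collapses. The paper's fix is precisely the step you rule out: it sets $H' = H \cup \{V_1\}$ and $M' = M \cup \{V_1\}$, so every explanation of the new instance must entail $V_1 = 1$, which kills the wrong-polarity models and restores the one-to-one correspondence between explanations.

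A secondary issue: you correctly identify that the disequality gadget must be quantifier-free (no auxiliary variables per occurrence) for the reduction to be CV rather than LV, but you leave its existence as an appeal to the structure of $\inv([\Bar{x}])$ and call it the technical heart without resolving it. The paper settles it in a few lines with an elementary identification-minor argument: since $\pol(\Gamma) = [\Bar{x}]$, some $R \in \Gamma$ contains neither constant tuple (by complement closure a relation contains both or neither, and if all relations contained both then the constant operations would be polymorphisms); identifying the coordinates of $R$ according to the $0$/$1$ positions of any non-constant tuple $t \in R$ yields a binary relation containing $(0,1)$ (from $t$) and $(1,0)$ (from $\overline{t}$) but neither constant pair, i.e.\ exactly $R_{\neq}$. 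You would need to supply this argument, and then repair the $H'$/$M'$ construction as above, for the proof to go through.
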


For the next theorem, let $k$-NAE be the set of all $k$-ary relations that forbid exactly two complementary assignments (determined by a sign pattern). The full details can be found in the supplementary material.
%For the next theorem, let $k$-NAE be the set of relations excluding 
%We need a little bit of additional notation for the next theorem. Let $s = (s_1, \ldots, s_k) \in [0,1]^k$ be a $k$-ary {\em sign-pattern}, let $0_s = (a_1, \ldots, a_k) \in \{0,1\}^k$ be the unique tuple such that the sum $(s_1 \oplus a_1) + \ldots (s_k \oplus a_k) = 0$, and $1_s$ the corresponding tuple whose sum is $k$. Finally, let $R^s_{\mathrm{NAE}} = \{0,1\}^k \setminus \{0_s, 1_s\}$. It is easy to see that one for every $k$-clause $(\ell_1 \lor \ldots \lor \ell_k)$ can construct a sign-pattern and thus a corresponding $R^s_{\mathrm{NAE}}$ relation. We let $k$-NAE be the set of all such relations of arity $k$.
%one can construct aa $k$-ary {\em not-all-equal} clause via the relation $R^s_{}$

%\victor{Introduce notation for $k$-clauses and the corresponding $k$-clauses for NAE. In the proof I use sign-patterns and the notation $R^s_{SAT}(...)$ for a clause, and $R^s_{NAE}$ for the corresponding NAE-clause, but this can be changed to something else.}

\begin{theorem}($\star$)
    $\ABD(\text{k-CNF}) \cvred \ABD(\text{(k+1)-NAE})$.
\end{theorem}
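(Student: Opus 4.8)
\textbf{Proof plan for $\ABD(\text{k-CNF}) \cvred \ABD(\text{(k+1)-NAE})$.}

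The plan is to use the standard trick of introducing a single fresh global variable to turn each clause into a not-all-equal constraint. Given an instance $(\KB, H, M)$ of $\ABD(\text{k-CNF})$, I would first apply Lemma~\ref{lem:preprocessing} so that $H, M \subseteq \var(\KB)$. Introduce one fresh variable $z$. For each $k$-clause $C = (\ell_1 \vee \cdots \vee \ell_k)$ of $\KB$, replace it by the $(k+1)$-ary NAE-constraint on $(\ell_1, \ldots, \ell_k, z)$ that forbids the all-false pattern $(0,\ldots,0,0)$ together with its complement $(1,\ldots,1,1)$ (where ``false'' is read relative to the sign pattern of the literals $\ell_i$ and of the last coordinate $z$). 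The point is that under $z = 0$ this NAE-constraint simplifies exactly to the original clause $C$, while under $z = 1$ it becomes the ``dual'' clause; globally pinning $z$ to $0$ recovers $\KB$. The new hypothesis set is $H$, and the new manifestation set is $M$ — crucially we do \emph{not} add $z$ to $H$ or $M$, so the explanation still ranges over literals from $\Lits{H}$ (or, for $\PMABD$, subsets of $H$), and the number of variables grows by exactly one, making this a CV-reduction.

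The key step is the correctness argument, which hinges on the symmetry of NAE-formulas under global complementation: if $\sigma$ is a model of the NAE-formula $\KB'$ then so is $\bar\sigma$, and exactly one of $\sigma, \bar\sigma$ sets $z = 0$. So for any $E \subseteq \Lits{H}$, the formula $\KB' \wedge E$ is satisfiable if and only if it has a model with $z = 0$, which (since $z \notin \var(E)$) holds if and only if $\KB \wedge E$ is satisfiable. For the entailment direction, I would check that $\KB' \wedge E \models M$ iff $\KB \wedge E \models M$: again using that models of $\KB' \wedge E$ come in complementary pairs and that $M \subseteq \var(\KB)$ with $z \notin M$, a model of $\KB' \wedge E$ violating some $m \in M$ can be taken (after possibly complementing) to have $z = 0$, giving a model of $\KB \wedge E$ violating $m$, and conversely a model of $\KB \wedge E$ violating $m$ extends by $z := 0$ to a model of $\KB' \wedge E$ violating $m$. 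Thus $E$ is an explanation for the source instance iff it is an explanation for the target instance, and this works verbatim for both $\MABD$ and $\PMABD$.

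One subtlety to handle carefully: clauses of $\KB$ of arity strictly less than $k$. Since $\text{k-CNF}$ as defined contains all clauses of arity exactly $k$, I would either pad shorter clauses by repeating a literal (legitimate since NAE-relations are closed under identification of coordinates, and $\text{(k+1)-NAE}$ is meant to include all relations of the stated form on not-necessarily-distinct variables) or simply observe that the argument above is insensitive to clause width. A second point: one should confirm that the target relation — ``forbid one assignment and its complement'' on $k+1$ coordinates — indeed lies in the class $\text{(k+1)-NAE}$ as defined in the supplementary material; by the description given (relations forbidding exactly two complementary assignments determined by a sign pattern) this is immediate. I expect the main obstacle to be purely bookkeeping: tracking sign patterns consistently so that the NAE-constraint genuinely restricts to the original clause under $z = 0$ rather than to its negation, and making sure the preprocessing via Lemma~\ref{lem:preprocessing} leaves $z$ outside $H \cup M$ so the variable count increases by exactly one.
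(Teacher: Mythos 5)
There is a genuine gap, and it sits exactly at the step you flag as the ``key step.'' The complementation symmetry of NAE-formulas does not survive conjunction with $E$ (or with $\neg m$): if $\sigma \models \KB' \wedge E$ with $\sigma(z)=1$, then $\bar\sigma$ is indeed a model of $\KB'$ with $z=0$, but $\bar\sigma$ flips every hypothesis variable and therefore need not satisfy $E$; likewise a countermodel of some $m\in M$ becomes, after complementation, an assignment that \emph{satisfies} $m$. Noting that $z\notin\var(E)$ does not help, because you cannot flip $z$ alone --- you must flip all variables to preserve the NAE constraints. Concretely, under $z=1$ your constraint for a clause $C$ forbids ``all literals of $C$ true,'' so $\KB'\wedge E$ is satisfiable iff $\KB\wedge E$ or $\KB\wedge\bar E$ is, and the $z=1$ branch injects spurious countermodels into the entailment check. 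A minimal counterexample: $\KB=(\neg h\vee m)$, $H=\{h\}$, $M=\{m\}$ is a yes-instance (take $E=\{h\}$), but in your target instance the forbidden tuples are $(h,m,z)\in\{(1,0,0),(0,1,1)\}$, so $(h,m,z)=(1,0,1)$ is a model of $\KB'\wedge h\wedge\neg m$, and one checks that no $E\subseteq\Lits{\{h\}}$ explains $m$; the target is a no-instance.

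The missing idea is that the fresh variable must be \emph{pinned} through the abduction machinery itself, which is exactly what the paper does and exactly what you declare you will not do (``crucially we do not add $z$ to $H$ or $M$''). The paper introduces two fresh variables $V_0,V_1$ with the constraint $R_{\neq}(V_0,V_1)$ (which is qfpp-definable from the NAE relations, cf.\ the proof of Theorem~\ref{thm:in2tobr}), appends $V_0$ to every NAE-constraint, and sets $E'=E\cup\{V_1\}$ and $M'=M\cup\{V_1\}$. Forcing $V_1$ into the explanation and the manifestation pins $V_1=1$ and hence $V_0=0$ in every model relevant to both the consistency check and the entailment check, so each NAE-constraint collapses back to the original clause and no complementary ``ghost'' models arise. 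This costs two extra variables instead of one, which is still a CV-reduction. Your other points (padding short clauses, checking the target relation lies in $(k+1)$-NAE) are fine, but the construction as proposed does not establish the theorem.
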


\begin{comment}
\begin{proof}
% proof of thm 29
Again, we present the proof for $\PMABD(\Gamma)$ but the construction is exactly the same for $\MABD(\Gamma)$. Let $(\phi, H, M)$ be an instance of $\PMABD(k-CNF)$. We introduce two fresh variables $V_0, V_1$, the constraint $R_{\neq}(V_0, V_1)$ (recall the proof of Theorem~\ref{thm:in2tobr}), and construct $\phi' = \{R^s_{\mathrm{NAE}}(x_1, \ldots, x_k, V_0) \mid (\ell_1 \lor \ldots \ell_k) \in \phi\}$ where the $i$th component of the sign-pattern $s$ is $0$ if $\ell_i$ is positive, and equal to $1$ otherwise. Let $E' = E \cup \{V_1\}$ and $M' = M \cup \{V_1\}$. The correctness then follows via similar arguments as for satisfiability (see, e.g., Lemma 47 in \cite{jonsson2017}).
\end{proof}
\end{comment}

%In conclusion, one for every complementative language can construct an equivalent (under CV-reductions) language closed only under projections, and in the other direction we reduce the $k$-clause case (representing the hardest abduction problem of arity $k$) to the corresponding not-all-equal problem by increasing the arity by one. 
Thus, in general we should not expect to solve complementative abduction problems faster than $2^n$ (via Theorem~\ref{thm:4cnf_seth}).

\subsection{Lower Bounds for Horn and CNF$^+$ Abduction}
\label{sec:horn}

Last, we prove strong lower bounds for Horn and CNF$^+$. These lower bounds do not conclusively rule out algorithms faster than $2^n$ but we stress that sharp lower (under SETH) and upper bounds of the form $c^n$ are rather uncommon in the literature, and an improved upper bound would likely need to use completely new ideas.

\begin{lemma}($\star$)\label{lem:lowerbound-cnf-minus-imp}
    Under SETH, there is no $c<1$ such that (P-)$\MABD(\text{CNF}^-\cup\text{IMP})$ can be solved in time $1.2599^{cn}$ or $1.4142^{c|H|}$ or $2^{c|M|}$ or $\left(\frac{|H|}{|M|}\right)^{c|M|}$.
\end{lemma}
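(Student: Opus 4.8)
The plan is to give a CV-reduction from CNF-SAT (which under SETH cannot be solved in $c^n$ time for any $c<2$) to $\MABD(\text{CNF}^-\cup\text{IMP})$, taking care that the blow-up in $n$, $|H|$, and $|M|$ is exactly what the four claimed bounds require. Recall from Lemma~\ref{lem:cv_reduction} that $\ABD(\text{CNF}^-\cup\text{IMP})\cvred\MABD(\text{CNF}^+)$ and (dually) that we already know how such instances behave, so it suffices to handle one of the two problems and transfer. The natural route: given a CNF formula $\varphi$ over variables $z_1,\dots,z_n$, introduce for each $z_i$ a ``positive copy'' $p_i$ and a ``negative copy'' $q_i$, put $H=\{p_i,q_i\mid i\in[n]\}$, and use $\text{IMP}$ (or negative clauses) to encode ``at most one of $p_i,q_i$ may be chosen'' together with a mechanism forcing \emph{exactly} one to be chosen; then translate each clause of $\varphi$ into a negative clause over the appropriate copies, with a single fresh manifestation variable $m$ (or a small set $M$) that gets entailed precisely when the selected literals satisfy every clause. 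The correspondence between subsets $E\subseteq\Lits{H}$ explaining $M$ and satisfying assignments of $\varphi$ should then be one-to-one up to the choice of copies.

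The key steps, in order: (1) fix the encoding of variables by pairs of hypothesis variables and verify that ``choose exactly one of $p_i,q_i$'' can be enforced using only $\text{IMP}$-constraints and negative clauses together with the entailment requirement on $M$ — this is where the factor $1.2599=2^{1/3}$ should come from, since we want $n$ to roughly triple (two copies plus, say, one auxiliary variable per original variable, or a more careful count giving $n' \le 3n + O(1)$); (2) translate the clauses of $\varphi$ into $\text{CNF}^-$ over the copies, flipping signs via the copy trick so that positive literals of $\varphi$ become negative literals over the $q$-copies and vice versa; (3) introduce the manifestation set $M$ and argue $\KB\wedge E\models M$ holds iff $E$ selects a satisfying assignment — here the size of $|M|$ must be controlled to get the $2^{c|M|}$ and $\left(\frac{|H|}{|M|}\right)^{c|M|}$ bounds, so $M$ should be a single variable (or constantly many), making $\left(\frac{|H|}{|M|}\right)^{|M|}$ essentially constant and hence the bound vacuously tight, while $2^{|M|}$ is constant too — so for those two I would instead arrange $|M|$ to scale \emph{linearly} with $n$ while $|H|=\Theta(n)$, so that $2^{|M|}=2^{\Theta(n)}$ matches $2^{\Theta(n)}$; (4) check $|H|$: to get the $1.4142^{c|H|}=2^{c|H|/2}$ bound we need $|H|\le 2n+O(1)$, consistent with the two-copies construction; (5) confirm all constraints lie in $\text{CNF}^-\cup\text{IMP}$ and the reduction is polynomial-time and variable-linear, i.e.\ a genuine CV-reduction, then invoke SETH; (6) transfer to $\PMABD$ via the sign-flipping copy trick as in Lemma~\ref{lem:cv_reduction} and to Horn (resp.\ DualHorn for $\MABD$) by noting that $\text{CNF}^-\cup\text{IMP}$ relations are Horn, so the same lower bound applies to the Horn fragment verbatim.

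The main obstacle is the bookkeeping needed to make \emph{all four} bounds come out simultaneously from essentially one construction — in particular, reconciling the $n$-bound ($n'\approx 3n$, giving $2^{n'/3}$) with the $|H|$-bound ($|H|\approx 2n$, giving $2^{|H|/2}$) and the $|M|$-bounds requires choosing the auxiliary-variable overhead and the manifestation set with some care; if a single encoding does not hit all four, one falls back on presenting (slightly) different but structurally identical reductions optimized for each parameter, which is standard in this line of work. A secondary subtlety is ensuring the ``exactly one copy'' gadget does not introduce spurious explanations: one must verify that any minimal-looking $E$ that picks neither $p_i$ nor $q_i$ (or both) either fails consistency with $\KB$ or fails to entail $M$, so that explanations are in bijection with total assignments of $\varphi$. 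I expect both issues to be routine once the gadget is pinned down, with the parameter-matching arithmetic being the only genuinely fiddly part.
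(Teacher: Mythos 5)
Your proposal is correct and takes essentially the same route as the paper: a reduction from CNF-SAT with two hypothesis copies $x,x'$ per variable, sign-flipped negative clauses, a negative clause $(\neg x \vee \neg x')$ for ``at most one,'' and one manifestation $m_x$ per variable with implications $(x\rightarrow m_x)$, $(x'\rightarrow m_x)$ forcing ``at least one,'' which gives exactly the counts $n'=3n$, $|H|=2n$, $|M|=n$, $|H|/|M|=2$ that you identify. The single construction does hit all four bounds simultaneously, so the fallback to separate reductions you mention is unnecessary.
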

\begin{comment}
\begin{proof}
% proof of lemma 30
We give a reduction from CNF-SAT which at most triplicates the number of variables ($2^\frac{1}{3} > 1.2599$). The second bound follows since the reduction satisfies $|H| = 2\cdot |\var(\varphi)|$ (and $2^\frac{1}{2} > 1.4142$). The last two bounds follow since in the reduction $|M| = |\var(\varphi)|$ and $\frac{|H|}{|M|} = 2$.

Be $\varphi = \bigwedge_{j\in J} C_j$ an instance of CNF-SAT. For each variable $x \in\var(\varphi)$ introduce two fresh variables, denoted $x'$ and $m_x$. Denote by $C_j'$ the clause obtained from $C_j$ where each positive literal $x$ is replaced by $\neg x'$. That is, $C_j'$ contains negative literals only. Define the abduction instance as $(\KB, H, M)$, where

\begin{align*}
    \KB  = & \bigwedge_{j\in J} C_j' \\
        & \wedge \bigwedge_{ x \in \var(\varphi)}(\neg x \vee \neg x') \wedge (x \rightarrow m_x) \wedge (x' \rightarrow m_x) \\
    H   = & \{x, x' \mid x \in \var(\varphi)\} \\
    M   = & \{m_x \mid x  \in \var(\varphi)\}
\end{align*}
\end{proof}
\end{comment}

Since $\text{2-CNF}^-\cup\text{IMP}$ is a special case of Horn, we obtain the same lower bounds for Horn knowledge bases. Furthermore, using the reduction from Lemma~\ref{lem:cv_reduction}, we obtain a CV-reduction from (P-)$\MABD(\text{CNF}^-\cup\text{IMP})$ to
$\MABD(\text{CNF}^+)$, and $\text{CNF}^+$ in turn is a special case of $\text{DualHorn}$. We therefore obtain the following Corollary.

\begin{corollary}\label{lem:HornLowerBound}
    Under SETH, there is no $c<1$ such that (P-)$\MABD$(Horn) as well as $\MABD(\text{CNF}^+)$ and $\MABD(\text{DualHorn})$ can be solved in time $1.2599^{cn}$ or $1.4142^{c|H|}$ or $2^{c|M|}$ or $\left(\frac{|H|}{|M|}\right)^{c|M|}$.
\end{corollary}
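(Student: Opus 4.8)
\textbf{Proof proposal for Corollary~\ref{lem:HornLowerBound}.}
The plan is to combine Lemma~\ref{lem:lowerbound-cnf-minus-imp} with two syntactic containments and the CV-reduction of Lemma~\ref{lem:cv_reduction}. First I would observe that $\text{2-CNF}^-\cup\text{IMP}$ — and hence also $\text{CNF}^-\cup\text{IMP}$ — is a fragment of Horn: a negative clause $(\neg x_1 \vee \dots \vee \neg x_k)$ is Horn (it has no positive literal), and an implication $x \rightarrow y$ is the Horn clause $(\neg x \vee y)$. Thus any $\MABD(\text{CNF}^-\cup\text{IMP})$ or $\PMABD(\text{CNF}^-\cup\text{IMP})$ instance is \emph{literally} an instance of the corresponding Horn problem over the same variable set, the same $H$ and the same $M$; this is trivially a CV-reduction (indeed the identity), so the four lower bounds of Lemma~\ref{lem:lowerbound-cnf-minus-imp} transfer verbatim to (P-)$\MABD$(Horn).

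Next I would handle $\MABD(\text{CNF}^+)$ and $\MABD(\text{DualHorn})$. By Lemma~\ref{lem:cv_reduction} we have $\ABD(\text{CNF}^-\cup\text{IMP}) \cvred \MABD(\text{CNF}^+)$ (the lemma is stated for $k$-CNF$^-\cup\text{IMP}$, but since it preserves arities and the bound holds for unrestricted clause width, reading it with $k$ unbounded gives the CNF version — alternatively one re-runs the flip-the-literals argument of Lemma~\ref{lem:cv_reduction} with no arity cap). A CV-reduction preserves the complexity parameter $n$ exactly, and one checks that the reduction in Lemma~\ref{lem:cv_reduction} also preserves $H$ and $M$ (it keeps $H$ and $M$ unchanged, only rewriting $\KB$), so all four forms of the bound — in $n$, in $|H|$, in $|M|$, and in $|H|/|M|$ — carry over. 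Hence no $c<1$ gives $\MABD(\text{CNF}^+)$ in $1.2599^{cn}$, $1.4142^{c|H|}$, $2^{c|M|}$, or $(|H|/|M|)^{c|M|}$ time under SETH. Finally, $\text{CNF}^+$ is a fragment of $\text{DualHorn}$: a positive clause $(x_1 \vee \dots \vee x_k)$ has at most one \emph{negative} literal (namely zero), so it is dual-Horn. Again the identity map is a CV-reduction preserving $n$, $H$, and $M$, so the same four bounds hold for $\MABD(\text{DualHorn})$, completing the corollary.

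The argument is essentially bookkeeping once Lemma~\ref{lem:lowerbound-cnf-minus-imp} and Lemma~\ref{lem:cv_reduction} are in hand; the only point requiring a little care — and the place I would be most careful in writing it up — is checking that Lemma~\ref{lem:cv_reduction}'s reduction really does leave $H$ and $M$ (up to the trivial removal of $H\cap M$ elements, which changes $|H|,|M|$ by $O(1)$ and is harmless) intact, so that the bounds phrased in $|H|$, $|M|$, and the ratio $|H|/|M|$ — not just the one in $n$ — are inherited. Everything else is just the transitivity of $\cvred$ together with the observation that syntactic inclusion of constraint languages yields identity CV-reductions.
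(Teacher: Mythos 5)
Your proposal matches the paper's own argument: the paper likewise observes that $\text{CNF}^-\cup\text{IMP}$ is syntactically contained in Horn, invokes Lemma~\ref{lem:cv_reduction} (read with unbounded clause width) to pass to $\MABD(\text{CNF}^+)$, and then uses the containment of $\text{CNF}^+$ in DualHorn, with all reductions preserving $n$, $H$, and $M$ so that all four bounds transfer. Your explicit check that the reduction leaves $H$ and $M$ essentially unchanged is exactly the bookkeeping the paper relies on implicitly.
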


An interesting question that might occur is whether or not we can have SETH lower bounds for the problems in Section 4.1. Those problems instead have weaker lower bounds from ETH. The following lemma will show that we are unlikely to obtain such bounds with the same method we obtained them for the problems in this Section (4.4).

\begin{lemma}$(\star)$
    If CNF-SAT $\lvred \MABD(\text{2-CNF})$ then NP $\subseteq$ P/Poly.
\end{lemma}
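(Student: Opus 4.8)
The plan is to show that an LV-reduction from \textsc{CNF-SAT} to $\MABD(\text{2-CNF})$ would collapse the polynomial hierarchy (or at least place $\NP$ in $\P/\mathrm{poly}$) by exploiting the fact that \textsc{2-SAT} — and more relevantly the \emph{entailment} checks underlying $\MABD(\text{2-CNF})$ — are tractable, and that the only genuinely hard part of an $\MABD(\text{2-CNF})$ instance is the combinatorial search over $E \subseteq \Lits{H}$. The strategy is to combine the hypothetical LV-reduction with a sparsification/self-reduction argument: since an LV-reduction preserves subexponential solvability, a subexponential algorithm for $\MABD(\text{2-CNF})$ would give a subexponential algorithm for \textsc{CNF-SAT}, contradicting ETH — but ETH is not what we want to contradict here, so instead I would argue via a \emph{non-uniform} route. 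Concretely, I would observe that $\MABD(\text{2-CNF})$ with the hypothesis set bounded, or with a polynomial-size ``advice'' describing which explanations to try, becomes solvable in polynomial time, because once a candidate $E$ is fixed both the consistency check $\KB \wedge E$ and the entailment checks $\KB \wedge E \models m$ are polynomial-time (2-SAT). Thus if \textsc{CNF-SAT} $\lvred \MABD(\text{2-CNF})$, one can feed the resulting 2-CNF abduction instance into a machinery that, with polynomial advice depending only on the instance size $n$, identifies a correct explanation — yielding $\NP \subseteq \P/\mathrm{poly}$.

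In more detail, the key steps I would carry out, in order, are: (1) Fix a putative LV-reduction $f \colon I(\text{CNF-SAT}) \to I(\MABD(\text{2-CNF}))$ with parameter $C$, so that an $n$-variable CNF instance maps to an abduction instance with $Cn + O(1)$ variables, preserving yes/no answers. (2) Use a padding/isolation argument à la Karp--Lipton: for each input length, the behaviour of $f$ plus the structure of 2-CNF abduction is captured by a polynomial amount of information. The crucial leverage is that $\MABD(\text{2-CNF})$ restricted to instances where we are \emph{told} the explanation (or told enough structure to reconstruct one in polynomial time) lies in $\P$, since verifying an explanation only requires polynomially many 2-SAT calls. (3) Argue that a correct ``hint'' for all $f$-images of CNF instances of a given size can be encoded as polynomial advice — here one would want to invoke the tractability of the verification side together with a counting or compression argument (e.g., the set of yes-instances of $\MABD(\text{2-CNF})$ in the image of $f$ has a polynomial-size ``witness scheme'' because the entailment relation is coNP-easy, in fact P-easy, for 2-CNF). (4) Conclude that \textsc{CNF-SAT}, hence all of $\NP$, is decidable in $\P/\mathrm{poly}$.

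The main obstacle, and the step I expect to require the most care, is step (3): turning the tractability of the \emph{verification} of a fixed explanation into genuinely polynomial-size \emph{advice} that works uniformly across all CNF instances of a given length. The naive obstruction is that there are exponentially many CNF instances of length $n$, each potentially requiring a different explanation in its $\MABD(\text{2-CNF})$ image, so one cannot simply list explanations. The resolution I would pursue is that an LV-reduction is far too weak to be ``instance-specific'' in the needed sense — because it only blows up variables linearly, the image instances over $Cn + O(1)$ variables are constrained enough that a single polynomial-size data structure (intuitively, a decision list or small circuit over the $Cn$ variables, built using the fact that 2-CNF entailment is computed by a polynomial-time — hence polynomial-size-circuit — procedure) suffices to decide membership. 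Making this precise is essentially a Karp--Lipton-style argument adapted to abduction, and pinning down exactly which polynomial-time subroutine of 2-CNF abduction gets ``hardwired'' into the advice is the delicate point; I would expect the author's proof to route this through a known result that $\SAT$ reducing to a problem whose nontrivial content is tractable-modulo-search implies $\NP \subseteq \P/\mathrm{poly}$.
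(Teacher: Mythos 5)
There is a genuine gap here, and the route you describe would not go through. The paper's argument is an \emph{instance compression} (kernelization lower bound) argument, not a Karp--Lipton-style advice argument. The key observation you are missing is purely syntactic: a $\MABD(\text{2-CNF})$ instance over $N$ variables has a knowledge base consisting of at most $O(N^2)$ distinct $2$-clauses, so the entire instance can be written down in $\mathrm{poly}(N)$ bits. Hence a hypothetical LV-reduction $f$ from CNF-SAT (where an $n$-variable instance may have size superpolynomial in $n$) to $\MABD(\text{2-CNF})$ would map every $n$-variable CNF formula to an equivalent instance of total size $O(n^2)$ --- i.e., it would already be a polynomial compression of CNF-SAT. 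The paper then converts the $2$-CNF abduction instance back into an equisatisfiable CNF-SAT instance with at most $n^2$ clauses (for each $m_i \in M$, the clauses $(\ell_j \vee m_i), \ldots, (\ell_k \vee m_i)$ are replaced by the single clause $(\ell_j \vee \ldots \vee \ell_k)$), producing a polynomial kernel for CNF-SAT; by Fortnow and Santhanam's infeasibility-of-instance-compression theorem this implies $\NP \subseteq$ P/Poly.

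Your proposed route --- encoding ``which explanation to try'' as polynomial advice --- does not work as stated. The property you lean on, that $\MABD(\text{2-CNF})$ is tractable once the search over $E$ is resolved, is shared by SAT itself (verifying a fixed assignment is polynomial-time), so ``tractable modulo search'' cannot by itself yield $\NP \subseteq$ P/Poly; otherwise the identity reduction from SAT to SAT would already collapse $\NP$ into P/Poly. You correctly identify the obstruction in your step (3) --- exponentially many instances, each potentially needing a different explanation --- but your proposed resolution, that the image of $f$ is ``constrained enough'' for a single polynomial-size data structure to decide membership, is precisely the statement that needs proof and is given no justification. The actual leverage comes not from the tractability of $2$-SAT entailment but from the bounded description length of $2$-CNF knowledge bases over linearly many variables, combined with a known non-compressibility theorem for CNF-SAT.
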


\begin{proof}
    We provide a short proof sketch.
    From~\cite{lance}, we know that CNF-SAT does not admit a kernel of size $f(n)$ where $f$ is a polynomial, i.e., an equisatisfiable instance with e.g.\ a quadratic number of clauses (unless $\NP \subseteq$ P/Poly). To prove the claim it suffices to have a reduction from $\MABD(\text{2-CNF})$ to CNF-SAT that only has a polynomial number of clauses in terms of $n$. Assume an arbitrary $\MABD(\text{2-CNF})$ instance. For every $m_i \in M$, for all clauses of the type: $(\ell_j\lor m_i), \ldots, (\ell_k \lor m_i)$ where $\ell_i \ldots \ell_k$ are a literals, we delete these clauses and add the following CNF clause $(\ell_i \lor \ldots \lor \ell_k)$, and otherwise keep everything unchanged.
\end{proof}

%\begin{proof}
 %   From~\cite{lance}, we know that CNF-SAT does not admit a kernel of size $f(n)$ where $f$ is a polynomial, i.e., an equisatisfiable instance with e.g.\ a quadratic number of clauses. To prove the claim it suffices to have a reduction from $\MABD(\text{2-CNF})$ to CNF-SAT that only has a polynomial number of clauses in terms of $n$. Assume an arbitrary $\MABD(\text{2-CNF})$ instance. For every $m_i \in M$, for all clauses of the type: $(\ell_j\lor m_i), \ldots, (\ell_k \lor m_i)$ where $\ell_i \ldots \ell_k$ are a literals, we delete these clauses and add the following CNF clause $(\ell_i \lor \ldots \lor \ell_k)$, and otherwise keep everything unchanged. This completes the reduction. The correctness is straightforward: for every manifestation, we have to choose at least one of the literals that appear with it in a clause, in order to entail it. The $\MABD(\text{2-CNF})$ instance can only have up to $n^2$ clauses. The resulting CNF-SAT instance from this reduction thus only has a number of clauses equal to the number of $m_i \in M$, plus the remaining 2-CNF clauses, thus  at most $n^2$ clauses. If an LV-reduction of the form CNF-SAT $\lvred \MABD(\text{2-CNF})$ exists, together with the reduction from this proof, we would have a reduction from an arbitrary CNF-SAT instance into a CNF-SAT with only $n^2$ clauses, which is not possible unless $NP \subseteq P/Poly$.
%\end{proof}

It is thus unlikely to obtain SETH lower bounds for $\MABD(\text{2-CNF})$ from CNF-SAT under LV- or CV-reductions.

\section{Concluding Remarks}
We demonstrated that non-monotonic reasoning, and in particular propositional abduction, for many non-trivial cases {\em do} admit improvements over exhaustive search. We find it particularly interesting that even $\Sigma^P_2$-complete problems fall under the scope of our methods. Might it even be the case that $\Sigma^P_2$ is not such an imposing barrier as classical complexity theory tells us? Nevertheless, despite many positive and negative results, there are still open cases remaining and many interesting directions for future research.

\paragraph*{Faster Enumeration?} We proved that finite subsets of $\affine$ and $\textsc{Equations}$ are susceptible to enumeration. It is easy to see that $\pol(\affine)$ contains the {\em Maltsev} operation $x - y + z (\pmod 2)$, while $\textsc{Equations}$ is exactly the set of symmetric relations invariant under a {\em partial Maltsev} operation~\cite{lagerkvist2022a}. Is it a coincidence that all of our positive enumeration cases can be explained by partial Maltsev operation, or could universal algebra be applied even further? For example, it is straightforward to show that if a language is {\em not} preserved by partial Maltsev, then it can not be sparsely enumerable. Extending this further, if one allows e.g. a polynomial-time preprocessing, could it even be the case that a Boolean (possibly non-symmetric) language is sparsely enumerable if and only if it is invariant under partial Maltsev?

\paragraph*{2- and 3-CNF.} While $\ABD(4\text{-CNF})$ is unlikely to admit improved algorithms, $\ABD(k\text{-CNF})$ for $k \leq 3$ is wide open. These languages are not sparsely enumerable so they do not fall under the scope of the enumeration algorithms, yet, it appears highly challenging to prove sharp lower bounds for them (and recall that CNF-SAT does not admit an LV-reduction to $\ABD(2\text{-CNF})$ unless NP $\subseteq$ P/Poly). As a possible starting point one could consider instances with only a linear number $m$ of clauses  in the knowledge base, or, the more restricted case when each variable may only occur in a fixed number of constraints. Could instances of this kind be solved with enumeration? 

\paragraph*{Other Parameters?} Related to the above question one could more generally ask when $\ABD(\Gamma)$ admits an improved algorithm with complexity parameter $m$, which we observe in general can be much larger than $n$. Do any of the algorithmic results carry over, and can lower bounds be obtained? For the related quantified Boolean formula problems, Williams~\cite{10.5555/545381.545421} constructed an $O(1.709^m)$ time algorithm, so one could be cautiously optimistic about analyzing abduction with $m$.
%[todo: do not fall under enumeration, cannot be reduced to simplesat, but bounded degree could work.]

%\victor{The main point is that it might not be impossible to algebraically classify languages which can be enumerated efficiently. I think we can even prove that partial Maltsev is necessary, in the sense that if $\Gamma$ is not invariant under partial Maltsev, then we (almost) have 2-SAT, and can define an instance with a dense number of models. The question is whether we can lift the symmetric case to the non-symmetric case: this might be possible if allow ourselves to pre-processes the instance (say, in polynomial time) and remove trivial relations.} 

\newpage

\begin{comment}

\section{TODO}

For these problems we do not have anything, as far as I know (not even for degree-bounded cases).

\begin{enumerate} 
\item 
Abstract (Victor)
\item 
Introduction (next week).
\item 
Preliminaries (Johannes).
\item 
Add CV- and LV-reductions, and implications on complexity. LV-reductions from sparse instance should be added in beginning of lower bounds section.
\item 
Discussion (next week).
\item 
Linear equations for $\PMABD$ (NP-complete). Add ETH-lower bound (and for 2-SAT). -\victor{Update: should work with enumeration (for bounded arity).} Done!.
\item
Closed under complement. Done!
\item 
Closed under complement and constants: added but needs to be polished.
\item 
Reduction from P-M-ABD$(IS^2_{11})$ to $k$-colored clique with parameter $k = |M|$. Does this give anything? \victor{Should probably be a discussion point in the paper.}

\item check if we can obtain an algorithm to run in time $O^*(\left(\frac{|H|}{|M|}\right)^{|M|})$ for $\MABD(\text{CNF}^+)$ (this would match the lower bound we have)

\end{enumerate}

\end{comment}

\newpage

\setcounter{section}{0}

\section*{Appendix}

\noindent
For completeness' sake we include the full preliminaries, although most of these notions have already been defined in the main paper. Section~2 contains all missing proofs.

\section{Preliminaries}

\paragraph{Propositional logic}
We assume familiarity with propositional logic. A \emph{literal} is a variable $x$ or its negation $\neg x$. A \emph{clause} is a disjunction of literals and a \emph{term} is a conjunction of literals. A formula $\varphi$ is in \emph{conjunctive normal form} (CNF) if it is a conjunction of clauses, and in \emph{disjunctive normal form} (DNF) if it is a disjunction of terms. We denote $\var(\varphi)$ the variables of a formula $\varphi$. Analogously, for a set of formulas $F$, $\var(F)$ denotes $\bigcup_{\varphi \in F} \var(\varphi)$. We identify finite $F$ with the conjunction of all formulas from $F$, that is $\bigwedge_{\varphi \in F} \varphi$. A mapping $\sigma: \var(\varphi) \mapsto \{0,1\}$ is called an \emph{assignment} to the variables of $\varphi$. A \emph{model} of a formula $\varphi$ is an assignment to $\var(\varphi)$ that satisfies $\varphi$. For two formulas $\psi, \varphi$ we write $\psi \models \varphi$ if every model of $\psi$ also satisfies $\varphi$. For a set of variables $V$, $|V|$ denotes the \emph{cardinality} of $V$, and $\Lits{V}$ the set of all literals formed upon $V$, that is, $\Lits{V} = V \cup \{\neg x \mid x \in V\}$. 

\paragraph{Boolean constraint languages} 
A \emph{logical relation} of arity $k \in \mathbb{N}$ is a relation $R \subseteq \{0,1\}^k$, and $\ar(R) = k$ denotes the arity.
An ($R$-)constraint $C$ is a formula $C = R(x_1, \dots, x_k)$, where $R$ is a $k$-ary logical relation, and $x_1, \dots, x_k$ are (not necessarily distinct) variables. An assignment $\sigma$ \emph{satisfies} $C$, if $(\sigma(x_1), \dots, \sigma(x_k)) \in R$. A (Boolean) \emph{constraint language} $\Gamma$ is a (possibly infinite) set of logical relations, and a \emph{$\Gamma$-formula} is a conjunction of constraints over elements from $\Gamma$. Eventually, a $\Gamma$-formula $\phi$ is $\emph{satisfied}$ by an assignment $\sigma$, if $\sigma$ simultaneously satisfies all constraints in it. In such a case $\sigma$ is also called a \emph{model of $\phi$}. We define the two constant unary Boolean relations as $\bot = \{(0)\}$ and $\top = \{(1)\}$, and the two constant 0-ary relations as $\sf{f} = \emptyset$ and $\sf{t} = \{\emptyset\}$. 

We say that a $k$-ary relation $R$ is \emph{represented} by a propositional CNF-formula $\phi$ if $\phi$ is a formula over $k$ distinct variables $x_1, \dots, x_k$ and $\phi \equiv R(x_1, \dots, x_k)$. Note such a CNF-representation exists for any logical relation $R \subseteq \{0,1\}^k$ and hence for any $R$-constraint. Any $\Gamma$-formula (a conjunction of $\Gamma$-constraints) can thus be seen as a propositional formula and admits a CNF-representation.

We note at this point that many classical fragments of propositional logic such as e.g. Horn, dualHorn, affine, k-CNF,  can be modeled via the notion of $\Gamma$-formulas. For instance, $\Gamma = \{ \{0,1\}^2 - \{(0,0)\}, \{0,1\}^2 - \{(0,1)\}, \{0,1\}^2 - \{(1,0)\}, \{0,1\}^2 - \{(1,1)\}\}$ models exactly 2-CNF formulas.

We use the shorthand $k$-CNF to denote the corresponding finite constraint language (arity bounded by $k$). $\text{IMP}$ denotes the constraint language $\{R\}$, where $R(x,y) = \{0,1\}^2 -\{(1,0)\}$. We use the shorthands CNF, Horn, DualHorn to denote the corresponding infinite constraint languages (unbounded arity).
The notation $\text{(k-)CNF}^+$ (resp. $\text{(k-)CNF}^-$ ) denotes the version where each clause is positive (resp. negative).
We use $\affine$ to denote the set of relations representable by systems of Boolean equations modulo 2, i.e., $x_1 + \ldots + x_k \equiv q \pmod 2$ for $q \in \{0,1\}$.
As representation of each relation in a constraint language we use the defining CNF-formula.

The satisfiability problem for $\Gamma$-formulas, also known as Boolean constraint satisfaction problem, is denoted $\SAT(\Gamma)$. It asks, given a $\Gamma$-formula $\phi$, whether $\phi$ admits a model. It is well-known that $\SAT(\Gamma)$ is decidable in polynomial time if $\Gamma$ is Horn, or dualHorn, or 2-CNF, or affine, or 1-valid, or 0-valid, but in all other cases is $\NP$-complete~\cite{Schaefer1978}.

%We denote by $\SAT(\Gamma)$-$B$ the degree-bounded variant, of $\SAT(\Gamma)$, where each variable occurs in at most $B$ constraints.
%Victor: I uncommented this since we don't need it anymore.
\paragraph{Propositional Abduction}
An instance of the abduction problem is given by $(\KB, H, M)$, where $\KB$ denotes the knowledge base (or \emph{theory}), $H$ is the set of hypotheses, and $M$ is the set of manifestations.
In the propositional case $\KB$ is given as a (set of) propositional formula(s), and both $H$ and $M$ are sets of variables. If not noted otherwise, we denote by $V$ the total set of variables in an abduction instance, that is, $V = \var(\KB) \cup \var(H) \cup \var(M)$, and $n = |V|$ denotes its cardinality. The \emph{symmetric abduction problem}, denoted $\MABD$, asks whether there exists an $E \subseteq \Lits{H}$ such that 1) $\KB \wedge E$ is satisfiable, and 2) $\KB \wedge E \models M$. If such an $E$ exists, it is called an $\emph{explanation}$ for $M$.
If an explanation $E$ satisfies $\var(E) = H$ it is called a \emph{full explanation}, if it satisfies $E \subseteq H$, it is called a \emph{positive explanation}. If there exists an explanation $E$, it can always be extended to a full explanation (by extending $E$ according to the satisfying assignment underlying condition 1). However, the existence of an explanation does not imply the existence of a positive explanation.
The \emph{positive abduction problem}, denoted $\PMABD$, asks whether there exists a positive explanation.

\begin{example}
    Propositional abduction.

    \begin{align*}
        \KB =  \{
        & \text{Charly-lazy} \wedge \text{Charly-alone} \rightarrow \text{Charly-truant}, \\
        & \text{Dad-shopping} \rightarrow \text{Charly-alone}, \\
        & \neg \text{Buses-running} \rightarrow \text{Charly-truant} \wedge \neg \text{Dad-shopping}
        \}\\
        H   =  \{ & \text{Buses-running}, \text{Dad-shopping}, \text{Charly-lazy}\} \\
        M   =  \{& \text{Charly-truant}\}
    \end{align*}
    The set $\{\text{Charly-lazy}, \text{Charly-alone}\}$ would explain the manifestation $\text{Charly-truant}$, but is invalid, since $\text{Charly-alone}$ is no hypothesis. However, $E_1 = \{\neg \text{Buses-running}\}$ is a valid explanation. Note that $E_1$ is neither a positive nor a full explanation.
    $E_2 = \{\neg \text{Buses-running}, \neg \text{Dad-shopping}, \text{Charly-lazy}\}$ is a full explanation, and $E_3 = \{\text{Dad-shopping}, \text{Charly-lazy}\}$ is a positive explanation.
%    \begin{align*}
%        \KB =  \{
%        & A \wedge B \rightarrow C, \\
%        & D \rightarrow B, \\
%        & \neg E \rightarrow C \wedge \neg D
%        \}\\
%        H   =  \{ & E, D, A\} \\
%        M   =  \{& C\}
%    \end{align*}
\end{example}

We define the symmetric abduction problem for $\Gamma$-formulas, denoted $\MABD(\Gamma)$, exactly as $\MABD$, but the knowledge base $\KB$ is given as a $\Gamma$-formula. $\PMABD(\Gamma)$ is defined analogously and we write $\ABD(\Gamma)$ if the specific abduction type is not important. The classical complexity of $\MABD(\Gamma)$ and $\PMABD(\Gamma)$ is fully determined for all $\Gamma$, due to \cite{NoZa2008}. An illustration of these classifications is found in Figure~\ref{fig:M-ABD} and Figure~\ref{fig:P-M-ABD}.

\paragraph{Clones, Co-clones, Post's lattice, polymorphisms}

We denote by $\Bar{x}$ the Boolean negation operation, that is, $f(x) = \neg x$. An $n$-ary {\em projection} is an operation $f$ of the form $f(x_1, \ldots, x_n) = x_i$ for some fixed $1 \leq i \leq n$.  An operation $f : \{0,1\}^k \rightarrow \{0,1\}$ is \emph{constant} if for all $\textbf{x}\in \{0,1\}^k$ it holds $f(\textbf{x}) = c$, for a $c \in \{0,1\}$. 
For a $k$-ary operation $f \colon \{0,1\}^k \to \{0,1\}$ and $X \subseteq \{0,1\}^k$ we write $f_{\mid X}$ for the function obtained by restricting the domain of $f$ to $X$.
%We denote by $\pro_S(f)$ is the projection of the function $f$ onto the index set $S$. 
An operation $f$ is a \emph{polymorphism} of a relation $R$ if for every $t_1, \dots, t_k \in R$ it holds that $f(t_1, \dots, t_k) \in R$, where $f$ is applied coordinate-wise. In this case $R$ is called \emph{closed} or \emph{invariant}, under $f$, ard $\inv(F)$ denotes the set of all relations invariant under every function in $F$. Dually, for a set of relations $\Gamma$, $\pol(\Gamma)$ denotes the set of all polymorphisms of $\Gamma$. Sets of the form $\pol(\Gamma)$ are known as \emph{clones}, and sets of the form $\inv(F)$ are known as \emph{co-clones}. A \emph{clone} is a set of functions closed under 1) functional composition, and 2) projections (selecting an arbitrary but fixed coordinate). For a set $B$ of (Boolean) functions, $[B]$ denotes the corresponding clone, and $B$ is called its \emph{base}. 
In the Boolean domain clones and their inclusion structure are fully determined, commonly known as \emph{Post's lattice} \cite{Post1941}.
Interestingly, the relationship between clones and co-clones constitutes a \emph{Galois connection} \cite{Lau2006}.

\begin{theorem}
    Let $\Gamma, \Gamma'$ be constraint languages. Then $\inv(\pol(\Gamma')) \subseteq \inv(\pol(\Gamma))$ if and only if $\pol(\Gamma) \subseteq \pol(\Gamma')$
\end{theorem}

Post's lattice therefore also completely describes co-clones and their inclusion structure.
A depiction of Post's lattice as co-clones is found in figure~\ref{fig:M-ABD}.
Informally explained, every vertex corresponds to a co-clone while the edges model the containment relation.
Co-clones are equivalently characterized by constraint languages (= sets of Boolean relations) closed under primitive positive first-order definitions (pp-definitions), that is, definitions of the form
$$R(x_1, \dots, x_n) := \exists y_1, \dots, y_m\, .\, R_1(\textbf{x}_1) \wedge \dots \wedge R_k(\textbf{x}_k)$$
\noindent
where each $R_i \in \Gamma \cup \{\text{Eq}\}$ and each $\textbf{x}_i$ is a tuple over $x_1, \dots, x_n, y_1, \dots, y_m$, and $\text{Eq} = \{(0,0), (1,1)\}$. The corresponding closure operator is denoted $\left<\cdot\right>$, that is, for a constraint language $\Gamma$, $\left<\Gamma\right>$ is the co-clone $\inv(\pol(\Gamma))$ and $\Gamma$ is called its \emph{base}.
Quantifier-free pp-definitions (qfpp-definitions) are defined analogously, but existential quantification is disallowed.

\paragraph{Complexity, algorithms, reductions}
We assume familiarity with basic notions in classical complexity theory (cf.~\cite{Sipser97}) and use complexity classes $\P$, $\NP$, $\coNP$, $\NP^\NP = \SigPtwo$.
Note that hardness results (e.g. NP-hardness) in this context are obtained via polynomial many-one reductions. This type of reduction is, however, not suitable to transfer exact exponential running time, and neither subexponential running time. To consider exponential time algorithms, a \emph{complexity parameter} needs to be specified. For this paper, we consider as parameter the \emph{total number of variables} in an instance (e.g. an abduction or SAT instance). 
For a variable problem $A$ we denote $I(A)$ the set of instances and $\var(I)$ denotes the set of variables. If clear from the context we usually write $n$ rather than $|\var(I)|$.
Thus, formally, we work in the setting of {\em parameterized} complexity where the complexity parameter $k$ is $n$, the number of variables in the knowledge base in an abduction problem, or the number of vertices in a graph problem. We define the following two types of reductions~\cite{jonsson2017}.
%We can now define LV- and CV-reductions 

%In general, we call problems where this parameter
%is possible to define \emph{variable problems}. For a variable problem $A$ we denote by $I(A)$ the set of instances. For an instance $I$ let $\var(I)$ denote the set of variables.
\begin{definition}
Be $A,B$ two variable problems. A function $f \colon I(A) \rightarrow I(B)$ is a \emph{many-one linear variable reduction} (LV-reduction) with parameter $C \geq 0$ if:
\begin{enumerate}
\item $I$ is a yes-instance of $A$
iff $f(I)$ is a yes-instance of $B$,
\item $|\var(f(I))| = C \cdot |\var(I)|+O(1)$, and 
\item $f$ can be computed in polynomial time.
\end{enumerate}
\end{definition}

If in an LV-reduction the parameter $C$ is $1$, we speak of a \emph{CV-reduction}, and we take us the liberty to view a reduction which actually shrinks the number of variables ($|\var(f(I))| \leq |\var(I)| +O(1)$) as a CV-reduction, too. We use $A \cvred B$, respectively $A \lvred B$ as shorthands, and sometimes write $A \cveq B$ if $A \cvred B$ and $B \cvred A$.
For  algorithms' (exponential) running time and space usage we adopt the $O^*$ notation which suppresses polynomial factors. 
A CV-reduction transfers exact exponential running time: if $A \cvred B$, and $B$ can be solved in time $O^*(c^n)$, then also $A$ can be solved in time $O^*(c^n)$ (where $n$ denotes the complexity parameter). LV-reductions, on the other hand, preserve {\em subexponential complexity}, i.e., if $B$ can be solved in $O^*(c^n)$ time for every $c > 1$ and $A \lvred B$ then $A$ is solvable in $O^*(c^n)$ time for every $c > 1$, too. For additional details we refer the reader to~\cite{tcs2021}.

\section{Missing Proofs}

\subsection{Proof of Theorem 3}

\begin{proof}
Let $(\KB, H, M)$ be an instance of $\MABD(\Gamma)$. We first observe that
we without loss of generality may assume that $\var(E) = H$. We then exhaustively enumerate all full explanations $E \subseteq \Lits{H}$ ($2^{|H|}$ time since we do not have to consider $E$ containing both a variable and its complement) and use the algorithm for $\SAT(\Gamma^+)$ to 
\begin{enumerate}
\item check whether $\KB \wedge E$ is satisfiable, and 
\item check whether $\KB \land E \land \neg m_i$ is unsatisfiable for each $m_i \in M$.
\end{enumerate}
Since $\var(E) = H$, each instance $\KB \wedge E$ effectively only has $n - |H|$ variables, we therefore get the bound $2^{|H|} \cdot f(n - |H|) \cdot (|M| + 1)$. Note that checking satisfiability of $\KB \wedge E$ can be done by the presumed algorithm for $\SAT(\Gamma^+)$ by forcing variables in $E$ constant values via $\bot$- and $\top$-constraints.

For $\PMABD(\Gamma)$ the analysis is similar but where in each invocation of the algorithm for $\SAT(\Gamma^+)$ we have $n - |E|$ variables.
\end{proof}

\subsection{Proof of Lemma 6}

\begin{proof}
Assume there is an $m \in M - \var(\KB)$.
If $m \notin H$, there are no explanations. An algorithm returns False and terminates, a reduction maps to a fixed negative instance.
If $m \in H$, then $m$ can trivially be explained (by itself). It can thus be removed from both $H$ and $M$, since it does not influence the existence of explanations.

Now assume there is an $h \in H - \var(\KB)$.
If $h \notin M$, it can be removed from $H$, since it does not influence the existence of explanations.
If $h \in M$, it can be removed from both $H$ and $M$, since it does not influence the existence of explanations.

Since no variables are added, the described transformations constitute a CV-reduction.
\end{proof}

\subsection{Proof of Theorem 7}

\begin{proof}
Consider an instance $(\KB, H, M)$ of $\MABD(\Gamma)$. Recall that by Lemma~6
% \ref{lem:preprocessing}
we can W.L.O.G. assume that $M,H \subseteq \var(\KB)$. Let $|\var(\KB)| = n$. 

    The algorithm is presented in Algorithm~\ref{alg:enumABD} where we assume that $\mathrm{modelGenerator}(\KB)$ can be used to enumerate all solutions to $\KB$ via the $.\mathrm{next}()$ notation. 
    For soundness, we first recall that there exists an explanation $E \subseteq \Lits{H}$ if and only if there exists full explanation. The algorithm enumerates all models $\sigma$ of $\KB$ and in line 6 extracts a full explanation candidate $E$ from $\sigma$. This $E$ serves as identifier of $\sigma$'s equivalence class. An equivalence class is discarded if it fails (via the representative $\sigma$) to explain $M$. This case is detected in line 8.
    For the complexity, we assume that all models of $\SAT(\Gamma)$ can be enumerated in $O^*(c^n)$ time, which bounds the total number of iterations. Each step \emph{after} line 5, that is, lines 6-12,  can be done in polynomial time, and all checks can be implemented with standard data structures. Note that the check in line 8 amounts to simply evaluating $M$ on the assignment $\sigma$.
\end{proof}

\begin{algorithm}[H]
\caption{Algorithm to solve $\MABD(\Gamma)$ based on enumeration.}
\label{alg:enumABD}
	\begin{algorithmic}[1]
		\REQUIRE $\KB, H, M$
			\STATE generator = modelGenerator($\KB$)
			\STATE discarded = $\emptyset$
			\STATE potentialExp = $\emptyset$
			\WHILE{generator.notEmpty()}
  			\STATE $\sigma$ = generator.next()
%  			\STATE $E = \left.\sigma\right|_H$
  			\STATE $E = \{x \in H \mid \sigma(x) = 1\} \cup \{\neg x \mid x \in H, \sigma(x) = 0\}$
				\IF{$E \notin$ discarded}
  				\IF{$\sigma \not \models M$}
						\STATE discarded = discarded$\, \cup\, \{E\}$
						\STATE potentialExp = potentialExp$\, -\, \{E\}$					\ELSE 
					  \STATE potentialExp = potentialExp$\,\cup\, \{E\}$
					\ENDIF
				\ENDIF
			\ENDWHILE
		    \STATE \emph{\# now potentialExp contains all full explanations}
			\IF{potentialExp $\neq \emptyset$}
				\STATE return $\top$
			\ELSE
				\STATE return $\bot$
			\ENDIF
		\end{algorithmic}
\end{algorithm}

\subsection{Proof of Theorem 8}

\begin{algorithm} 
    \caption{Algorithm $\mathcal{A}$ for $\PMABD(\Gamma)$.}
    \label{alg:PABD}
    \begin{algorithmic}[1]
        \REQUIRE $\KB, H,M,D,\delta$
                \STATE $E = D \cup \delta$
				\STATE $G = E \cup \{\neg x \mid x\in H-E\}$
        \IF{$\KB \wedge G \wedge \neg M$ is satisfiable}
           \RETURN $\bot$
        \ENDIF
        \IF{$\KB \wedge G$ is satisfiable}
%            \STATE \emph{\# $E$ is a positive explanation}
            \RETURN $\top$
        \ELSE
           \IF{$D = \emptyset$}
             \RETURN $\bot$
           \ENDIF
           \STATE flag = $\bot$
           \FOR{$x \in D$}
             \STATE flag = flag $\vee$ $\mathcal{A}(\KB,H,M,D - \{x\},\delta)$
             \STATE $\delta = \delta \cup \{x\}$
           \ENDFOR
            \RETURN flag
        \ENDIF
				
    \end{algorithmic}
\end{algorithm}

\begin{proof}
%proof of Theorem 8
Consider Algorithm~\ref{alg:PABD}. The starting parameters are $D = H$ and $\delta = \emptyset$.

The recursive algorithm systematically explores all subsets of $H$ as candidates.
It starts with the base candidate $E = H$. Inside each recursive call, it first checks if the current candidate $E$, extended to a full candidate $G$, entails the manifestation (line 3). If this fails, it concludes that neither $G$ nor any subset of $G$ (which includes $E$ and all its subsets) can be an explanation, and returns $\bot$. Else, it checks if $G$ is consistent with $\KB$. If yes, then $G$ is obviously a (full) explanation, but the algorithm concludes that in this case even $E \subseteq G$ is a (positive) explanation. In claim~\ref{claim1}  below we argue why this is correct. If $G$ is not consistent with $\KB$, the algorithm concludes that neither $E$ is consistent with $\KB$, and can thus not be an explanation. In claim~\ref{claim2} below we argue why this is correct.
Then the algorithm systematically checks candidates where exactly one variable is removed from $E$ (lines 11-13). Descending in the recursive calls (line 12) it makes sure to systematically explore \emph{all} subsets of a candidate, thereby avoiding visiting the same subset multiple times (this is the purpose of $\delta$).

\begin{claim}\label{claim2}
    If in line 5 of the algorithm $\KB \wedge G$ is unsatisfiable, then so is $\KB \wedge E$.
\end{claim}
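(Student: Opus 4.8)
The plan is to argue, as the structure of Algorithm~\ref{alg:PABD} suggests, that by the time the recursion reaches the current call with candidate $E = D \cup \delta$, every proper superset of $E$ that lies inside $H$ has already been ruled out. Precisely, I would first establish the auxiliary fact $(\star)$: for every $F$ with $E \subsetneq F \subseteq H$, the formula $\KB \wedge F$ is unsatisfiable. Granting $(\star)$, suppose for contradiction that $\KB \wedge G$ is unsatisfiable while $\KB \wedge E$ is satisfied by some assignment $\sigma$, and set $C = \{x \in H - E \mid \sigma(x) = 1\}$, so that $\sigma$ satisfies $\KB \wedge (E \cup C)$. If $C \neq \emptyset$, then $E \cup C$ is a proper superset of $E$ contained in $H$, so $(\star)$ makes $\KB \wedge (E \cup C)$ unsatisfiable, contradicting the existence of $\sigma$. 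If $C = \emptyset$, then $\sigma$ assigns $0$ to every variable of $H - E$, hence $\sigma$ satisfies $\KB \wedge G$, contradicting the hypothesis. In either case $\KB \wedge E$ must be unsatisfiable.

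It remains to prove $(\star)$, which I would carry out by strong induction on $|H| - |E|$; the base case $E = H$ is vacuous, since $H$ has no proper superset inside $H$. For the inductive step, fix $F$ with $E \subsetneq F \subseteq H$. The key point is that $F$, regarded as a candidate, is examined by the algorithm strictly before the current call: the way $\delta$ is accumulated inside the \textbf{for} loop (the update $\delta = \delta \cup \{x\}$) arranges the enumeration of subsets of $H$ so that larger candidates are processed first, and the current call is reached only because no ancestor call returned $\bot$ at line~3 (a line-3 failure halts the descent). Consequently $F$ must have passed line~3 and then failed line~5, i.e. $\KB \wedge G_F$ is unsatisfiable, where $G_F = F \cup \{\neg x \mid x \in H - F\}$. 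Since $|H| - |F| < |H| - |E|$, the induction hypothesis applied to the candidate $F$ yields that $\KB \wedge F$ is unsatisfiable, completing the induction.

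The step I expect to be the real obstacle is justifying, inside the induction, that by the time a given call is reached every proper superset of its candidate has already been examined and, moreover, necessarily failed at line~5 rather than at line~3. Making this precise requires careful bookkeeping of exactly which candidates the recursion visits and in what order — this depends on how $\delta$ grows from one iteration of the \textbf{for} loop to the next and on the order in which the elements of $D$ are processed — so I would first isolate an explicit lemma describing the set of candidates enumerated in the subtree of a call $\mathcal{A}(\KB,H,M,D,\delta)$ together with their relative order, and only then invoke it. Once $(\star)$ is available the remainder is routine, and the companion statement for line~3 (an unsatisfiable $\KB \wedge G \wedge \neg M$ forcing $\KB \wedge E \wedge \neg M$ unsatisfiable) follows by the same argument with $\KB \wedge \neg M$ in place of $\KB$ throughout.
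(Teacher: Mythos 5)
Your proposal follows the paper's own argument step for step: the same auxiliary observation $(\star)$ (every $F$ with $E \subsetneq F \subseteq H$ has $\KB \wedge F$ unsatisfiable), and the same case split on $C = \{x \in H-E \mid \sigma(x)=1\}$. The derivation of the claim from $(\star)$ is fine. The problem is precisely the step you yourself flag as ``the real obstacle'': the assertion that every proper superset of $E$ inside $H$ is examined strictly before the current call and must have failed at line~5. That assertion is false for this recursion. The calls are made depth-first, so only the supersets on the \emph{ancestor chain} of the current call are guaranteed to have been examined; a superset sitting in a sibling branch is visited later. Concretely, take $H=\{a,b\}$ with the root's for-loop processing $a$ first: the candidates are visited in the order $\{a,b\}$, $\{b\}$, $\emptyset$, $\{a\}$, so when the call with $E=\emptyset$ is reached its proper superset $\{a\}$ has not been examined at all, and the information gathered so far ($\KB\wedge a\wedge b$, $\KB\wedge\neg a\wedge b$, $\KB\wedge\neg a\wedge\neg b$ all unsatisfiable) says nothing about $\KB\wedge a\wedge\neg b$, hence nothing about $\KB\wedge E=\KB$. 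Taking $\KB\equiv a\wedge\neg b$ makes those three conjunctions unsatisfiable while $\KB$ itself is satisfiable, so $(\star)$ --- and with it the conclusion of the claim for that call --- cannot be recovered by the ``already examined and failed'' argument.

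You should also know that the paper's own proof makes exactly the same unsupported assertion (``any strict superset $F$ of $E$ must have failed as explanation candidate, otherwise no recursive call would have descended to the current point''), which is only valid for ancestors of the current call. So your proposal reproduces the published argument faithfully, including its weakest link; but the ordering lemma you propose to isolate (that the $\delta$-bookkeeping forces larger candidates to be processed first) cannot be proved as stated. A correct proof of the claim would have to either change the order in which the recursion explores subsets of $H$ (e.g., genuinely by non-increasing cardinality, as in the enumeration-based variant of the algorithm) or replace $(\star)$ by an argument that does not rely on non-ancestor supersets having already been ruled out.
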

\begin{proof}
    We use an inductive argument. At the initial call we have that $E = G = H$, so the statement is obviously true. Now, consider the case where $E$ is a strict subset of $H$. By construction of the algorithm we know that any strict superset $F$ of $E$ must have failed as explanation candidate (otherwise no recursive call would have descended to the current point). Since a failure due to line 3 stops descending, the failure (of $F$) must in this case have been due to line 5. We state this as the following observation.
    \begin{equation}\label{obs1}
      \text{If } F \supset E, \text{ then } \KB \wedge F \text{ is unsatisfiable.}
    \end{equation}
    Now assume for contradiction that $\KB \wedge E$ is satisfiable. Then there must be a witnessing satisfying assignment $\sigma$. We define $C = \{x \mid x\in H-E, \sigma(x) = 1\}$ and conclude that $\KB \wedge E \wedge C$ must be satisfiable (by $\sigma$). If $C$ is non-empty, then $E \cup C$ is a strict superset of $E$, thus by  observation~\ref{obs1}, $\KB \wedge E \wedge C$ must be unsatisfiable, a contradiction. If $C$ is empty, $\sigma$ assigns 0 to all variables in $H - E$. Therefore, $\sigma$ satisfies $\KB \wedge G$, a contradiction.

\end{proof}

\begin{claim}\label{claim1}
    If in line 3 of the algorithm $\KB \wedge G \wedge \neg M$ is unsatisfiable, then so is $\KB \wedge E \wedge \neg M$.
\end{claim}
\begin{proof}
As in claim~\ref{claim2} we obtain the same observation~\ref{obs1}. The remaining reasoning proceeds exactly analogously, only with $\KB \wedge \neg M$, instead of $\KB$.
\end{proof}

It remains to observe the claimed time and space complexity. In the worst case the algorithm goes through all subsets of $H$, and checks satisfiability in the remaining $V-H$ variables. Its time complexity is thus $O^*(2^{|H|}) \cdot O^*(2^{|V-H|}) = O^*(2^n)$. The descending depth is linear and no set can grow to exponential size. The overall space usage is therefore polynomial.
\end{proof}

\subsection{Proof of Theorem 9}

\begin{proof}
%proof of thm 9
Similar to the enumeration scheme of Algorithm~\ref{alg:enumABD}, we use a generator with the $.\mathrm{next}()$ notation to enumerate all models of the knowledge base $\KB$. But this time we require a specific order. For an assignment $\sigma$ we define the \emph{weight with respect to $H$} as $w_H(\sigma) = |\{x \mid x \in H, \sigma(x) = 1\}|$. We require the generator to respect the order of non-increasing $w_H(\sigma)$.
Algorithm~\ref{alg:enumPABD} assumes that $L$ is a given list of all models of $\KB$ in this order, and that the generator created in line 1 follows this order. Note that $L$ can be constructed in time and space $O^*(c^n)$ by generating all models with the given enumeration algorithm in time $O^*(c^n)$, storing them, and then sorting them in time $O^*(c^n \cdot log(c^n)) = O^*(c^n)$.

The effect of this order is that the algorithm will encounter \emph{subset-maximal candidates} first.
A positive explanation candidate $E \subseteq H$ is called \emph{subset-maximal candidate} if $\KB \wedge E$ is satisfiable and $E$ is subset-maximal with this property. 
It is readily observed that to decide whether there are any positive explanations, it is sufficient to check all subset-maximal candidates. As we explain in the following, this is exactly what the algorithm does.

After line 1 the algorithm proceeds exactly as Algorithm~\ref{alg:enumABD}, except that in line 6 a \emph{positive} explanation is extracted (instead of a \emph{full} explanation) and that in line 13-15 it is made sure that if a candidate $E$ is discarded, also all (immediate) subsets are discarded.
Note that if a subset-maximal candidate $E$ fails to explain $M$ (detected in line 8), then also any strict subset of $E$ will fail.
It is thus \emph{legitimate} to discard all subsets of a discarded $E$.
But it is even \emph{necessary} for the algorithm's correctness: it assures that the algorithm continues to only consider subset-maximal candidates.
\emph{If} the algorithm \emph{did} hit a non-subset-maximal candidate $E$ (that is not listed in the discarded set), it could miss a witness of $E$ failing to explain $M$. That is, it could miss a model $\sigma$ of $\KB \land E$ such that $\sigma \not\models M$, since this witness might already have been 'consumed' with a superset of $E$.

\end{proof}

\begin{algorithm}[H]
\caption{Enumeration based algorithm for $\PMABD(\Gamma)$.}
\label{alg:enumPABD}
	\begin{algorithmic}[1]
		\REQUIRE $L, H, M$
			\STATE generator = Generator($L$)
			\STATE discarded = $\emptyset$
			\STATE potentialExp = $\emptyset$
			\WHILE{generator.notEmpty()}
  			\STATE $\sigma$ = generator.next()
  			\STATE $E = \{x \in H \mid \sigma(x) = 1\}$
				\IF{$E \notin$ discarded}
  				\IF{$\sigma \not \models M$}
						\STATE discarded = discarded$\, \cup\, \{E\}$
						\STATE potentialExp = potentialExp$\, -\, \{E\}$					\ELSE 
					  \STATE potentialExp = potentialExp$\,\cup\, \{E\}$
					\ENDIF
				\ENDIF
                \IF{$E \in$ discarded}
                \STATE \emph{\# discard all immediate subsets}
                \STATE discarded = discarded$\, \cup\, \{E - \{x\} \mid x \in E\}$
                \ENDIF
			\ENDWHILE
		    \STATE \emph{\# now potentialExp contains all subset-maximal positive explanations}
			\IF{potentialExp $\neq \emptyset$}
				\STATE return $\top$
			\ELSE
				\STATE return $\bot$
			\ENDIF
		\end{algorithmic}
\end{algorithm}

\subsection{Proof of Theorem 15}

\begin{proof}
  Let $r_0$ be the integer such that for every $r \geq r_0$ it holds
  that $|R| \leq c^{r}$ for every $r$-ary $R \in \Gamma$ (for a fixed $c < 2$).
  Such $r_0$ exists since $\Gamma$ is asymptotically sparse. We first explain how to exhaustively branch until we reach sufficiently low arity $< r_0$.

  Now let $(V,C)$ be an instance of $\SAT(\Gamma)$ with maximum arity $r \geq r_0$. For a partial truth assignment $f \colon X \to \{0,1\}$, $X \subseteq V$, we write $C_{\mid f} = \{c_{\mid f} \mid c \in C\}$ where $c = R(\mathbf{x})$ and $c_{\mid f} = R_{\mid f}(\mathbf{x}')$ with the scope $\mathbf{x'}$ obtained by removing every variable in $X$ from $\mathbf{x}$. Each such $R_{\mid f} \in \Gamma$ since we assume that $\Gamma$ is closed under branching.
  
  Let $A$ be the following
  branching algorithm, which, given a set of constraints $C$, for
  every constraint $R(x_1, \ldots, x_k) \in C$, $\ar(R) = k$, $r_0 \leq k \leq r$,
  branches as follows:
  \begin{itemize}
  \item
  $A((C \setminus \{R(x_1, \ldots, x_k)\})_{\mid f_1})$,
    %$A(C \setminus \{R(x_1, \ldots, x_k)\})/\{f_1(x_1), \ldots, f_1(x_k)\}$, \victor{Introduce this notation.}
  \item
    \vdots
      $A((C \setminus \{R(x_1, \ldots, x_k)\})_{\mid f_{|R|}})$,
%    $A(C \setminus \{R(x_1, \ldots, x_k)\})/\{f_{|R|}(x_1), \ldots, f_{|R|}(x_k)\}$,
  \end{itemize}
  where $f_i$ is the partial truth assignment corresponding to $t_i \in
  R$. Since $\Gamma$ is closed under minors we may
  assume that the variables in the constraint $R(x_1, \ldots, x_k)$ are
  all distinct.
  Hence, in every branch we eliminate $k$ variables and in
  total there are $|R| \leq c^k$ branches. In the worst case this
  gives a bound of $|R|^{\frac{n}{k}} \leq c^{k^{\frac{n}{k}}} \leq c^n$. 
  %\victor{I'm mixing two things here. Rewrite.}
  %Writing this in the
%  form $d^{n}$ we obtain

%  $$ 2^{dn} = |R|^{\frac{n}{k}} \leftrightarrow dn =
 %n \log_2(|R|^{\frac{n}{k}}) = \log_2(2^{ck})\frac{n}{k} = cn
%  \leftrightarrow d = c $$
      
%  if we assume the worst case $|R| = c^{r}$. 
Note that if we apply branching again to a constraint $R'_{\mid f_i}(y_1, \ldots, y_{k'})$ for some $1 \leq i \leq |R|$, then we may again assume that all variables $y_1, \ldots, y_{k'}$ are distinct (since $\Gamma$ is closed under minors), and, importantly, that $|R| \leq c^{k'}$, and we may simply apply the branching algorithm $A$ again to get the required bound.

We keep this up until we in a branch reach an instance $C_{r_0}$ where every constraint has arity $< r_0$. Pick a constraint $R''(z_1, \ldots, z_{k''})$ where $k'' < r_0$. We observe that $|R''| < 2^{k''} < 2^{r_0}$ where the first equation holds via the assumption that $R''$ is non-trivial, and the worst possible bound is then smaller than $(2^{r_0} - 1)^{\frac{n}{r_0}} = (2^{r_0} - 1)^{\frac{1}{r_0}^{n}} = d^n < 2^n$ since $r_0 \geq 1$ is constant. Put together, the total number of models is dominated by either $c^n$ or $d^n$, both better than $2^n$.
%We let $d < 2$ be the infimum of  Hence, branching is still applicable, and we obtain a bound dominated by the maximum of $c^n$ and $d^n$.
%  In the next branching step we for the selected constraint $R'(y_1, \ldots, y_{\ar(S)})$ may again (since the language is closed under branching, and in particular, under substitutions) assume that all variables are distinct. Importantly, since the language contains only essential relations, we may also assume that $S \subset \{0,1\}^{\ar(S)}$ is not the full relation. \victor{Clarify. The point is that we never leave the language, so when branching we again produce something in the language, which must be essential. Connect this to the formal definition.}
%  \victor{TODO: clarify that we get the maximum of $c$ and $d$.}
\end{proof}

\subsection{Proof of Lemma 17}

\begin{proof}
    First, let $R_S$ be a $k$-ary relation induced by $p,q \leq k + 1$, i.e., the sum of all arguments is congruent to $q$ modulo $p$. Clearly, inserting 0 or 1 in the equation simply mean that we have to adjust $p,q$, and $k$, as necessary. Similarly, $R_S \subset \{0,1\}^{k}$ since $S \subset [k] \cup \{0\}$ (each equation has at least one weight which is not allowed).

    Second, we only have to establish that all relations in $\minor(\mathcal{E}^{\leq k})$ are non-trivial and closed under substitutions. We provide a short proof sketch since it follows from basic properties of equations. Thus, assume that we are given a minor of an equation in $\mathcal{E}^{\leq k}$, and then assign one or more variables constant values. Then the resulting equation can be obtained from the original equation by first assigning the corresponding variables constant values, and then taking a minor of the resulting relation. To additionally see that every relation is non-trivial, we simply observe that an equation of the form $x_1 + \ldots + x_{\ell} \equiv q \pmod p$, $\ell \leq k$, cannot become trivially satisfied by identifying variables.
%    \victor{Actually, it is not closed under minors. For example, $R_{1/3}(x,x,y)$ is not symmetric and cannot thus be included. We could either close it under minions (and prove that every relation is still essential) or try to prove that enumeration can still be done without this assumption. It shouldn't be needed, but makes it easier since we can always assume that a $k$-ary constraint contains exactly $k$ distinct variables. Maybe we then just need to assume that every every minor is closed under branching, and that every minor is essential. Probably, the only technical problem is if we have a low arity minor of a high arity ($> r_0$) which is not included in the language.}
\end{proof}

\subsection{Proof of Theorem 18}

\begin{proof}
We begin by proving that $\XSAT$ satisfies the necessary assumptions. First, consider a relation $R \in \XSAT$ of arity $k$. Then either $R = R_{1/k}$ in which case $|R_{1/k}| = k$, or $R = \bot^k$ in which case $|\bot^k| = 1$, so the language is clearly asymptotically sparse. Second, for an assignment $f \colon X \to \{0,1\}$, $X \subseteq [k]$, for which $R_{\mid f} \neq \emptyset$, note that if $f(i) = 0$ for every $i \in X$, then $R_{\mid f} = R_{1/k'}$ for $k' = k - |X|$. Similarly, if $f(i) = 1$ for some $i \in X$, then $R_{\mid f} = \bot^{k'}$ for some $k' = k - |X|$. Relations of the form $\bot^k_{\mid f}$ are also simple to handle. Third, let $g \colon [k] \to [k']$, and consider the minor defined by $R_{1/k}(x_{g(1)}, \ldots, x_{g(m)})$. A simple case analysis shows that if $g(i) = g(j)$ for $i \neq j$ 
To then obtain the concrete bound $O^*(\sqrt{2}^n)$ we simply observe that when branching on a $k$-ary constraint we obtain a branching vector of \[(\overbrace{k, k, \ldots, k}^{\text{$k$ branches}})
\] and by solving the resulting recurrence equation it is easy to see that the worst possible case is when $k = 2$, giving the bound $O^*(\sqrt{2}^n)$.

Second, for $\affine^{\leq k}$, we observe that it is asymptotically sparse since it is finite, every relation is non-trivial since $\affine^{\leq k} \subseteq \textsc{Equations}$, and that it is closed under branching follows from basic properties of affine equations. However, in contrast to $\textsc{Equations}$, it is closed under minors: an equation is true if and only if the total parity is even or odd, and identifying two or more variables does not affect this relationship.
\end{proof}

\subsection{Proof of Theorem 21}

\begin{proof}
% proof of thm 21
Let $(\KB, H, M)$ be an instance of $\MABD(\text{k-CNF}^+)$.
Note that there exists an explanation if and only if there exists a purely negative explanation.
Denote by $\cal{C}$ the set of clauses in $\KB$, that is, $\KB = \bigwedge_{C \in \cal C} C$, where each clause $C$ is positive and of size $k$. One further verifies that to determine whether there are any explanations, one only needs to consider two types of clauses: (1) clauses that contain only variables from $H$ (seen as a "constraint" on how to select variables from $H$ as negative literals), and (2) clauses that contain exactly one variable $m \in M$ and  variables from $H$ (that is, $\neg h_1, \dots, \neg h_{k-1}$ is a candidate to explain $m$, via the clause $(h_1 \vee \dots \vee h_{k-1} \vee m)$). Therefore, we define the two sets of clauses 

\begin{align*}
 F =\; & \{(h_1 \vee \dots \vee h_{k-1} \vee m) \in {\cal C} \mid h_1, \dots, h_{k-1} \in H, m \in M\}, \\
 G =\; & \{(h_1 \vee \dots \vee h_k) \in {\cal C} \mid h_1, \dots, h_k\in H\}, \\
% Q =\; & {\cal C} - (F \cup G),\\
\end{align*}

\noindent
and disregard henceforth all other clauses.
For each $m \in M$ define the formula
 $$D_m = \bigvee_{(h_1 \vee \dots \vee h_{k-1} \vee m) \in F}  (\neg h_1 \wedge \dots \wedge \neg h_{k-1})$$
Note that $D_m$ is a negative DNF formula, that is, a disjunction of negative terms.
We map $(\KB, H, M)$ to the $\simpleSAT^k$ instance 

$$\varphi = \bigwedge_{(h_1 \vee \dots \vee h_k) \in G} (h_1 \vee \dots \vee h_k) \wedge \bigwedge_{m\in M} D_m$$

\noindent
Note that $\varphi$ is of required form and that $\var(\varphi) \subseteq H$.
It is easy to verify that there is a one-to-one correspondence between negative explanations and models of $\varphi$.
\end{proof}

\subsection{Proof of Lemma 22}

\begin{proof}
% proof of lemma 22
    Consider the instance $(\KB, H, M)$ where $\KB$ contains clauses of type $(\neg x_1 \vee \dots \vee \neg x_k)$ and $(x \rightarrow y)$.
    Recall that by Lemma~6 we may W.L.O.G assume that $M,H \subseteq \var(\KB)$. Note further that we also may assume that $H \cap M = \emptyset$: define $X = H \cap M$, remove $X$ from both $H$ and $M$, introduce fresh variables $h,m$, add $h$ to $H$, add $m$ to $M$, and add to $\KB$ the clauses $(h\rightarrow m)$ and $\bigwedge_{x \in X}(h\rightarrow x)$. This constitues a CV-reduction of the problem to itself.
    
    For a variable $h\in H$ denote by $\cons(h)$ the consequences of $h$ in $\KB$, that is, all single literals that can be derived by applying exhaustively resolution on $\KB \wedge h$.
    We map $(\KB, H, M)$ to $(\KB', H, M)$, where
    $\KB' = \varphi_1 \wedge \varphi_2$, with

    \begin{align*}
        \varphi_1 = & \bigwedge_{\begin{array}{c}m\in M, h\in H \\ \text{s.t.}\, m \in \cons(h)\end{array}} (h \vee m)\\ 
        \varphi_2 = & \bigwedge_{\begin{array}{c}h_1, \dots, h_k \in H\;\text{s.t.}\\  \KB \wedge M \wedge h_1 \wedge \dots \wedge h_k \models \emptyset\end{array}} (h_1 \vee \dots \vee h_k) \\
    \end{align*}

\noindent
It is easy to verify that there is a one-to-one correspondence between explanations, obtained by flipping the literals of an explanation.
\end{proof}

\subsection{Proof of Theorem 26}

\begin{proof}
%proof of thm 26
We consider the problem of deciding the truth value of a quantified Boolean formula $\forall X \exists Y . \Phi(X,Y)$ where $\Phi$ is in 3-CNF, i.e., whether there for all possible assignments to the set of variables $X$ exists an assignment to the set of variables $Y$ such that $\Phi(X,Y)$ under those assignments is true. Then, from \cite{calabro2013exact} we know that this problem can not be solved in $c^n$ time for any $c < 2$ under SETH. 
%that the problem of deciding the truth value of a quantified Boolean formula $\forall X \exists Y . \Phi(X,Y)$ where $\Phi$ is in 3-CNF, 

the complement of this problem is $\exists X \forall Y \Phi' (X,Y)$ where $\Phi' = \neg \Phi$ is in 3-DNF can also not be done in time less than $2^n$ under SETH. 

It suffices to have a CV reduction from $\forall X \exists Y \Phi (X,Y)$ where $\Phi$ is in 3-DNF to $\MABD(\text{4-CNF})$, inspired by the reduction from \cite{eiter1995complexity}.

Let $\exists x_1,x_2 \ldots \forall y_1,y_2 \ldots \Phi$ be in 3-$\textsc{DNF}$ form. Define the following $\MABD(4\textsc{-CNF})$ instance: $(\KB, H, M)$ where $H= \{x_i \}$, $M=\{y_i\} \cup \{s\}$, and $\KB= \{\Phi \rightarrow s \land y_1 \land y_2 \ldots\} \land \{s \rightarrow  y_1 \land y_2 \ldots\}$. First, we show that $\KB$ can be expressed in 4-CNF. Her, $\Phi$ is in 3-DNF, $\{\Phi \rightarrow s \land y_1 \ldots\}$, and can be written as $\{\neg \Phi \vee \{s \land y_1 \ldots\}\}$, $\{\neg \Phi\}$, which is in 3-CNF, and distributing the $\vee$ makes 4-CNF clauses. The second part $\{s \rightarrow  y_1 \land y_2 \ldots\}$ follows the same reasoning to make it into 2-CNF.

We prove the correctness of our reduction. From left to right: assume that the QBF formula is true. Then there is $X \subseteq H$ such that for all $Y$, $\Phi$ is true. The theory $\KB$ is satisfiable for $X$ by putting all $y_i$ and $s$ to true. The entailment of $Y$ is easy to see, as putting any $y_i$ or $s$ to $\bot$ makes the theory evaluate to false if $\phi$ is true.

Now, from right to left: we assume the abduction problem does have a solution, and show that the QBF formula is true. Let $X \subseteq H$ be a solution to the abduction problem. $X$ satisfies $\phi$ when all $y_i$ and $s$ are true, since $\KB$ must be evaluated to true when the both the manifestations and solutions of the abduction problem are positive. Now let us assume $s$ is false. By entailment $\KB$ must be unsatisfiable, meaning that $\{\Phi \rightarrow s \vee y_i \ldots\}$ must be false. This can only be the case if $\Phi$ is $\top$, and this applies for all $Y$, thus completing the proof.
\end{proof}

\subsection{Proof of Lemma 27}

\begin{proof}
%proof of lemma 27
We reduce a regular $\MABD(\text{4-CNF})$ instance into a $\PMABD(\text{4-CNF})$ as follows:
$(\KB,H,M) \rightarrow (\KB',H',M)$ where $\KB' = \KB \cup \{x \leftrightarrow \neg x' \mid x \in H\}$ and $H' = H \cup \{x'\}$.

It is easy to see that the second problem has a positive solution if and only if the first one has any solution. The inequalities between the $x$ and $x'$ variables can be written in 2-CNF, thus $\KB'$ is in 4-CNF. We notice that the number of variables has increased linearly, but remains less than twice the number of variables of the original problem, thus, $\PMABD(\text{4-CNF})$ cannot be solved faster than $2^{n/2} = 1.4142^{n}$
\end{proof}

\subsection{Proof of Theorem 28}
\begin{proof}
%proof of thm 28
We present the proof for $\PMABD(\Gamma)$ but the construction is exactly the same for $\MABD(\Gamma)$. Note that $\PMABD(\Gamma) \cvred \PMABD(\Gamma \cup \{\bot, \top\})$ trivially holds, so we only need to prove the other direction.
For this, we first claim that $R_{\neq} = \{(0,1), (1,0)\}$ is qfpp-definable by $\Gamma$. Choose a relation $R \in \Gamma$ of arity $k$ such that $\emptyset \subset R \subset \{0,1\}^{k}$ and such that $R$ does not contain either of the two constant tuples $(0, \ldots, 0)$ and $(1, \ldots, 1)$. Such a relation must exist since otherwise $\pol(\Gamma) \supset [\Bar{x}]$ (note that $(0, \ldots, 0) \in R$ but $(1, \ldots, 1) \notin R$ is not possible since $\overline{(0, \ldots, 0)} = (1, \ldots, 1)$). Second, choose a non-constant tuple $t = (a_1, \ldots, a_k) \in R$, and construct two sets of indices $I_0$ and $I_1$ such that $i \in I_b$ if and only if $a_i = b$ for $b \in \{0,1\}$. If we then define the relation $S(x,y) \equiv R(x_{1}, \ldots, x_k)$ where $x_i = x$ if $i \in I_0$, and $x_i = y$ if $i \in I_1$ it follows (1) that $(0,1) \in S$ due to the assumptions on the tuple $t$, and (2) that $\Bar(t) = (1,0) \in S$ since $\pol(\Gamma) = [\Bar{x}]$. Finally, if $(0,0) \in S$ or $(1,1) \in S$ then $(0, \ldots, 0) \in R$ or $(1, \ldots, 1) \in R$, contradicting our original assumption. We conclude that $S = R_{\neq}$ and for simplicity assume that $R_{\neq} \in \Gamma$ to simplify the notation in the remaining proof.

Second, let $(\phi, H, M)$ be an instance of $\PMABD(\Gamma \cup \{\bot, \top\})$. Partition $\phi$ into $\phi_1 \cup \phi_2$ where $\phi_1$ contains only constraints from $\Gamma$, and $\phi_2$ only constraints from $\{\bot, \top\}$. We introduce two fresh variables $V_0$ and $V_1$ and create the instance $(\phi', H', M')$ of $\PMABD(\Gamma)$ where:
\begin{enumerate}
    \item 
    $\phi' = \phi_1 \land \bigwedge_{\bot(x) \in \phi_2} R_{\neq}(x, V_1) \land \bigwedge_{\top(y) \in \phi_2} R_{\neq}(y, V_0) \land R_{\neq}(V_0,  V_1)$,
    \item 
    $H' = H \cup \{V_1\}$, and
    \item 
    $M' = M \cup \{V_1\}$.
\end{enumerate}

Assume that $E \subseteq H$ is an explanation for $(\phi, H, M)$. Then $\phi' \land (E \cup \{V_1\})$ is satisfiable and explains $M \cup \{V_1\}$. 
For the other direction, assume that $E' \subseteq H'$ is an explanation for $(\phi', H', M')$. Since $V_1 \in M'$ we either have $V_1 \in E$, or $y \in E$, for some $\top(y) \in \phi_2$, which via the constraints $\bigwedge_{\bot(x) \in \phi_2} R_{\neq}(x, V_1)$ and $R_{\neq}(V_0, V_1)$ ensures that $E' \cap H$ is an explanation for $(\phi, H, M)$.
\end{proof}

\subsection{Proof of Theorem 29}

We need a little bit of additional notation for the next theorem. Let $s = (s_1, \ldots, s_k) \in [0,1]^k$ be a $k$-ary {\em sign-pattern}, let $0_s = (a_1, \ldots, a_k) \in \{0,1\}^k$ be the unique tuple such that the sum $(s_1 \oplus a_1) + \ldots (s_k \oplus a_k) = 0$, and $1_s$ the corresponding tuple whose sum is $k$. Finally, let $R^s_{\mathrm{NAE}} = \{0,1\}^k \setminus \{0_s, 1_s\}$. It is easy to see that one for every $k$-clause $(\ell_1 \lor \ldots \lor \ell_k)$ can construct a sign-pattern and thus a corresponding $R^s_{\mathrm{NAE}}$ relation. We let $k$-NAE be the set of all such relations of arity $k$.

\begin{proof}
% proof of thm 29
Again, we present the proof for $\PMABD(\Gamma)$ but the construction is exactly the same for $\MABD(\Gamma)$. Let $(\phi, H, M)$ be an instance of $\PMABD(k-CNF)$. We introduce two fresh variables $V_0, V_1$, the constraint $R_{\neq}(V_0, V_1)$ (recall the proof of Theorem 28, and construct $\phi' = \{R^s_{\mathrm{NAE}}(x_1, \ldots, x_k, V_0) \mid (\ell_1 \lor \ldots \ell_k) \in \phi\}$ where the $i$th component of the sign-pattern $s$ is $0$ if $\ell_i$ is positive, and equal to $1$ otherwise. Let $E' = E \cup \{V_1\}$ and $M' = M \cup \{V_1\}$. The correctness then follows via similar arguments as for satisfiability (see, e.g., Lemma 47 in \cite{jonsson2017}).
\end{proof}

\subsection{Proof of Lemma 30}

\begin{proof}
% proof of lemma 30
We give a reduction from CNF-SAT which at most triplicates the number of variables ($2^\frac{1}{3} > 1.2599$). The second bound follows since the reduction satisfies $|H| = 2\cdot |\var(\varphi)|$ (and $2^\frac{1}{2} > 1.4142$). The last two bounds follow since in the reduction $|M| = |\var(\varphi)|$ and $\frac{|H|}{|M|} = 2$.

Be $\varphi = \bigwedge_{j\in J} C_j$ an instance of CNF-SAT. For each variable $x \in\var(\varphi)$ introduce two fresh variables, denoted $x'$ and $m_x$. Denote by $C_j'$ the clause obtained from $C_j$ where each positive literal $x$ is replaced by $\neg x'$. That is, $C_j'$ contains negative literals only. Define the abduction instance as $(\KB, H, M)$, where

\begin{align*}
    \KB  = & \bigwedge_{j\in J} C_j' \\
        & \wedge \bigwedge_{ x \in \var(\varphi)}(\neg x \vee \neg x') \wedge (x \rightarrow m_x) \wedge (x' \rightarrow m_x) \\
    H   = & \{x, x' \mid x \in \var(\varphi)\} \\
    M   = & \{m_x \mid x  \in \var(\varphi)\}
\end{align*}
\end{proof}

\subsection{Proof of Lemma 32}
\begin{proof}
    From~\cite{lance}, we know that CNF-SAT does not admit a kernel of size $f(n)$ where $f$ is a polynomial, i.e., an equisatisfiable instance with e.g.\ a quadratic number of clauses. To prove the claim it suffices to have a reduction from $\MABD(\text{2-CNF})$ to CNF-SAT that only has a polynomial number of clauses in terms of $n$. Assume an arbitrary $\MABD(\text{2-CNF})$ instance. For every $m_i \in M$, for all clauses of the type: $(\ell_j\lor m_i), \ldots, (\ell_k \lor m_i)$ where $\ell_i \ldots \ell_k$ are a literals, we delete these clauses and add the following CNF clause $(\ell_i \lor \ldots \lor \ell_k)$, and otherwise keep everything unchanged. This completes the reduction. The correctness is straightforward: for every manifestation, we have to choose at least one of the literals that appear with it in a clause, in order to entail it. The $\MABD(\text{2-CNF})$ instance can only have up to $n^2$ clauses. The resulting CNF-SAT instance from this reduction thus only has a number of clauses equal to the number of $m_i \in M$, plus the remaining 2-CNF clauses, thus  at most $n^2$ clauses. If an LV-reduction of the form CNF-SAT $\lvred \MABD(\text{2-CNF})$ exists, together with the reduction from this proof, we would have a reduction from an arbitrary CNF-SAT instance into a CNF-SAT with only $n^2$ clauses, which is not possible unless $\NP \subseteq P/Poly$.
\end{proof}

\newpage

\begin{figure*}[htp]
    \centering
    \includegraphics[width=14cm]{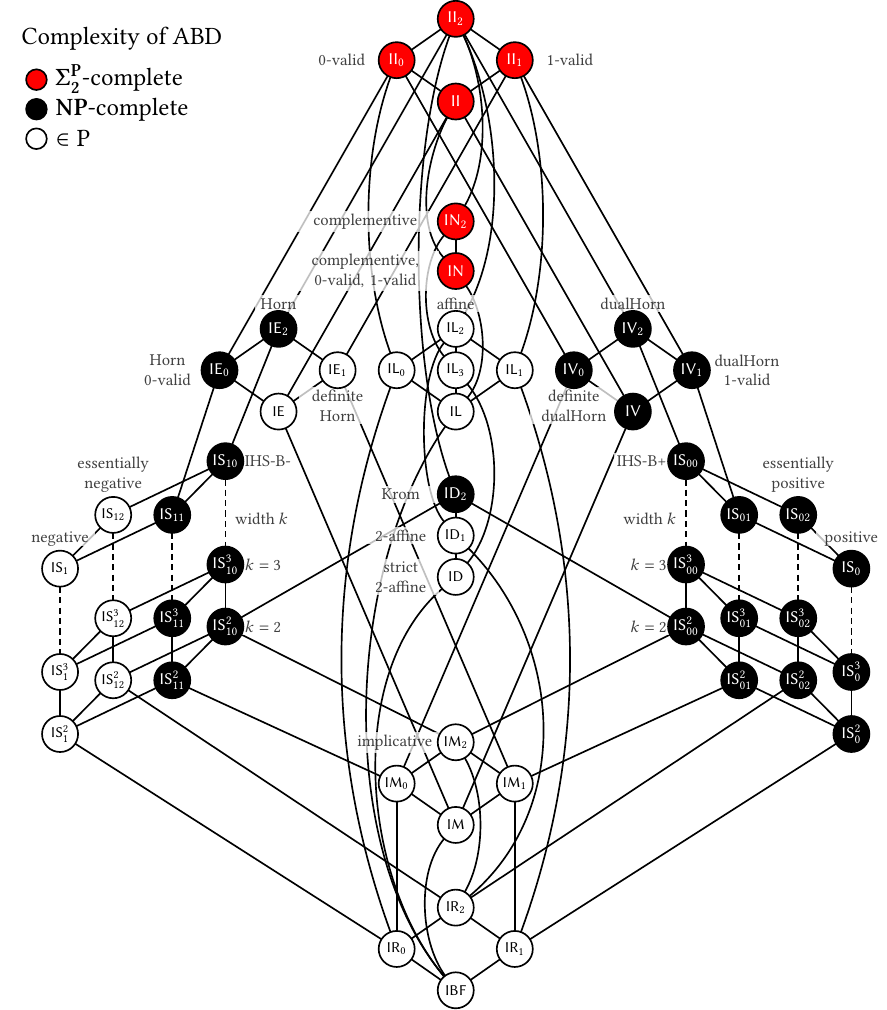}
    \caption{Classical complexity of ABD according to Nordh and Zanuttini (2008), illustrated on Post's lattice.}
    \label{fig:M-ABD}
\end{figure*}

\newpage

\begin{figure*}[htp]
    \centering
    \includegraphics[width=14cm]{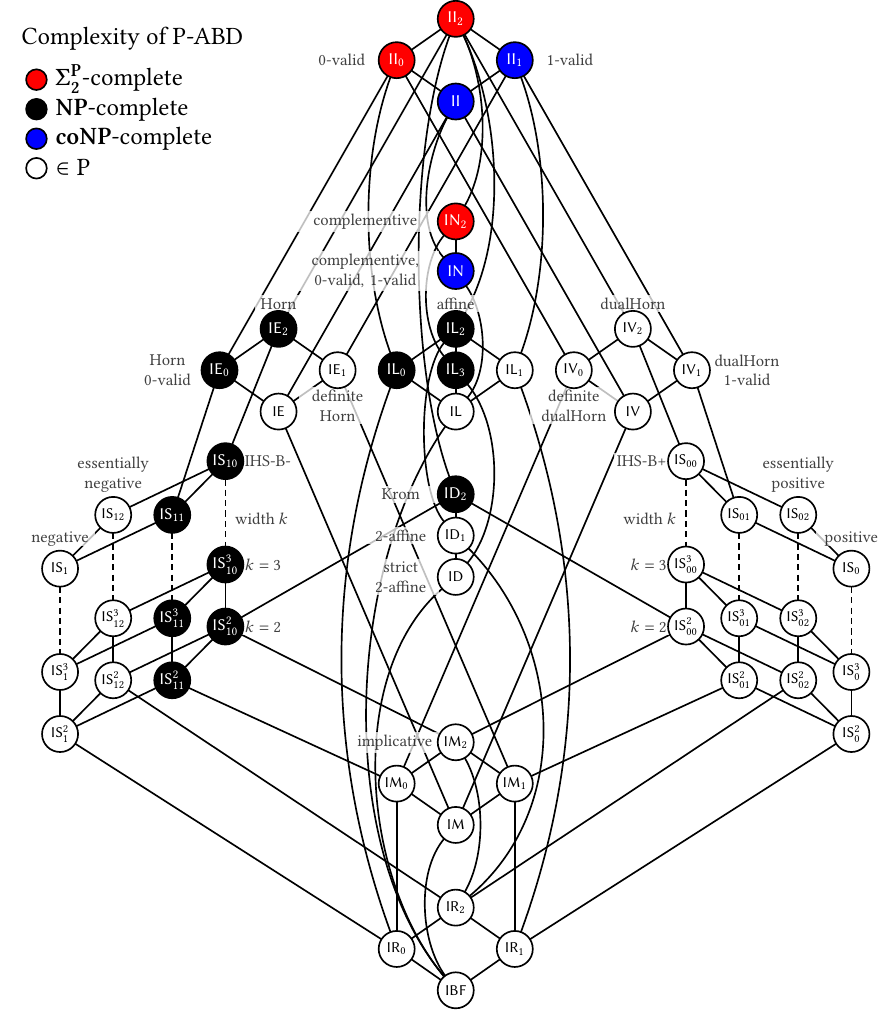}
    \caption{Classical complexity of P-ABD according to Nordh and Zanuttini (2008), illustrated on Post's lattice.}
    \label{fig:P-M-ABD}
\end{figure*}

\newpage

%% The file named.bst is a bibliography style file for BibTeX 0.99c
%\bibliographystyle{named}
\bibliographystyle{plain}

%\bibliography{ref}

\begin{thebibliography}{10}

\bibitem{AlbertiCGLMT05}
Marco Alberti, Federico Chesani, Marco Gavanelli, Evelina Lamma, Paola Mello,
  and Paolo Torroni.
\newblock Security protocols verification in abductive logic programming: {A}
  case study.
\newblock In {\em Engineering Societies in the Agents World VI, 6th Internat.
  Workshop, {ESAW'05}}, pages 106--124, 2005.

\bibitem{BrardaTG21}
Martin E.~Buron Brarda, Luciano~H. Tamargo, and Alejandro~Javier Garc{\'{\i}}a.
\newblock Using argumentation to obtain and explain results in a decision
  support system.
\newblock {\em {IEEE} Intell. Syst.}, 36(2):36--42, 2021.

\bibitem{calabro2013exact}
Chris Calabro, Russell Impagliazzo, and Ramamohan Paturi.
\newblock On the exact complexity of evaluating quantified k-cnf.
\newblock {\em Algorithmica}, 65(4):817--827, 2013.

\bibitem{CyganDLMNOPSW16}
M.~Cygan, H.~Dell, D.~Lokshtanov, D.~Marx, and J.~Nederlof et~al.
\newblock On problems as hard as {CNF-SAT}.
\newblock {\em {ACM} Trans. Algorithms}, 12(3):41:1--41:24, 2016.

\bibitem{eiter1995complexity}
Thomas Eiter and Georg Gottlob.
\newblock The complexity of logic-based abduction.
\newblock {\em Journal of the ACM (JACM)}, 42(1):3--42, 1995.

\bibitem{lance}
Lance Fortnow and Rahul Santhanam.
\newblock Infeasibility of instance compression and succinct pcps for {NP}.
\newblock {\em J. Comput. Syst. Sci.}, 77(1):91--106, 2011.

\bibitem{IgnatievNM19}
A.~Ignatiev, N.~Narodytska, and J.~Marques{-}Silva.
\newblock Abduction-based explanations for machine learning models.
\newblock In {\em Proc. 33rd {AAAI} Conf. on Artificial Intelligence
  {(AAAI'19)}}, pages 1511--1519, 2019.

\bibitem{DBLP:conf/ijcai/Ignatiev20}
Alexey Ignatiev.
\newblock Towards trustable explainable {AI}.
\newblock In Christian Bessiere, editor, {\em Proc. 29th Internat. Joint Conf.
  on Artificial Intelligence ({IJCAI}'20)}, pages 5154--5158. ijcai.org, 2020.

\bibitem{impagliazzo2001}
R.~Impagliazzo and R.~Paturi.
\newblock On the complexity of k-{SAT}.
\newblock {\em J. Comput. Syst. Sci}, 62(2):367 -- 375, 2001.

\bibitem{InoueSIKN09}
Katsumi Inoue, Taisuke Sato, Masakazu Ishihata, Yoshitaka Kameya, and Hidetomo
  Nabeshima.
\newblock Evaluating abductive hypotheses using an {EM} algorithm on bdds.
\newblock In {\em Proc. 21st Internat. Joint Conf. on Artificial Intelligence
  ({IJCAI'09})}, pages 810--815, 2009.

\bibitem{jonsson2017}
P.~Jonsson, V.~Lagerkvist, G.~Nordh, and B.~Zanuttini.
\newblock Strong partial clones and the time complexity of {SAT} problems.
\newblock {\em J. Comput. Syst. Sci.}, 84:52 -- 78, 2017.

\bibitem{lagerkvist2020f}
P.~Jonsson, V.~Lagerkvist, and B.~Roy.
\newblock Fine-grained time complexity of constraint satisfaction problems.
\newblock {\em ACM Trans. Comput. Theory.}, 13(1), 2021.

\bibitem{tcs2021}
Peter Jonsson, Victor Lagerkvist, Johannes Schmidt, and Hannes Uppman.
\newblock The exponential-time hypothesis and the relative complexity of
  optimization and logical reasoning problems.
\newblock {\em Theor. Comput. Sci.}, 892:1--24, 2021.

\bibitem{lagerkvist2022a}
V.~Lagerkvist and M.~Wahlstr\"{o}m.
\newblock The (coarse) fine-grained structure of {NP}-hard {SAT} and {CSP}
  problems.
\newblock {\em ACM Trans. Comput. Theory.}, 14(1), 2022.

\bibitem{Lau2006}
D.~Lau.
\newblock {\em Function Algebras on Finite Sets: Basic Course on Many-Valued
  Logic and Clone Theory (Springer Monographs in Mathematics)}.
\newblock Springer-Verlag New York, Inc., Secaucus, NJ, USA, 2006.

\bibitem{DBLP:journals/eatcs/LokshtanovMS11}
Daniel Lokshtanov, D{\'{a}}niel Marx, and Saket Saurabh.
\newblock Lower bounds based on the exponential time hypothesis.
\newblock {\em Bulletin of the {EATCS}}, 105:41--72, 2011.

\bibitem{NoZa2008}
Gustav Nordh and Bruno Zanuttini.
\newblock What makes propositional abduction tractable.
\newblock {\em Artif. Intell.}, 172(10):1245--1284, 2008.

\bibitem{ObeidOMO19}
Mariam Obeid, Zeinab Obeid, Asma Moubaiddin, and Nadim Obeid.
\newblock Using description logic and abox abduction to capture medical
  diagnosis.
\newblock In {\em Proc. 32nd Internat. Conf. on Industrial, Engineering and
  Other Applications of Applied Intelligent Systems, {IEA/AIE} 2019}, pages
  376--388, 2019.

\bibitem{PPSZ05JACM}
R.~Paturi, P.~Pudl{\'{a}}k, M.~Saks, and F.~Zane.
\newblock An improved exponential-time algorithm for \emph{k}-sat.
\newblock {\em J. ACM}, 52(3):337--364, 2005.

\bibitem{Post1941}
E.~L. Post.
\newblock The two-valued iterative systems of mathematical logic.
\newblock {\em Annals of Mathematical Studies}, 5:1--122, 1941.

\bibitem{RacharakT21}
Teeradaj Racharak and Satoshi Tojo.
\newblock On explanation of propositional logic-based argumentation system.
\newblock In {\em Proc. 13th Internat. Conf. on Agents and Artificial
  Intelligence ({ICAART}'21)}, pages 323--332, 2021.

\bibitem{RayAKD06}
Oliver Ray, Athos Antoniades, Antonis~C. Kakas, and Ioannis Demetriades.
\newblock Abductive logic programming in the clinical management of {HIV/AIDS}.
\newblock In {\em Proc. 17th European Conf. on Artificial Intelligence}, pages
  437--441, 2006.

\bibitem{SakamaI03}
Chiaki Sakama and Katsumi Inoue.
\newblock An abductive framework for computing knowledge base updates.
\newblock {\em Theory Pract. Log. Program.}, 3(6):671--713, 2003.

\bibitem{Schaefer1978}
Thomas~J. Schaefer.
\newblock The complexity of satisfiability problems.
\newblock In Richard~J. Lipton, Walter~A. Burkhard, Walter~J. Savitch, Emily~P.
  Friedman, and Alfred~V. Aho, editors, {\em Proceedings of the 10th Annual
  {ACM} Symposium on Theory of Computing, May 1-3, 1978, San Diego, California,
  {USA}}, pages 216--226. {ACM}, 1978.

\bibitem{Sipser97}
Michael Sipser.
\newblock {\em Introduction to the theory of computation}.
\newblock {PWS} Publishing Company, 1997.

\bibitem{SovranoVP20}
Francesco Sovrano, Fabio Vitali, and Monica Palmirani.
\newblock Modelling {GDPR}-compliant explanations for trustworthy {AI}.
\newblock In {\em Proc. 9th Internat. Conf. on Electronic Government and the
  Information Systems Perspective ({EGOVIS}'20)}, pages 219--233, 2020.

\bibitem{DBLP:journals/eatcs/ThomasV10}
Michael Thomas and Heribert Vollmer.
\newblock Complexity of non-monotonic logics.
\newblock {\em Bull. {EATCS}}, 102:53--82, 2010.

\bibitem{10.5555/545381.545421}
R.~Williams.
\newblock Algorithms for quantified boolean formulas.
\newblock In {\em Proc. 13th Annual ACM-SIAM Symp. on Discrete Algorithms
  ({SODA'02})}, page 299–307, 2002.

\end{thebibliography}

\end{document}